\DeclareMathAlphabet\mathbfcal{OMS}{cmsy}{b}{n}
\def\@IEEEsectpunct{:\ \,}
\def\paragraph{\@startsection{paragraph}{4}{\z@}{1.5ex plus 1.5ex minus 0.5ex}%
{0ex}{\normalfont\normalsize\itshape\bfseries}}
\DeclareMathOperator{\polylog}{polylog}
\begin{document}
%
% paper title
% can use linebreaks \\ within to get better formatting as desired
%\title{The Random Arrival Bottleneck of Coded Caching}
%\title{Stochastic Coded Caching Networks}
%\title{Fundamental Limits of Stochastic Coded Caching Networks}
%\title{Communication and Load Balancing over Stochastic Caching Networks}
%\title{Fundamental Limits of Stochastic Caching Networks}
\title{Fundamental Limits of Stochastic Shared Caches Networks}
%
%
% author names and IEEE memberships
% note positions of commas and nonbreaking spaces ( ~ ) LaTeX will not break
% a structure at a ~ so this keeps an author's name from being broken across
% two lines.
% use \thanks{} to gain access to the first footnote area
% a separate \thanks must be used for each paragraph as LaTeX2e's \thanks
% was not built to handle multiple paragraphs
%
\author{\IEEEauthorblockN{Adeel~Malik, Berksan~Serbetci, Emanuele~Parrinello, Petros~Elia}
\thanks{The authors are with the Communication Systems Department at EURECOM, Sophia Antipolis, 06410, France (email: malik@eurecom.fr, serbetci@eurecom.fr, parrinel@eurecom.fr, elia@eurecom.fr). The work is supported by the European Research Council under the EU Horizon 2020 research and innovation program / ERC grant agreement no. 725929 (project DUALITY). This paper
was presented in part at the 2020 IEEE Global Communications (Globecom) Conference.}}

% The paper headers

\newtheorem{axiom}{Axiom}
\newtheorem{lemma}{Lemma}
\newtheorem{corollary}{Corollary}
\newtheorem{theorem}{Theorem}
\newtheorem{prop}{Proposition}
\newtheorem{observation}{Observation}
\newtheorem{definition}{Definition}
\newtheorem{remark}{Remark}%[section]
\newtheoremstyle{case}{}{}{}{}{}{:}{ }{}
\theoremstyle{case}
\newtheorem{case}{Case}

\maketitle

\vspace*{-2cm}
\begin{abstract}
The work establishes the exact performance limits of stochastic coded caching when users share a bounded number of cache states, and when the association between users and caches, is random. Under the premise that more balanced user-to-cache associations perform better than unbalanced ones, our work provides a statistical analysis of the average performance of such networks, identifying in closed form, the exact optimal average delivery time. To insightfully capture this delay, we derive easy to compute closed-form analytical bounds that prove tight in the limit of a large number $\Lambda$ of cache states.  In the scenario where delivery involves $K$ users, we conclude that the multiplicative performance deterioration due to randomness --- as compared to the well-known deterministic uniform case --- can be unbounded and can scale as $\Theta\left( \frac{\log \Lambda}{\log \log \Lambda} \right)$ at $K=\Theta\left(\Lambda\right)$, and that this scaling vanishes when $K=\Omega\left(\Lambda\log \Lambda\right)$. To alleviate this adverse effect of cache-load imbalance, we consider various load balancing methods, and show that employing proximity-bounded load balancing with an ability to choose from $h$ neighboring caches, the aforementioned scaling reduces to~$\Theta \left(\frac{\log(\Lambda / h)}{ \log \log(\Lambda / h)} \right)$, while when the proximity constraint is removed, the scaling is of a much slower order $\Theta \left( \log \log \Lambda \right)$. The above analysis is extensively validated numerically.
\end{abstract}
\vspace*{-1cm}
\begin{IEEEkeywords}
 Coded caching, shared caches, load balancing, heterogeneous networks, femtocaching.
\end{IEEEkeywords}

\IEEEpeerreviewmaketitle

% For peer review papers, you can put extra information on the cover
% page as needed:
% \ifCLASSOPTIONpeerreview
% \begin{center} \bfseries EDICS Category: 3-BBND \end{center}
% \fi
%
% For peerreview papers, this IEEEtran command inserts a page break and
% creates the second title. It will be ignored for other modes.
\IEEEpeerreviewmaketitle

\setlength{\abovedisplayskip}{3pt}   %reduce the spacing between equation and the text line above
\setlength{\belowdisplayskip}{3pt}    %reduce the spacing between equation and the text line below
\setlength\abovecaptionskip{-0.1ex}
\setlength\belowcaptionskip{-5ex}
\setlength{\parskip}{-1pt}
\titlespacing*{\subsection}{0pt}{0.15\baselineskip}{0.15\baselineskip}
\titlespacing*{\section}{0pt}{0.2\baselineskip}{0.2\baselineskip}
\setlength{\skip\footins}{0.5cm}
\section{Introduction}\label{Sec:1}
\IEEEPARstart{E}{ver-increasing} volumes of mobile data traffic, have brought to the fore the need for new solutions that can serve a continuously increasing number of users, and do so with a limited amount of network bandwidth resources. In this context, cache-enabled wireless networks have emerged as a promising solution that can transform the storage capability of the nodes into a new and powerful network resource.

The potential of such cache-enabled wireless networks has been dramatically elevated following the seminal publication in~\cite{man_it14} which introduced the concept of \emph{coded caching}, and which revealed that --- in theory --- an unbounded number of users can be served even with a bounded amount of network resources. This was a consequence of a novel cache placement algorithm that enabled the delivery of independent content to many users at a time. Since then, several extensions of the basic coded caching setting have been studied. Such works include the study of coded caching for arbitrary file popularity distributions~\cite{man_it17, zhang_it18, ji_iswcs14}, various optimality results in \cite{yuTradeoff2TransIT2019,wanOptimalityTransIT2020,YuMA18}, results for various topology models~\cite{bidokhti_isit16, zhang_isit17, lampiris_isit17}, for MIMO broadcast channels~\cite{lampiris_jsac18,zhang_it17}, for PHY-based coded caching \cite{shariatpanahiMultiserverTransIT2016,tolli2017multi,lampiris_jsac18,shariatpanahi2018multi,ZFE:15,8007039,zhang_it17,lampiris_isit17,joudehMixedTrafficArXiv2020,cao_twc19}, for a variety of heterogeneous networks (HetNets)~\cite{hachem_it17,parrinello_it20}, D2D networks~\cite{ji_it16}, and other settings as well\cite{sengupta_it17,cao_twc17,roig_icc17,piovano_isit18,bayat_twc19,lampiris_isit18}.

\begin{figure}[t]
\centering
 \includegraphics[width=0.5\linewidth]{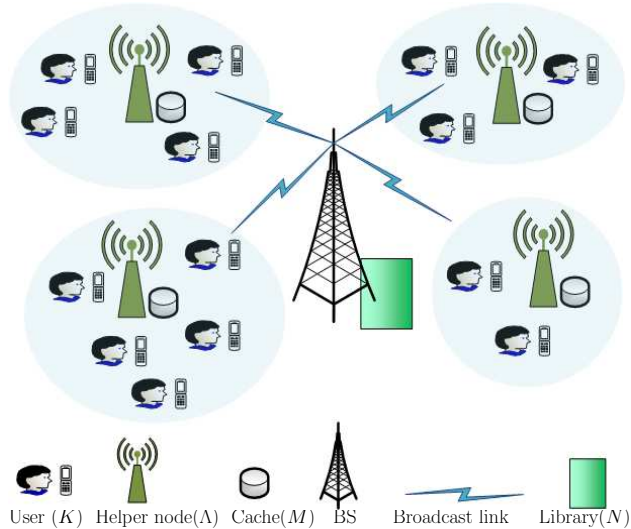}
\caption{An instance of a cache-aided heterogeneous network.}
\label{fig:SM}
\end{figure}

\subsection{Coded caching networks with shared caches}~\label{section:11}
Pivotal to the development of larger, realistic coded caching networks is the so-called \emph{shared-caches setting}, where different users are forced to benefit from the same cache content. This setting is of great importance because it reflects promising scenarios as well as unavoidable constraints. 

Such a promising scenario can be found in the context of cache-enabled heterogeneous networks, where a central transmitter (a base station) delivers content to a set of interfering users, with the assistance of cache-enabled helper nodes that serve as caches to the users. An instance of such a network is illustrated in Figure~\ref{fig:SM}. Such networks capture modern trends that envision a central base-station covering a larger area, in tandem with a multitude of smaller helper nodes each covering smaller cells. In this scenario, any user that appears in a particular small cell, can benefit from the cache-contents of the single helper node covering that cell. 

In the context of coded caching, an early work on this scenario can be found in~\cite{hachem_it17}, which employed the uniform user-to-cache association assumption where each helper node is associated to an equal number of users. This assumption was removed in~\cite{parrinello_it20}, which --- under the assumption that content cache placement is uncoded as well as agnostic to the user-to-cache association --- identified the exact optimal worst-case delivery time (i.e, the case of users' demand vector which requires the longest delivery time), as a function of the user-to-cache association profile that describes the number of users served by each cache.  A similar setting was studied in~\cite{karat_arxiv19} for the case of non-distinct requests, as well as in~\cite{wan_arxiv18, vu_wcnc17, parrinello_itw19} for the topology-aware (non-agnostic) scenario, where the user-to-cache association is known during cache placement. In this context, the work in~\cite{wan_arxiv18} proposed a novel coded placement that exploits knowledge of the user-to-cache association, while the work in~\cite{vu_wcnc17} used this same knowledge, to modulate cache-sizes across the different helper nodes as a function of how many users they serve. Similarly, the work in~\cite{parrinello_itw19} optimized over the cache sizes, again as a function of the load of each cache, and then proceeded to design a novel coded caching scheme which substantially outperforms the optimal scheme in  \cite{parrinello_it20}; the latter designed for the scenario where the cache placement is oblivious to the user-to-cache association phase. It is interesting to note that to a certain extent, this same shared-caches setting also applies to the scenario where each user requests multiple files (see for example~\cite{ji_gsip14, sengupta_tc17,wei_allerton17}).
 
Very importantly, this same shared-caches setting is directly related to the unavoidable subpacketization bottleneck because this bottleneck can force the use of a reduced number of distinct cache states that must be inevitably shared among the many users\footnote{When adopting the cache placement strategy in~\cite{man_it14} for $\Lambda$ caches, we refer the content to be placed in a single cache as a cache state.}. This number of distinct cache states, henceforth denoted as $\Lambda$, will be forced under most realistic assumptions, to be substantially less than the total number of users, simply because most known coded caching techniques require file sizes that scale exponentially with $\Lambda$ (see~\cite{shanmugam_it16,ji_icc15,yan_it17,tang_isit17,shangguan_it18,shanmugam_isit17,lampiris_jsac18}). In a hypothetical scenario where coded caching is applied across a city of, let's say, one million mobile users, one would have to assign each user with one of $\Lambda$ cache states ($\Lambda$ independent caches), where $\Lambda$ would probably be forced to be in the double or perhaps triple digits~\cite{shanmugam_it16}. 

As one can see, both of the above isomorphic settings imply that during the content delivery that follows the allocation of cache-states to each user, different broadcast sessions would experience user populations that differently span the spectrum of cache states. In the most fortunate of scenarios, a transmitter would have to deliver to a set of $K$ users that uniformly span the $\Lambda$ states (such that each cache state is found in exactly $K/\Lambda$ users), while in the most unfortunate of scenarios, a transmitter would encounter $K$ users that happen to have an identical cache state. Both cases are rare instances of a stochastic process, which we explore here in order to identify the exact optimal performance of such systems.

Most of our results apply both to the heterogeneous network scenario as well as the aforementioned related subpacketization-constrained setting which was nicely studied in~\cite{jin_toc_19}. This interesting work in~\cite{jin_toc_19} introduced the main problem, and after providing upper bounds\footnote{It is worth noting that the provided upper bounds in~\cite{jin_toc_19} can have a large gap from the here-derived optimal.}, introduced the challenge of identifying the fundamental limits of this same problem. This is indeed the challenge that is resolved here, where we are able to derive these fundamental limits of performance, in their exact form. The problem for which we are deriving these fundamental limits, is itself a crucial problem in the context of coded caching. Based on our analysis, and our new insight that indeed we can have an unbounded ``damage" from the stochastic nature of the user-to-cache association, we proceed to provide simple and realistic load balancing techniques. Our analysis captures the benefits of the techniques, in their exact form.

For ease of exposition, we will focus the wording of our description to the first scenario corresponding to a heterogeneous network where $\Lambda$ plays the role of the number of helper nodes. All the results though of Section~\ref{Sec:2} certainly apply to the latter setting as well.

\subsection{Load balancing in mobile networks}~\label{Sec:10}
As mentioned above, we will be focusing on heterogeneous networks and will explore the statistical properties brought about by the introduction of cache-aided helper nodes. We have also suggested that performance generally suffers when the different helper nodes are unevenly loaded. For this, it is only natural that we look at basic load balancing approaches, which have long played a pivotal role in improving the statistical behavior of wireless networks. This role was highlighted in the survey found in~\cite{Andrews2014}, which discussed why long-standing assumptions about cellular networks need to be rethought in the context of load balanced heterogeneous networks, and showed that natural user association metrics like signal-to-interference-plus-noise ratio (SINR) or received signal strength indication (RSSI) can lead to a major imbalance. This work gathered together the earlier works on load balancing in HetNets and compared the primary technical approaches -- such as optimization, game theory and Markov decision processes -- to HetNet load balancing.  In the same context, various algorithms have also been proposed to optimize the traffic load by analyzing user association to servers for cellular networks~\cite{Ye2013}, by providing a distributed $\alpha-$optimal user association and cell load balancing algorithm for wireless networks~\cite{Kim2012}, by developing SINR-based flexible cell association policies in heterogeneous networks~\cite{Jo2012}, and even investigating traffic load balancing in backhaul-constrained cache-enabled small cell networks powered by hybrid energy sources~\cite{Han2017}.

In this paper, we build a bridge between load balancing and coded caching, with the aim of improving the network performance by balancing the user load placed on each cache. We will show that the effect of load balancing can in fact be unbounded in the limit of many caches. 

\subsection{Shared-caches setting \& problem statement}~\label{Sec:11}
We consider the shared-caches coded-caching setting where a transmitter having access to a library of $N$ equisized files, delivers content via a broadcast link to $K$ receiving users, with the assistance of $\Lambda$ cache-enabled helper nodes. Each helper node $\lambda \in \left[1, 2, \dots,\Lambda \right]$ is equipped with a cache of storage capacity equal to the size of $M$ files, thus being able to store a fraction $\gamma \triangleq \frac{M}{N}\in \left[\frac{1}{\Lambda}, \frac{2}{\Lambda}, \dots,1 \right]$ of the library.
Each such helper node, which will be henceforth referred to as a `cache', can assist in the delivery of content to any number of receiving users.

The communication process consists of three phases; the \emph{content placement phase}, the \emph{user-to-cache association phase}, and the \emph{delivery phase}. The first phase involves the placement of library-content in the caches, and it is  oblivious to the outcome of the next two phases. The second phase is when each user is assigned -- independently of the placement phase -- to exactly one cache from which it can download content at zero cost. This second phase is also oblivious of the other two phases\footnote{This assumption is directly motivated by the time-scales of the problem, as well as by the fact that in the heterogeneous setting, the user-to-cache association is a function of the geographical location of the user. Note that users can only be associated to caches when users are within the coverage of caches, and a dynamic user-to-cache association that requires continuous communication between the users and the server may not be desirable as one seeks to minimize the network load overhead and avoid the handover.}. The final phase begins with users simultaneously requesting one file each, and continues with the transmitter delivering this content to the receivers. Naturally this phase is aware of the content of the caches, as well as aware of which cache assists each user.

\paragraph*{User-to-cache association}
For any cache $\lambda\in\ \left[1, \dots,\Lambda \right]$, we denote by $v_{\lambda}$ the number of users that are assisted by it, and we consider the \emph{cache population vector} $\mathbf{V}= \left[v_1,  \dots, v_{\Lambda} \right]$. Additionally we consider the sorted version $\mathbf{L}= \left[l_1,  \dots, l_{\Lambda} \right] = sort(\mathbf{V})$, where $sort(\mathbf{V})$ denotes the sorting of vector $\mathbf{V}$ in descending order. We refer to $\mathbf{L}$ as a \emph{profile vector}, and we note that each entry $l_{\lambda}$ is simply the number of users assisted by the $\lambda$-th most populous (most heavily loaded) cache. Figure~\ref{fig:SM} depicts an instance of our shared-caches setting where $\mathbf{L}= \left[5, 4, 3, 2\right]$.

\paragraph*{Delivery phase}
The delivery phase commences with each user $k \in \left[1, 2, \dots, K\right]$ requesting a single library file that is indexed by $d_k \in \left[1, 2, \dots, N\right]$. As is common in coded caching works, we assume that each user requests a different file. Once the transmitter is notified of the \emph{request vector} $\mathbf{d}=\left[d_1, d_2, \dots, d_K \right]$, it commences delivery over an error-free broadcast link of bounded capacity per unit of time.
%of one file per time slot.

\subsection{Metric of interest}\label{sec:metricofinterest}
As one can imagine, any given instance of the problem, experiences a different user-to-cache association, and thus\footnote{We briefly note that focusing on $\mathbf{V}$ rather than the sets of users connected to each cache, maintains all the pertinent information, as what matters for the analysis is the number of users connected to each cache and not the index (identity) of the users connected to that cache.} a different $\mathbf{V}$. Our measure of interest is thus the average delay
\begin{equation}
\overline{T}(\gamma) \triangleq E_{\mathbf{V}}[T(\mathbf{V})] = \sum_{\mathbf{V}}P(\mathbf{V})T(\mathbf{V}),
\end{equation}
where $T(\mathbf{V})$ is the worst-case delivery time\footnote{This delay corresponds to the time needed to complete the delivery of any file-request vector $\mathbf{d}$, where the time scale is normalized such that a unit of time corresponds to the optimal amount of time needed to send a single file from the transmitter to the receiver, had there been no caching and no interference.} corresponding to any specific cache population vector $\mathbf{V}$, and where $P(\mathbf{V})$ is the probability that the user-to-cache association corresponds to vector $\mathbf{V}$. 

More precisely, we use $T(\mathbf{V}, \mathbf{d}, \mathcal{X})$ to define the delivery time required by some generic caching-and-delivery scheme $\mathcal{X}$ to satisfy request vector $\mathbf{d}$ when the user-to-cache association is described by the vector $\mathbf{V}$. Our aim here is to characterize the optimal average delay
\begin{align}
\overline{T}^*(\gamma) & = \min_{\mathcal{X}} E_{\mathbf{V}}\left[ \max_{\mathbf{d}}T(\mathbf{V},\mathbf{d},\mathcal{X})\right]  = \min_{\mathcal{X}} E_{\mathbf{L}}\left[
                     E_{\mathbf{V}_{\mathbf{L}}}\left[  \max_{\mathbf{d}}T(\mathbf{\mathbf{V},\mathbf{d}},\mathcal{X})\right]\right],\label{eq:DefMetric} 
\end{align}
where the minimization is over all possible caching and delivery schemes $\mathcal{X}$, and where $E_{\mathbf{V_{\mathbf{L}}}}$ denotes the expectation over all vectors $\mathbf{V}$ whose sorted version is equal to some fixed  $sort(\mathbf{V})=\mathbf{L}$. Consequently the metric of interest takes the form
\begin{equation}
\overline{T}(\gamma) =  E_{\mathbf{L}}[T(\mathbf{L})] = \sum_{\mathbf{L}}P(\mathbf{L})T(\mathbf{L}),
\end{equation}
where $T(\mathbf{L}) \triangleq E_{\mathbf{V}_{\mathbf{L}}}\left[  \max_{\mathbf{d}}T(\mathbf{\mathbf{V},\mathbf{d}})\right]$, and where \[P(\mathbf{L})  \triangleq \!\!\!\! \!\!\sum_{\mathbf{V} :  sort(\mathbf{V})= \mathbf{L} } \!\!\!\!\! P(\mathbf{V}),\]  is simply the cumulative probability over all $\mathbf{V}$ for which $sort(\mathbf{V}) = \mathbf{L}$.

We will consider here the uncoded cache placement scheme in~\cite{man_it14}, and the delivery scheme in~\cite{jin_toc_19,parrinello_it20}, which will prove to be optimal for our setting under the common assumption of uncoded cache placement. This multi-round delivery scheme introduces --- for any $\mathbf{V}$ such that $sort(\mathbf{V}) = \mathbf{L} $ --- a worst-case delivery time of
\begin{align}\label{eq:TL}
 T(\mathbf{L})  =    \sum_{\lambda=1}^{\Lambda-t} l_{\lambda} \frac{{\Lambda-\lambda \choose t} }{{\Lambda \choose t}},
 \end{align}
\noindent where $t=\Lambda\gamma$.
%, \BS{and $l_\lambda$ is the number of users in the $\lambda$-th most populous cache.}

From equation \eqref{eq:TL} we can see that the minimum delay corresponds to the case when $\mathbf{L} $ is uniform. When $\Lambda$ divides $K$, this minimum (uniform) delay takes the well-known form
\begin{align}\label{eq:tbc0}
T_{min} = \frac{K(1-\gamma)}{1+\Lambda\gamma},
\end{align}
while for general $K,\Lambda$, it takes the form\footnote{When $K/\Lambda\notin\mathbb{Z}^+$, the best-case delay corresponds to having $l_{\lambda} = \left\lfloor K/\Lambda\right\rfloor +1$ for $\lambda \in \left[1, 2, \cdots, \hat{K}\right]$ and $l_{\lambda}= \left\lfloor K /\Lambda\right\rfloor$  for  $\lambda \in \left[\hat{K}\!+1, \hat{K}\!+2, \cdots, \Lambda\right]$, where $\hat{K}=K- \left\lfloor K/\Lambda\right\rfloor\Lambda$.}
\begin{align}\label{eq:tbc}
T_{min} = \frac{\Lambda-t}{1+t}  \left( \left\lfloor\frac{K}{\Lambda}\right\rfloor +1 - f(\hat{K})\right),
\end{align}
where $\hat{K}=K- \left\lfloor \frac{K}{\Lambda} \right\rfloor \Lambda$, $f(\hat{K})=1$ when $\hat{K}=0$, $f(\hat{K})=0$ when $\hat{K} \geq \Lambda-t$, and $f(\hat{K})= \frac{\prod_{i=t+1}^{\hat{K}+t} (\Lambda - i)}{ \prod_{j=0}^{\hat{K}-1} (\Lambda - j)}$ when $\hat{K} < \Lambda-t$.
The proof of this is straightforward, but for completeness it can also be found in Appendix~\ref{AP:tmin}. 
The above $T_{min}$ is optimal under the assumption of uncoded placement (cf. \cite{parrinello_it20}). 

On the other hand, for any other (now non-uniform) $\mathbf{L}$, the associated delay $T(\mathbf{L})$ will exceed $T_{min}$ (see \cite{parrinello_it20} for the proof, and see~Figure~\ref{fig:loss} for a few characteristic examples), and thus so will the average delay
\begin{figure}[t]
\centering
\includegraphics[ width=.99\linewidth]{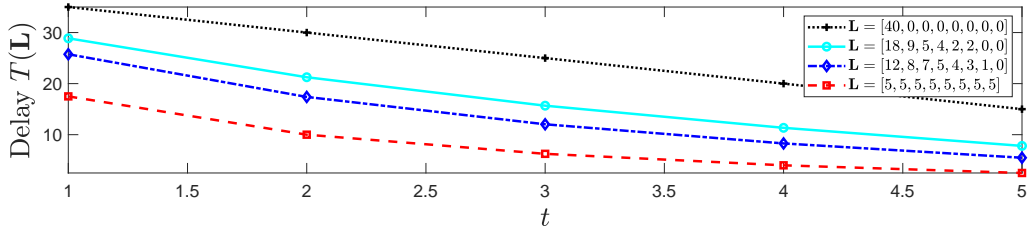}
\caption{Delay $T(\mathbf{L})$ for different profile vectors $\mathbf{L}$, for $K\!\!=\!\!40$ and $\Lambda\!=\!8$.}
\label{fig:loss}
\end{figure}
\begin{align}\label{eq:tavg}
E_{\mathbf{L}}[T(\mathbf{L})] &= \sum_{\mathbf{L} \in \mathcal{L} }P(\mathbf{L})  T(\mathbf{L}) =   \sum_{\lambda=1}^{\Lambda-t} \sum_{\mathbf{L} \in \mathcal{L} }P(\mathbf{L}) l_{\lambda}     \frac{ {\Lambda-\lambda \choose t} }{{\Lambda \choose t}}  =   \sum_{\lambda=1}^{\Lambda-t} E[ l_{\lambda} ]   \frac{ {\Lambda-\lambda \choose t} }{{\Lambda \choose t}},
 \end{align}
\noindent where $\mathcal{L}$ describes the set of all possible profile vectors $\mathbf{L}$ (where naturally $\sum_{\lambda=1}^{\Lambda}  l_{\lambda} = K$), and where $E[l_{\lambda} ]$ is the expected number of users in the $\lambda$-th most populous cache\footnote{It is straightforward to see that $\sum_{\mathbf{L} \in \mathcal{L}} l_{\lambda} P(\mathbf{L})$ is equivalent to $\sum_{j=0}^{K} j P(l_{\lambda}= j) = E[ l_{\lambda} ]$, where $P(l_{\lambda}= j)= \sum_{{\mathbf{L} \in \mathcal{L} : \mathbf{L}(\lambda)=j  }}  P(\mathbf{L})$. }.

\subsection{Our contribution}~\label{section:14}
%In this work we assume that each user can be associated to any particular cache (i.e., can appear in any particular cell) with equal probability.
In this work we assume that each user can appear in the coverage area of any particular cache-enabled cell (i.e., can be associated to any particular cache) with equal probability. We will identify the optimal average delay $\overline{T}^*(\gamma)$ and the corresponding (multiplicative) performance deterioration
\begin{align}\label{eq:G}
  G(\gamma)= \frac{\overline{T}^*(\gamma)}{T_{min}}
\end{align}
experienced in this random setting. Our aim is to additionally provide expressions that can either be evaluated in a numerically tractable way, or that can be asymptotically approximated in order to yield clear insight.  The following are our contributions, step by step.
\begin{itemize}
\item  In Section~\ref{Sec:21}, we characterize in closed form the exact optimal average delay $\overline{T}^*(\gamma)$, optimized over all placement and delivery schemes under the assumption of uncoded cache placement and under the assumption that each user can appear in the coverage area of any particular cache-enabled cell with equal probability.
%can be associated to any particular cache with equal probability.  
\item To simplify the numerical interpretation of the above expression, we propose in Section~\ref{Sec:22} analytical bounds that can be calculated efficiently.  
\item In Section~\ref{Sec:23}, we characterize the exact scaling laws of performance. It is interesting to see that the aforementioned multiplicative deterioration $G(\gamma)= \frac{\overline{T}^*(\gamma)}{T_{min}}$ can in fact be unbounded, as $\Lambda$ increases. For example, when $K=\Theta\left(\Lambda\right)$ (i.e., when $K$ matches the order of $\Lambda$), the performance deterioration scales exactly as $\Theta\left( \frac{\log\Lambda}{\log \log\Lambda} \right)$, whereas when $K$ increases, this deterioration gradually reduces, and ceases to scale when $K=\Omega\left(\Lambda\log\Lambda\right)$.
\item In Section~\ref{Sec:3}, we use two load balancing approaches to alleviate the effect of randomness.  In the practical scenario where we are given a choice to associate a user to the least loaded cache from a randomly chosen group of $h$ \emph{neighboring} helper nodes, the performance deterioration stops scaling as early as $K=\Omega\left(\frac{\Lambda}{h}\log\frac{\Lambda}{h}\right)$. An even more dramatic improvement can be seen when the aforementioned neighboring/proximity constraint is lifted.
The above reveals that load balancing, when applicable, can play a crucial role in significantly reducing the performance deterioration due to random user-to-cache association.
\item In Section~\ref{Sec:4}, we perform extensive numerical evaluations that validate our analysis.
\item In Section~\ref{Sec:NonUniform}, we extend our analysis to the scenario where cache population intensities (i.e, probability that a user can appear in the coverage area of any particular cache-enabled cell) are following a non-uniform distribution.
\end{itemize}
\subsection{Notations}~\label{section:16}
Throughout this paper, we use $[x]\triangleq\left[1, 2, \dots, x \right]$, and we use $\mathbf{A} / \mathbf{B}$ to denote the difference set that consists of all the elements of set $\mathbf{A}$ not in set $\mathbf{B}$. Unless otherwise stated, logarithms are assumed to have base 2. We use the following asymptotic notation: i) $f(x)= O(g(x))$ means that there exist constants $a$ and $c$ such that $f(x) \leq ag(x), \forall x > c$, ii) $f(x) = o(g(x))$ means that  $\lim_{x \rightarrow \infty  }\frac{f(x)}{g(x)} =0 $, iii) $f(x)=\Omega(g(x))$ if $g(x)= O(f(x))$, iv) $f(x) = \omega(g(x))$ means that  $\lim_{x \rightarrow \infty  }\frac{g(x)}{f(x)} =0 $, v)  $f(x)=\Theta(g(x))$ if $f(x)= O(g(x))$  and $f(x)=\Omega(g(x))$.  We use the term $\polylog(x)$ to denote the class of functions $\bigcup_{k\geq 1} O((\log x)^k)$ that are polynomial in $\log x$. 

\section{Main Results}\label{Sec:2}
In this section we present our main results on the performance of the $K$-user broadcast channel with $\Lambda$ caches, each of normalized size $\gamma$, and a uniformly random user-to-cache association process. As noted, the analysis applies both to the $\Lambda$-cell heterogeneous network, as well as to the isomorphic subpacketization-constrained setting.    

\subsection{Exact characterization of the optimal average delay}\label{Sec:21}
We proceed to characterize the exact optimal average delay $\overline{T}^*(\gamma)$. Crucial in this characterization will be the vector $\mathbf{B}_{\mathbf{L}}= \left[b_1, b_2, \dots, b_{\left|B_{\mathbf{L}}\right|} \right]$, where each element $b_j \in \mathbf{B}_{\mathbf{L}}$ indicates the number of caches in a distinct group of caches in which each cache has the same load\footnote{For example, for a  profile vector $\mathbf{L}=\left[5,5,3,3,3,2,1,0,0\right]$, there are five distinct groups in terms of having the same load, then the corresponding vector $\mathbf{B}_{\mathbf{L}}=\left[2,3,1,1,2 \right]$, because two caches have a similar load of five users, three caches have a similar load of three users, two caches have a similar load of zero and all other caches have distinct number of users.}. Under the assumption that each user can be associated to any particular cache with equal probability, the optimal average delay $\overline{T}^*(\gamma)$ --- optimized over all coded caching strategies with uncoded placement --- is given by the following theorem.  
\begin{theorem}\label{th:egap}
In the $K$-user, $\Lambda$-caches setting with normalized cache size $\gamma$ and a random user-to-cache association, the average delay 
\begin{align} \label{eq:etavg}
\overline{T}^*(\gamma)= \sum_{\lambda=1}^{\Lambda-t} \sum_{\mathbf{L} \in \mathcal{L}} \frac{K! \ t! \ (\Lambda-t)! \ l_{\lambda}  {\Lambda-\lambda \choose t}}{ \Lambda^{K}\prod _{i=1}^{\Lambda}l_i!\  \prod _{j=1}^{\left|\mathbf{B}_{\mathbf{L}}\right|}b_j! }
\end{align}
is exactly optimal under the assumption of uncoded placement.
\end{theorem}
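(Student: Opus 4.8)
The plan is to reduce the theorem to a single probabilistic computation --- the evaluation of the profile probability $P(\mathbf{L})$ --- and then substitute it into the already-derived expression for $E_{\mathbf{L}}[T(\mathbf{L})]$ in \eqref{eq:tavg}. First I would dispatch the optimality claim. By the converse of \cite{parrinello_it20}, for any caching-and-delivery scheme $\mathcal{X}$ with uncoded placement and any realized population vector $\mathbf{V}$, the worst-case delay obeys $\max_{\mathbf{d}} T(\mathbf{V},\mathbf{d},\mathcal{X}) \ge T(\mathbf{L})$ with $\mathbf{L}=sort(\mathbf{V})$ and $T(\mathbf{L})$ as in \eqref{eq:TL}. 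Since this holds pointwise in $\mathbf{V}$, taking expectation and then the minimum over $\mathcal{X}$ yields $\overline{T}^*(\gamma)\ge E_{\mathbf{L}}[T(\mathbf{L})]$. For the matching upper bound, the key observation is that the placement of \cite{man_it14} is oblivious to $\mathbf{V}$, and that the multi-round delivery of \cite{jin_toc_19,parrinello_it20} attains exactly $T(\mathbf{L})$ for \emph{every} $\mathbf{V}$ under one and the same placement; hence $E_{\mathbf{V}}[\max_{\mathbf{d}}T]=E_{\mathbf{L}}[T(\mathbf{L})]$ is achievable, and the two bounds coincide.

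It then remains to compute $P(\mathbf{L})$ explicitly. I would start from the fact that, under uniform i.i.d.\ user-to-cache association, each fixed population vector $\mathbf{V}=[v_1,\dots,v_\Lambda]$ with $\sum_\lambda v_\lambda = K$ arises with the multinomial probability $P(\mathbf{V})=\frac{K!}{\prod_{\lambda}v_\lambda!}\,\Lambda^{-K}$. Because every $\mathbf{V}$ with $sort(\mathbf{V})=\mathbf{L}$ is a rearrangement of the same multiset of loads, they all share the common value $P(\mathbf{V})=\frac{K!}{\prod_i l_i!}\,\Lambda^{-K}$, so that $P(\mathbf{L})=N_{\mathbf{L}}\cdot\frac{K!}{\prod_i l_i!}\,\Lambda^{-K}$, where $N_{\mathbf{L}}$ denotes the number of distinct orderings of the entries of $\mathbf{L}$.

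The only genuinely combinatorial step is to count $N_{\mathbf{L}}$ correctly, and this is precisely where the vector $\mathbf{B}_{\mathbf{L}}=[b_1,\dots,b_{|\mathbf{B}_{\mathbf{L}}|}]$ enters: grouping the $\Lambda$ cache positions by equal load partitions them into blocks of sizes $b_1,\dots,b_{|\mathbf{B}_{\mathbf{L}}|}$, and the number of distinguishable arrangements of such a multiset is $N_{\mathbf{L}}=\Lambda!/\prod_{j}b_j!$. Combining the two factors gives $P(\mathbf{L})=\frac{\Lambda!\,K!}{\Lambda^K\prod_i l_i!\,\prod_j b_j!}$. The closing step is purely algebraic: substituting this $P(\mathbf{L})$ into $E_{\mathbf{L}}[T(\mathbf{L})]=\sum_{\lambda=1}^{\Lambda-t}\sum_{\mathbf{L}}P(\mathbf{L})\,l_\lambda\,\binom{\Lambda-\lambda}{t}/\binom{\Lambda}{t}$ and expanding $\binom{\Lambda}{t}=\Lambda!/(t!(\Lambda-t)!)$ cancels the factor $\Lambda!$ and reproduces exactly the summand in \eqref{eq:etavg}.

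I expect the main obstacle to be conceptual rather than computational: carefully justifying the interchange implicit in the optimality statement --- that minimizing the \emph{average} worst-case delay over all uncoded-placement schemes reduces to averaging the per-instance optimal delay $T(\mathbf{L})$ --- and correctly handling the tie structure through $\mathbf{B}_{\mathbf{L}}$, including degenerate blocks such as several caches of load zero. The probability evaluation itself is a standard multinomial-with-ties argument and should present no difficulty once the permutation count $N_{\mathbf{L}}$ is pinned down.
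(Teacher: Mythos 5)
Your probability computation and the closing algebra are correct and match the paper's achievability argument exactly: the multinomial probability $\frac{K!}{\prod_i l_i!}\Lambda^{-K}$ common to all $\mathbf{V}$ with $sort(\mathbf{V})=\mathbf{L}$, the multiset-permutation count $N_{\mathbf{L}}=\Lambda!/\prod_j b_j!$, and the cancellation of $\Lambda!$ against $\binom{\Lambda}{t}$. The gap is in your converse. You invoke \cite{parrinello_it20} as if it gave the pointwise statement ``for every uncoded-placement scheme $\mathcal{X}$ and every realized $\mathbf{V}$, $\max_{\mathbf{d}}T(\mathbf{V},\mathbf{d},\mathcal{X})\ge T(sort(\mathbf{V}))$.'' That statement is false, and it is not what \cite{parrinello_it20} proves. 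Counterexample: take $\Lambda=4$, $\gamma=1/2$ (so $t=2$), $K=2$, and $\mathbf{V}=(1,1,0,0)$. The uncoded placement in which cache $1$ stores the first half $W_{n,1}$ of every file and cache $2$ stores the second half $W_{n,2}$ (caches $3,4$ arbitrary) respects the memory budget and admits, for any distinct demands, the single transmission $W_{d_1,2}\oplus W_{d_2,1}$, giving worst-case delay $1/2$, whereas
\begin{equation*}
T(\mathbf{L})=\frac{\binom{3}{2}+\binom{2}{2}}{\binom{4}{2}}=\frac{2}{3}>\frac{1}{2}.
\end{equation*}
A placement tailored to one particular $\mathbf{V}$ can thus beat $T(\mathbf{L})$ for that $\mathbf{V}$, at the price of doing worse for other associations with the same profile; the converse in \cite{parrinello_it20} is inherently an \emph{averaged} (or worst-case) statement over all $\mathbf{V}$ consistent with $\mathbf{L}$, valid because the placement is fixed before the association is revealed.

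Because no pointwise bound exists, ``take expectation of the per-$\mathbf{V}$ bound'' is not an available move, and the argument must be restructured the way the paper does it: starting from $\overline{T}^*(\gamma)=\min_{\mathcal{X}} E_{\mathbf{L}}\bigl[E_{\mathbf{V}_{\mathbf{L}}}[\max_{\mathbf{d}}T]\bigr]$, first lower-bound the inner term via $E_{\mathbf{V}_{\mathbf{L}}}[\max_{\mathbf{d}}(\cdot)]\ge \max_{\mathbf{d}}E_{\mathbf{V}_{\mathbf{L}}}[(\cdot)]\ge E_{\mathbf{d}\in\mathcal{D}_{wc}}E_{\mathbf{V}_{\mathbf{L}}}[(\cdot)]$, then swap the minimization and the outer expectation using $\min_{\mathcal{X}}E_{\mathbf{L}}[\cdot]\ge E_{\mathbf{L}}[\min_{\mathcal{X}}\cdot]$, and only then apply the converse of \cite{parrinello_it20} (their equation (53)) to the quantity $T^*(\mathbf{L})=\min_{\mathcal{X}}E_{\mathbf{d}\in\mathcal{D}_{wc}}E_{\mathbf{V}_{\mathbf{L}}}[T]$. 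That bound is applicable precisely because the expectation over $\mathbf{V}_{\mathbf{L}}$ is kept \emph{inside} the minimization and because $P(\mathbf{V})$ is constant over all $\mathbf{V}$ with $sort(\mathbf{V})=\mathbf{L}$ under uniform association. Your achievability direction and the evaluation of $P(\mathbf{L})$ can be kept verbatim; the optimality half needs to be replaced by this (or an equivalent) argument.
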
 
\begin{proof} The proof can be found in~Appendix~\ref{AP:egap}. 
\end{proof}
One can now easily see that when $\frac{K}{\Lambda}\in \mathbb{Z}^{+}$, the optimal multiplicative deterioration 
$G(\gamma) = \frac{\overline{T}^*(\gamma)}{T_{min}}$ takes the form
\begin{align} \label{eq:gap}
G(\gamma)= \sum_{\lambda=1}^{\Lambda-t} \sum_{\mathbf{L} \in \mathcal{L}} \frac{ (K-1)! \  (\Lambda-t-1)! \  (t+1)! \  l_{\lambda}  {\Lambda-\lambda \choose t}}{\Lambda^{K-1} \  \prod _{i=1}^{\Lambda}l_i! \  \prod _{j=1}^{\left|\mathbf{B}_{\mathbf{L}}\right|}b_j! }. \end{align}
% \begin{table}[t]
% \centering
% \begin{tabular}{|c|c|c|}
% \hline
%      & \multicolumn{1}{c|}{\begin{tabular}[c]{@{}c@{}}$\vert\mathcal{L}\vert$\end{tabular}} & \multicolumn{1}{c|}{\begin{tabular}[c]{@{}c@{}}$ \vert\mathcal{V}\vert$\end{tabular}} \\ \hline
% $K=10$ &          $42$      &   $92378$        \\ \hline
% $K=20$ &     $530$     &   $10015005$        \\ \hline
% $K=30$ &              $3590$   &   $211915132$    \\ \hline
% $K=40$ &        $16928$  &    $2.054455634\times10^9$ \\ \hline
% $K=50$ &    $62740$        &     $1.2565671261\times10^{10}$  \\ \hline
% \end{tabular}
% \caption{Size of $\mathcal{L}$ and $\mathcal{V}$ ($\Lambda = 10$)}
% \label{tab:LvsV}
% \end{table}

\begin{table}[t]
\centering
\begin{tabular}{|c|c|c|c|c|c|}
\hline
\multicolumn{1}{|l|}{} & $K=10$ & $K=20$ &$K=30$  & $K=40$ & $K=50$ \\ \hline
 $\vert\mathcal{V}\vert$                      &$92378$   & $10015005$  & $211915132$ & $2.054455634\times10^9$ & $1.2565671261\times10^{10}$ \\ \hline
   $ \vert\mathcal{L}\vert$                    &$42$  &$530$  & $3590$ &  $16928$  &  $62740$  \\ \hline
\end{tabular}
\caption{Size of $\mathcal{L}$ and $\mathcal{V}$ ($\Lambda = 10$)}
\label{tab:LvsV}
\end{table}

\begin{remark}~\label{R:1}
Theorem~\ref{th:egap} provides the exact optimal performance in the random
association setting, as well as a more efficient way to evaluate this
performance compared to the state of the art (SoA) (cf.~\cite[Theorem 1]{jin_toc_19}). The worst-case computational time complexity for calculating the exact optimal average delay $\overline{T}^*(\gamma)$ is $O\left(\max(K, \left|\mathcal{L}\right|\Lambda ) \right)$ as compared to the $O\left(\max(K, \left|\mathcal{V}\right|\Lambda \log \Lambda) \right)$ for the case of~\cite[Theorem 1]{jin_toc_19}. This speedup is due to the averaging being over the much smaller set $\mathcal{L}$ of all $\mathbf{L}$, rather than over the set $\mathcal{V}$ of all $\mathbf{V}$ (see Table~\ref{tab:LvsV} for a brief comparison). The time complexities mentioned above do not include the cost of creating the sets $\mathcal{L}$ and $\mathcal{V}$. However, we note that the creation of $\mathcal{V}$ is a so-called weak composition problem, whereas the creation of $\mathcal{L}$ is an integer partition problem~\cite{partsize}. It is easy to verify that the complexities of the algorithms for the integer partition problem are significantly lower than the ones for the weak composition problem~\cite{stojmenovic_ijcm98,vajnovszki_arxiv13,merca_jmma12,kelleher_arxiv09}.
\end{remark}

Despite the aforementioned speedup, exact evaluation of~\eqref{eq:etavg} can still be computationally expensive for large parameters. This motivates our derivation of much-faster to evaluate analytical bounds on $\overline{T}^*(\gamma)$, which we provide next.
\subsection{Computationally efficient bounds on the optimal performance}\label{Sec:22}
The following theorem bounds the optimal average delay $\overline{T}^*(\gamma)$.
\begin{theorem}\label{th:UBLB}
In the $K$-user, $\Lambda$-cache setting with normalized cache size $\gamma$ and a random user-to-cache association, the optimal average delay $\overline{T}^*(\gamma)$ is bounded by 
%\begin{align}\label{eq:UBtavg}
%\overline{T}^*(\gamma) \leq K\frac{ \Lambda-t }{ t+1} - \sum_{\lambda=1}^{\Lambda-t} \frac{ {\Lambda-\lambda \choose t} }{{\Lambda \choose t}}  \sum_{j=0}^{K-1}  \max\left(1-\frac{\Lambda}{\lambda}(1-P_j), 0\right) 
%\end{align}
\begin{align}\label{eq:UBtavg}
\overline{T}^*(\gamma) \leq K\frac{ \Lambda-t }{ t+1} - \sum_{\lambda=1}^{\Lambda-t} \frac{ {\Lambda-\lambda \choose t} }{{\Lambda \choose t}} \sum_{j=0}^{K-1}  \max\left(1-\frac{\Lambda}{\lambda}(1-P_j), 0\right),
\end{align}
and 
%\begin{align} \label{eq:LBtavg}
%\overline{T}^*(\gamma)\geq \frac{\Lambda-t}{1+t} \left( \frac{K}{\Lambda}  \frac{  (\Lambda-t-1)}{(\Lambda-1)}+\frac{ t  }{ (\Lambda-1)}\left(K- \sum_{j=\left \lceil \frac{K}{\Lambda} \right\rceil}^{K-1}  P_j \right) \right)
% \end{align}
 \begin{align} \label{eq:LBtavg}
\overline{T}^*(\gamma)\geq \frac{\Lambda-t}{1+t}\left(\frac{K}{\Lambda}  \frac{\Lambda-t-1}{\Lambda-1}+ \frac{t}{\Lambda-1}\left(K- \sum_{j=\left \lceil \frac{K}{\Lambda} \right\rceil}^{K-1} P_j \right)  \right),
 \end{align}
\noindent where 
\begin{align}\label{eq:cdft}
P_j =\sum_{i=0}^{j} {K \choose i} \left(\frac{1}{\Lambda}\right)^i \left(1-\frac{1}{\Lambda}\right)^{K-i}. 
 \end{align}
\end{theorem}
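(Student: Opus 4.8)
The plan is to start from the exact characterization already available in \eqref{eq:tavg}, namely $\overline{T}^*(\gamma)=\sum_{\lambda=1}^{\Lambda-t}E[l_\lambda]\,w_\lambda$ with the shorthand $w_\lambda\triangleq\binom{\Lambda-\lambda}{t}/\binom{\Lambda}{t}$, and to bound the order-statistic expectations $E[l_\lambda]$ in the two directions. A computation I will use repeatedly is the hockey-stick identity $\sum_{\lambda=1}^{\Lambda-t}\binom{\Lambda-\lambda}{t}=\binom{\Lambda}{t+1}$, which gives $\sum_{\lambda=1}^{\Lambda-t}w_\lambda=\frac{\Lambda-t}{t+1}$. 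Throughout I exploit that, under uniform association, each individual cache load is Binomial$(K,1/\Lambda)$, so that the $P_j$ in \eqref{eq:cdft} is exactly $P(v_\lambda\le j)$ for one cache and $E[v_\lambda]=K/\Lambda$.

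For the upper bound I would write $E[l_\lambda]=\sum_{j=0}^{K-1}P(l_\lambda>j)$ and note that $\{l_\lambda>j\}$ is the event that at least $\lambda$ caches carry load strictly larger than $j$. Letting $N_j$ be the number of such caches, linearity gives $E[N_j]=\Lambda(1-P_j)$, and Markov's inequality yields $P(l_\lambda>j)=P(N_j\ge\lambda)\le\frac{\Lambda}{\lambda}(1-P_j)$. Combining with the trivial bound $P(l_\lambda>j)\le1$ gives $P(l_\lambda>j)\le1-\max\!\big(1-\tfrac{\Lambda}{\lambda}(1-P_j),\,0\big)$; summing over $j$ produces $E[l_\lambda]\le K-\sum_{j=0}^{K-1}\max(\cdots)$, and weighting by $w_\lambda$ together with $\sum_\lambda w_\lambda=\frac{\Lambda-t}{t+1}$ and $\sum_{j=0}^{K-1}1=K$ gives \eqref{eq:UBtavg} directly.

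For the lower bound I would first recast the weighted sum as an expected maximum. A short counting argument shows that for a uniformly random $(t+1)$-subset $\mathcal S$ of the caches, $E_{\mathcal S}[\max_{\lambda\in\mathcal S}l_\lambda]=\frac{t+1}{\Lambda-t}\sum_\lambda l_\lambda w_\lambda$, so that by exchangeability of the multinomial load vector $\overline{T}^*(\gamma)=\frac{\Lambda-t}{t+1}\,E[\max(v_1,\dots,v_{t+1})]$. Writing $E[\max(v_1,\dots,v_{t+1})]=K-\sum_{j=0}^{K-1}Q_j$ with $Q_j=P(v_1\le j,\dots,v_{t+1}\le j)$ and subtracting the single-cache identity $K/\Lambda=K-\sum_{j}P_j$ reduces the claim to $\sum_{j=0}^{K-1}(P_j-Q_j)\ge\frac{t}{\Lambda-1}\sum_{j=0}^{\lceil K/\Lambda\rceil-1}P_j$, after rewriting the stated bracket (using $\sum_{j=0}^{K-1}P_j=K(\Lambda-1)/\Lambda$ and $\tfrac{\Lambda-t-1}{\Lambda-1}+\tfrac{t}{\Lambda-1}=1$) into the equivalent form $\frac{K}{\Lambda}+\frac{t}{\Lambda-1}\sum_{j=0}^{\lceil K/\Lambda\rceil-1}P_j$. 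Here $P_j-Q_j=P\big(v_1\le j,\ \bigcup_{i=2}^{t+1}\{v_i>j\}\big)$.

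The crux, and the step I expect to be the main obstacle, is extracting the factor $\frac{t}{\Lambda-1}$ from this probability. I would condition on $v_1$ and use that, given $v_1$, the events $\{v_i>j\}$ for $i=2,\dots,\Lambda$ are exchangeable; the function $s\mapsto P\big(\bigcup_{i=2}^{s+1}\{v_i>j\}\,\big|\,v_1\big)$ has non-increasing increments (each increment intersects with one more complement-constraint, which only lowers it), hence is concave with value $0$ at $s=0$, so $P\big(\bigcup_{i=2}^{t+1}\{v_i>j\}\,\big|\,v_1\big)\ge\frac{t}{\Lambda-1}P\big(\bigcup_{i=2}^{\Lambda}\{v_i>j\}\,\big|\,v_1\big)$. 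Averaging over $v_1\le j$ gives $P_j-Q_j\ge\frac{t}{\Lambda-1}P(v_1\le j,\ \max_{i\ge2}v_i>j)$. A pigeonhole observation then closes the argument: for every $j<K/\Lambda$ one cannot have all $\Lambda$ loads at most $j$, so conditioned on $v_1\le j$ the event $\max_{i\ge2}v_i>j$ is certain and $P(v_1\le j,\ \max_{i\ge2}v_i>j)=P_j$; summing only over $j\in\{0,\dots,\lceil K/\Lambda\rceil-1\}$ and discarding the remaining non-negative terms yields the required inequality and hence \eqref{eq:LBtavg}.
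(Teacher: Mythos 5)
Your proposal is correct, and the lower-bound half takes a genuinely different route from the paper. For the upper bound you and the paper coincide: the paper starts from $E[l_\lambda]=K-\sum_{j=0}^{K-1}P[l_\lambda\le j]$ and invokes the order-statistics bound $P[l_\lambda\le j]\ge\max\bigl(1-\tfrac{\Lambda}{\lambda}(1-P_j),0\bigr)$ from \cite{caraux_92}, then sums against the weights via the Pascal column-sum identity exactly as you do; your Markov-inequality argument on the count $N_j$ of caches with load exceeding $j$ is precisely a self-contained derivation of that cited proposition, so this direction is essentially identical. For the lower bound the paper stays with the sorted-expectation decomposition: since $E[l_\lambda]$ and $\binom{\Lambda-\lambda}{t}$ are both non-increasing in $\lambda$, a Chebyshev/rearrangement step shows the weighted sum can only decrease if $E[l_2],\dots,E[l_\Lambda]$ are replaced by their average $B=\frac{K-E[l_1]}{\Lambda-1}$, which reduces everything to the single quantity $E[l_1]$; this is then bounded via $l_1\ge\lceil K/\Lambda\rceil$ together with the companion bound $P[l_1\le j]\le P_j$ from \cite{caraux_92}, yielding \eqref{eq:LBtavg}. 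You instead prove the exact identity $\overline{T}^*(\gamma)=\frac{\Lambda-t}{t+1}E[\max(v_1,\dots,v_{t+1})]$ through the random-$(t+1)$-subset interpretation and exchangeability of the multinomial loads, correctly rewrite the bracket of \eqref{eq:LBtavg} as $\frac{K}{\Lambda}+\frac{t}{\Lambda-1}\sum_{j=0}^{\lceil K/\Lambda\rceil-1}P_j$, reduce the claim to $\sum_{j=0}^{K-1}(P_j-Q_j)\ge\frac{t}{\Lambda-1}\sum_{j=0}^{\lceil K/\Lambda\rceil-1}P_j$, and close it with two valid observations: the map $s\mapsto P\bigl(\bigcup_{i=2}^{s+1}\{v_i>j\}\mid v_1\bigr)$ has non-increasing increments by conditional exchangeability (relabeling shows each successive increment carries one extra constraint $\{v\le j\}$), hence its value at $s=t$ is at least $\frac{t}{\Lambda-1}$ times its value at $s=\Lambda-1$; and for $j<K/\Lambda$ pigeonhole forces some cache load above $j$, so the conditional union probability collapses to $P_j$ on the range of $j$ that matters. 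What your route buys is structural insight and self-containedness: it exposes the delay as a rescaled expected maximum over $t+1$ exchangeable loads, derives the factor $\frac{t}{\Lambda-1}$ from concavity of union probabilities, and needs no external order-statistics results. What the paper's route buys is economy and reuse: its intermediate reduction to $E[l_1]$ (equation \eqref{eq:TLB}) is recycled verbatim in the asymptotic analysis (the proof of Theorem~\ref{th:gap} and its load-balancing variants), which is why the paper organizes both directions of Theorem~\ref{th:UBLB} around bounds on $E[l_\lambda]$ rather than around expected maxima.
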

\begin{proof} 
The proof is deferred to Appendix~\ref{AP:UBLB}. 
% Crucial to this proof is the exploitation of the fact that $\sum_{\lambda=1}^{\Lambda} E[ l_{\lambda} ]= K$, of the fact that both $E[ l_{\lambda}]$ and ${\Lambda-\lambda \choose t}$ in \eqref{eq:tavg} are non-increasing with $\lambda$, and of the fact that $E[l_{1}]$ was evaluated in closed form. 
\end{proof}
\begin{figure}[t]
\centering
\includegraphics[ width=1\linewidth]{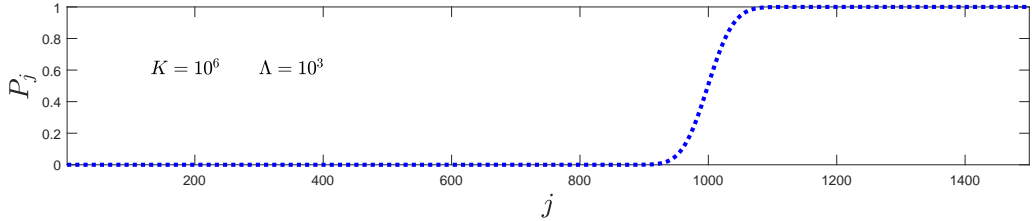} 
\caption{Behavior of $P_j$ for $K = 10^6$ and $\Lambda = 10^3$.}
\label{fig:cdf}
\end{figure}
\begin{remark}~\label{R:2}
The worst-case computational time complexity for calculating the analytical bounds on the optimal performance $\overline{T}^*(\gamma)$ based on Theorem \ref{th:UBLB} is $O\left( \max(K\log K,  K\Lambda) \right)$. This is significantly better compared to the the complexity of $O\left(\max(K, \left|\mathcal{L}\right|\Lambda )\right)$ for the exact calculation (cf. Theorem~\ref{th:egap}) The above bound is computationally efficient due to its dependence only on the $P_j$ (cf.~\eqref{eq:cdft}), which is the cumulative distribution function (cdf) of a random variable that follows the binomial distribution with $K$ independent trials and $\frac{1}{\Lambda}$ success probability. To compute bounds, the value of $P_j$ needs to be calculated for all values of $j\in[0,K-1]$, which can be computationally expensive (i.e., $O\left( K\log K\right)$). However, as is known, there exists a $\tilde{j} \in [0,K-1]$, where $P_j \approx 1$. Since the cdf is a non-decreasing function in $j$, it is clear\footnote{The well-known De Moivre-Laplace Theorem can help us gain some intuition as to why the above method is computationally efficient and precise. In our case here, our binomial distribution --- which according to the aforementioned theorem can be approximated by the normal distribution in the limit of large $K$ --- has mean $K/\Lambda$ and standard deviation $\sqrt{K(\Lambda-1)/\Lambda^2}$. This simply means that the values within three standard deviations of the mean account for about $99.7\% \approx 100\%$ of the set. This in turn means that $P_{\tilde{j}} \approx 1$ as early on as $\tilde{j} = K/\Lambda + 3\sqrt{K(\Lambda-1)/\Lambda^2} << K$. Since $P_j \approx 1$ for $j \geq \tilde{j}$, implies that \eqref{eq:cdft} can be rapidly evaluated with high precision.} that $P_j \approx 1$ for $j>\tilde{j}$. An illustration for $K = 10^6$, and $\Lambda = 10^3$ is shown in Figure~\ref{fig:cdf}, where it is evident that $\tilde{j} << K$. 
\end{remark}

Directly from Theorem~\ref{th:UBLB} and equation~\eqref{eq:tbc0}, we can conclude that for $\frac{K}{\Lambda}\in \mathbb{Z}^{+}$, the performance deterioration $G(\gamma)$ as compared to the deterministic uniform case, is bounded as
\begin{align}\label{eq:UB}
G(\gamma) &\leq \Lambda - \frac{t+1}{K-K\gamma}\sum_{\lambda=1}^{\Lambda-t} \frac{ {\Lambda-\lambda \choose t} }{{\Lambda \choose t}}\sum_{j=0}^{K-1}  \max\left(1-\frac{\Lambda}{\lambda}(1-P_j), 0\right),
\end{align}
and 
 \begin{align} \label{eq:LB}
G(\gamma) \geq  \frac{\Lambda-t-1}{\Lambda-1}+ \frac{\Lambda}{K}\frac{t}{\Lambda-1}\left(K-\sum_{j=\left \lceil \frac{K}{\Lambda} \right\rceil}^{K-1}P_j\right),
 \end{align}
\noindent where $P_j$ is given in Theorem~\ref{th:UBLB}.
 
We now proceed to provide the exact scaling laws of the fundamental limits of the performance in a simple and insightful form.
\subsection{Scaling laws of coded caching with random association}\label{Sec:23}
The following theorem provides the asymptotic analysis of the optimal $\overline{T}^*(\gamma)$, in the limit of large $\Lambda$.
\begin{theorem}\label{th:gap}
In the $K$-user, $\Lambda$-caches setting with normalized cache size $\gamma$ and random user-to-cache association, the optimal delay scales as 
\begin{align} \label{eq:lawtavg}
\!\!\overline{T}^*(\gamma) \!= \!
\begin{cases} 
\!\Theta\! \left(\frac{T_{min}\Lambda\log\Lambda}{K\log\frac{\Lambda \log\Lambda }{K}} \right) \text{\emph{if}  } K \in \left[\frac{\Lambda}{\polylog(\Lambda)}, o\left(\!\Lambda \!\log\Lambda\! \right)\right]\\
\! \Theta\left(T_{min}\right) \hspace{0.85cm} \text{ \emph{if} }  K=\Omega\left(\Lambda \log\Lambda \right). 
\end{cases}
\end{align} 
\end{theorem}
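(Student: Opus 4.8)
The plan is to reduce the average delay to the expected maximum load of a balls-into-bins experiment and then apply sharp order-statistics estimates. Writing $w_\lambda \triangleq \binom{\Lambda-\lambda}{t}/\binom{\Lambda}{t}$, the starting point is the optimal average delay $\overline{T}^*(\gamma)=\sum_{\lambda=1}^{\Lambda-t}E[l_\lambda]\,w_\lambda$ of~\eqref{eq:tavg}, where the $l_\lambda$ are the sorted bin loads obtained by throwing $K$ balls (users) uniformly into $\Lambda$ bins (caches). The hockey-stick identity gives $\sum_{\lambda=1}^{\Lambda-t}w_\lambda=\frac{\Lambda-t}{t+1}=\frac{\Lambda}{K}T_{min}$, while $w_1=1-\gamma$. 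Since $l_1\ge l_\lambda$ for every $\lambda$, I would sandwich the delay as
\[(1-\gamma)\,E[l_1]\ \le\ \overline{T}^*(\gamma)\ \le\ \tfrac{\Lambda}{K}\,T_{min}\,E[l_1].\]
For a constant $\gamma$ the two prefactors differ only by the $\Theta(1/\gamma)$ factor $\Lambda/(t+1)$, so $\overline{T}^*(\gamma)=\Theta\!\big(\tfrac{\Lambda}{K}T_{min}\,E[l_1]\big)$ and the whole problem collapses to estimating the expected maximum load $E[l_1]$; this is consistent with the fact that the weights decay as $w_\lambda\approx e^{-\gamma\lambda}$, so only the $O(1/\gamma)$ heaviest caches — all of the same order as the maximum — carry appreciable weight.

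It then remains to pin down $E[l_1]$ in the two regimes, for which I write $\mu\triangleq K/\Lambda$. When $K=\Omega(\Lambda\log\Lambda)$, i.e. $\mu=\Omega(\log\Lambda)$, the trivial bound $E[l_1]\ge\mu$ together with a Chernoff-plus-union-bound estimate $E[l_1]\le\mu+O(\sqrt{\mu\log\Lambda})=\Theta(\mu)$ (valid because $\sqrt{\log\Lambda/\mu}=O(1)$ here) gives $E[l_1]=\Theta(\mu)$, hence $\overline{T}^*(\gamma)=\Theta(T_{min})$. When $K\in[\Lambda/\polylog(\Lambda),\,o(\Lambda\log\Lambda)]$, so that $\mu=o(\log\Lambda)$ and $\log\Lambda/\mu\to\infty$, I would locate the load level $\ell$ at which the expected number of bins with load at least $\ell$ is $\Theta(1)$, i.e. solve $\Lambda\cdot P(\mathrm{Bin}(K,1/\Lambda)\ge\ell)=\Theta(1)$. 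Since $\ell\gg\mu$ throughout this window, the Poisson tail is dominated by its first term and Stirling reduces the balance to $\ell\log\frac{\ell}{e\mu}\approx\log\Lambda$, whose asymptotic inverse is $\ell=\Theta\!\big(\frac{\log\Lambda}{\log(\log\Lambda/\mu)}\big)=\Theta\!\big(\frac{\log\Lambda}{\log(\Lambda\log\Lambda/K)}\big)$. A first-moment/union bound at this $\ell$ gives $E[l_1]=O(\ell)$, and a second-moment (or Poissonization) argument showing that a bin of load $\ell$ exists with probability bounded away from zero gives $E[l_1]=\Omega(\ell)$. Substituting through the reduction yields the claimed $\Theta\!\big(\frac{T_{min}\Lambda\log\Lambda}{K\log(\Lambda\log\Lambda/K)}\big)$.

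I expect the moderate regime to be the main obstacle, on two counts. First, the transcendental balance $\ell\log(\ell/e\mu)\approx\log\Lambda$ must be inverted uniformly over the entire window $\mu\in[1/\polylog(\Lambda),\,o(\log\Lambda)]$, and one has to verify that neither the subtracted mean $\mu$ nor the lower-order Stirling terms disturb the leading $\frac{\log\Lambda}{\log(\Lambda\log\Lambda/K)}$ order at either edge. Second, the lower bound requires establishing the existence of a heavily loaded bin despite the mild negative dependence among the loads; a direct second-moment computation on the count of bins of load at least $\ell$, or a standard Poissonization followed by de-Poissonization, should handle this. A final point to check is that the crude sandwich of the first step loses no order, which holds precisely because $\gamma$ is held constant, so that $w_1$ and $\sum_\lambda w_\lambda$ differ only by a $\Theta(1/\gamma)$ factor. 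Note that the lower bound of Theorem~\ref{th:UBLB} cannot supply the scaling lower bound here, since it only yields $G(\gamma)=\Omega(1)$; the growing order must come from the max-load argument above.
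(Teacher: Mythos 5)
Your proposal is correct, and its first stage is essentially identical to the paper's: the paper also sandwiches $\overline{T}^*(\gamma)=\sum_{\lambda}E[l_\lambda]\binom{\Lambda-\lambda}{t}/\binom{\Lambda}{t}$ between a multiple of $E[l_1]$ from below (via the monotonicity of $E[l_\lambda]$ and of the weights, cf.\ equations \eqref{eq:TUB}--\eqref{eq:gapt}) and $E[l_1]\frac{\Lambda-t}{1+t}$ from above, concluding $\overline{T}^*(\gamma)=\Theta\left(T_{min}\frac{\Lambda E[l_1]}{K}\right)$ for constant $\gamma$; your version of the lower bound (keeping only the $\lambda=1$ term, worth $(1-\gamma)E[l_1]$) is marginally cleaner than the paper's, which reuses the intermediate inequality \eqref{eq:TLB} from the proof of Theorem~\ref{th:UBLB}, but the two coincide up to constants. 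Where you genuinely diverge is the second stage: the paper does not re-derive the maximum-load asymptotics at all, but instead imports them wholesale from the Raab--Steger \emph{balls into bins} result (\cite[Theorem 1]{martin_98}, restated as Lemma~\ref{le:ballbin}), which supplies high-probability thresholds $k_\beta$ in both regimes, and then converts these tail bounds into $E[l_1]=\Theta(k_\beta)$ by splitting the sum $E[l_1]=\sum_j P[l_1>j]$ at $k_\beta$ --- exactly the tail-to-expectation conversion you sketch. Your plan replaces the citation by a self-contained first-moment/union-bound upper estimate and a second-moment (or Poissonization) lower estimate, i.e., you propose to reprove the key lemma rather than invoke it. This buys self-containedness and makes the source of the $\frac{\log\Lambda}{\log(\Lambda\log\Lambda/K)}$ threshold transparent, but it also inherits the burdens you correctly flag: uniformity of the inversion of $\ell\log\frac{\ell}{e\mu}\approx\log\Lambda$ over the whole window $K\in[\Lambda/\polylog(\Lambda),o(\Lambda\log\Lambda)]$, and control of the negative dependence among the multinomial loads --- precisely the technical content of the cited theorem, whose statement is tailored to this exact window. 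Your closing observation that the lower bound of Theorem~\ref{th:UBLB} itself cannot yield the growing order, and that one must work with $E[l_1]$ directly, matches what the paper implicitly does by reverting to \eqref{eq:TLB} rather than \eqref{eq:LBtavg}.
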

\begin{proof} 
Deferred to Appendix~\ref{AP:gap}.
\end{proof} 
Directly from the above, we now know that the performance deterioration due to user-to-cache association randomness, scales as
\begin{align} \label{eq:egap}
\!\!\!G(\gamma)\!= \!
\begin{cases} 
\!\Theta\! \left(\!\frac{\Lambda\log\Lambda}{K\log\frac{\Lambda \log\Lambda }{K}} \!\!\right)\! \text{ if }  K \in  \left[ \frac{\Lambda}{\polylog(\Lambda)},o\left(\!\Lambda \!\log\Lambda\! \right)\right]\\
\! \Theta\left(1\right) \hspace{1.25cm} \text{ if }  K=\Omega\left(\Lambda \log\Lambda \right),
\end{cases}
\end{align}
which in turn leads to the following corollary. 
\begin{corollary}~\label{c:gap}
The performance deterioration $G(\gamma)$ due to association randomness, scales as $\Theta\left( \frac{\log \Lambda}{\log \log \Lambda} \right)$ at $K=\Theta\left(\Lambda\right)$, and as $K$ increases, this deterioration gradually reduces, and ceases to scale when $K=\Omega\left(\Lambda\log \Lambda\right)$.
\end{corollary}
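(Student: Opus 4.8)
The plan is to obtain this corollary as a direct specialization of the scaling law for $G(\gamma)$ recorded in~\eqref{eq:egap}, which in turn is a restatement of Theorem~\ref{th:gap}. Since the analytically demanding part --- sandwiching $\overline{T}^*(\gamma)$ between the bounds of Theorem~\ref{th:UBLB} and extracting its order --- is already absorbed into Theorem~\ref{th:gap}, the work remaining here is an asymptotic substitution together with a short monotonicity argument.

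First I would confirm that the operating regime $K = \Theta(\Lambda)$ falls inside the first branch of~\eqref{eq:egap}, namely the interval $\left[\frac{\Lambda}{\polylog(\Lambda)},\, o(\Lambda\log\Lambda)\right]$. Writing $K = c\Lambda$ for a fixed constant $c > 0$, one has $c\Lambda \geq \frac{\Lambda}{\polylog(\Lambda)}$ for all sufficiently large $\Lambda$, and $\frac{c\Lambda}{\Lambda\log\Lambda} = \frac{c}{\log\Lambda} \to 0$, so $K = o(\Lambda\log\Lambda)$; hence the first branch is the applicable one.

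Next I would substitute $K = \Theta(\Lambda)$ into $\frac{\Lambda\log\Lambda}{K\log\frac{\Lambda\log\Lambda}{K}}$. Introducing $u \triangleq \frac{\Lambda\log\Lambda}{K}$, the expression reads $\frac{u}{\log u}$, and at $K = \Theta(\Lambda)$ we get $u = \Theta(\log\Lambda)$, so that $\log u = \Theta(\log\log\Lambda)$. This yields $G(\gamma) = \Theta\!\left(\frac{\log\Lambda}{\log\log\Lambda}\right)$, establishing the first claim. For the ``gradually reduces'' statement I would invoke monotonicity: the map $u \mapsto \frac{u}{\log u}$ is increasing for $u > e$, and throughout the first branch $u = \frac{\Lambda\log\Lambda}{K} = \omega(1)$ exceeds $e$ for large $\Lambda$; since increasing $K$ decreases $u$, it decreases $\frac{u}{\log u}$ and hence $G(\gamma)$. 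The final ``ceases to scale'' assertion is immediate from the second branch of~\eqref{eq:egap}: as soon as $K = \Omega(\Lambda\log\Lambda)$ we have $G(\gamma) = \Theta(1)$, a quantity independent of $\Lambda$.

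The main obstacle, modest as it is, is purely one of care rather than depth: I must ensure the manipulation of the nested logarithm $\log\frac{\Lambda\log\Lambda}{K}$ holds uniformly as $K$ sweeps over $\Theta(\Lambda)$, and that the two branches of~\eqref{eq:egap} agree at the boundary $K = \Theta(\Lambda\log\Lambda)$, so that the monotone decrease genuinely interpolates between the $\Theta(\log\Lambda/\log\log\Lambda)$ peak at $K = \Theta(\Lambda)$ and the $\Theta(1)$ plateau. Everything else follows mechanically once Theorem~\ref{th:gap} is granted.
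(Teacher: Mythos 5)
Your proposal is correct and follows essentially the same route as the paper, which simply derives the corollary directly from Theorem~\ref{th:gap} (via the scaling law in~\eqref{eq:egap}): you verify that $K=\Theta(\Lambda)$ lies in the first branch, substitute to get $\Theta\left(\frac{\log\Lambda}{\log\log\Lambda}\right)$, and read off the $\Theta(1)$ plateau from the second branch. Your added monotonicity check of $u \mapsto \frac{u}{\log u}$ is just an explicit filling-in of the ``gradually reduces'' phrase that the paper treats as immediate.
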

\begin{proof}
The proof is straightforward from Theorem~\ref{th:gap}.
\end{proof}
In identifying the exact scaling laws of the problem, Theorem~\ref{th:gap} nicely captures the following points.
\begin{itemize}
    \item It describes the extent to which the performance deterioration increases with $\Lambda$ and decreases with $\frac{K}{\Lambda}$.
    \item It reveals that the performance deterioration can in fact be unbounded. 
    \item It shows how in certain cases, increasing $\Lambda$ may yield diminishing returns due to the associated exacerbation of the random association problem. For example, to avoid a scaling $G(\gamma)$, one must approximately keep $\Lambda$ below $e^{W(K)}$ ($W(.)$ is the Lambert W-function) such that $\Lambda\log \Lambda\leq K$.
\end{itemize}    
The detrimental impact of the user-to-cache association's randomness on the delivery time motivates the need of techniques to mitigate this impact. In Section \ref{Sec:3}, we show how incorporating load balancing methods in shared cache setting can play a vital role in mitigating this impact.

\begin{figure}[t]
\centering
\includegraphics[ width=1\linewidth]{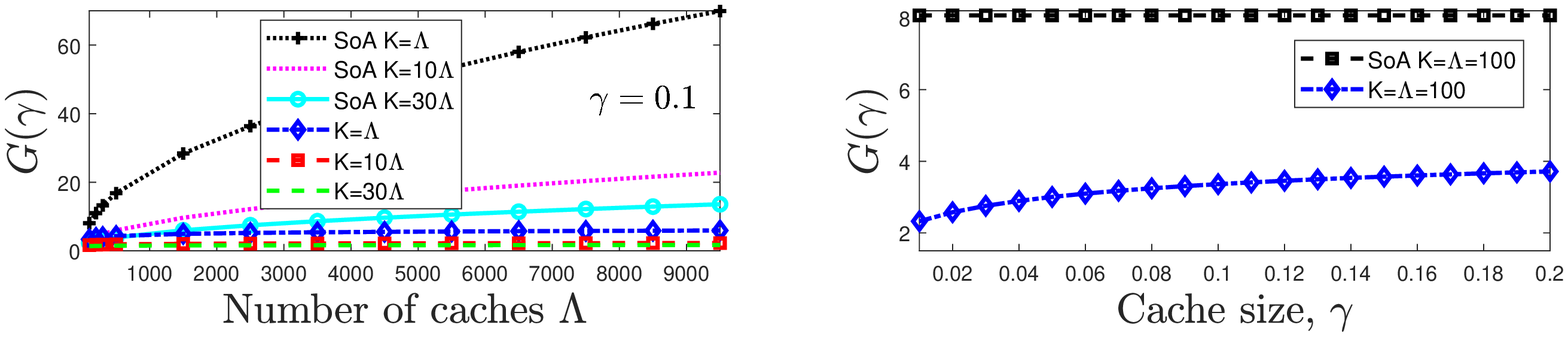} 
\caption{Upper bound comparison with SoA.}
\label{fig:soa}
\end{figure}

\subsection{Furthering the SoA on the subpacketization-constrained decentralized coded
caching setting}\label{Sec:5}
As mentioned before, in general the shared cache setting is isomorphic to the subpacketization-constrained coded caching setting~\cite{shanmugam_it16}, where each cache-enabled user is forced to store the content from one of $\Lambda$ cache states. In particular, the work in~\cite{jin_toc_19} proposed a decentralized coded caching in this subpacketization-constrained setting, where each cache-enabled user stores the content from one of $\Lambda$ cache states with equal probability, which is exactly equivalent to our setting where each user appears in the coverage area of any particular cache-enabled cell with equal probability. We briefly mention below the utility of our results in this latter context. 
\begin{itemize}
\item Theorem~\ref{th:egap} now identifies the exact optimal performance, as well as provides a more efficient way (see Remark~\ref{R:1}) to evaluate this performance.%, compared to \BS{the state of the art} (cf.~\cite[Theorem 1]{jin_toc_19}). 
\item Theorem~\ref{th:UBLB} offers a new tighter upper bound on $\overline{T}^*(\gamma)$ (see Figure~\ref{fig:soa}) and the only known lower bound on $\overline{T}^*(\gamma)$.
\item Finally Theorem~\ref{th:gap} completes our understanding of the scaling laws of the random association setting. For example, for the case where $K=\Theta\left(\Lambda\right)$, prior to our work, $G(\gamma)$ was known to scale at most as $\Theta\left(\sqrt{\Lambda}\right)$, whereas now we know that this deterioration scales exactly as $\Theta\left(\frac{\log \Lambda}{\log \log \Lambda} \right)$. Please refer to Table~\ref{tab:soa} for a detailed comparison of the known upper bounds and our exact scaling results. 

% \begin{table}[t]
% \centering
% \begin{tabular}{|l|l|l|}
% \hline
%   & $\overline{T}^*(\gamma)$ in \cite{jin_toc_19} & $\overline{T}^*(\gamma)$ in our work  \\ \hline
% $K= \Theta\left(\Lambda\right)$ &  $O\left( \sqrt{\Lambda}\right)$  & $ \Theta\left(\frac{\log \Lambda}{\log \log \Lambda} \right)$         \\\hline
% \begin{tabular}[c]{@{}l@{}}$K= \Theta\left(\Lambda^{a}\right)$ for\\ $1<a<2$ and \\ $K = \Omega\left(\Lambda\log \Lambda \right)$\end{tabular}&  $O\!\left( \Lambda^{a/2}\right)$  & \begin{tabular}[c]{@{}l@{}}$ \Theta\!\left(  T_{min}\right) =\Theta\left(\frac{K}{\Lambda}\right)$ \\ $= \Theta\!\left(\Lambda^{a-1} \right)$\end{tabular}       \\ \hline
% $K = \Omega\left(\Lambda^2\right)$ & $O\left( \frac{K}{\Lambda}\right)$            & $ \Theta\!\left(  T_{min}\right)= \Theta\left(  \frac{K}{\Lambda}\right)$ \\ \hline
% \end{tabular}
% \caption{SoA comparison of scaling laws.}
% \label{tab:soa}
% \end{table}
\begin{table}[t]
\centering
\begin{tabular}{|l|l|l|}
\hline
   & $\overline{T}^*(\gamma)$ in \cite{jin_toc_19} & $\overline{T}^*(\gamma)$ in our work  \\ \hline
$K= \Theta\left(\Lambda\right)$ &  $O\left( \sqrt{\Lambda}\right)$  & $ \Theta\left(\frac{\log \Lambda}{\log \log \Lambda} \right)$         \\\hline
$K= \Theta\left(\Lambda^{a}\right)$ for $1<a<2$ and $K = \Omega\left(\Lambda\log \Lambda \right)$&  $O\!\left( \Lambda^{a/2}\right)$  & $ \Theta\!\left(  T_{min}\right) =\Theta\left(\frac{K}{\Lambda}\right)$  $= \Theta\!\left(\Lambda^{a-1} \right)$       \\ \hline
$K = \Omega\left(\Lambda^2\right)$ & $O\left( \frac{K}{\Lambda}\right)$            & $ \Theta\!\left(  T_{min}\right)= \Theta\left(  \frac{K}{\Lambda}\right)$ \\ \hline
\end{tabular}
\caption{SoA comparison of scaling laws.}
\label{tab:soa}
\end{table}
\end{itemize}

\section{Cache Load Balancing in Heterogeneous Networks}\label{Sec:3}
In the previous section, we explored the performance of coded caching when each user is associated, at random and with equal probability, to one of $\Lambda$ caches. Our aim now is to reduce the detrimental impact of the user-to-cache association's randomness on the delivery time, by using load balancing methods that introduce a certain element of choice in this association, and thus allow for \emph{better} profile vectors. Such choice can exist naturally in different scenarios, like for example in the wireless cache-aided heterogeneous network setting, where each user can be within the communication range of more than one cache helper node. 
 
We define a generic load balancing method $\phi$ to be a function that maps the set of users $\left[K\right]$ into a cache population vector $\mathbf{V} = \phi([K])$ as a result of the load balancing choice. Similarly as in \eqref{eq:DefMetric}, the optimal delay, given a certain load balancing policy $\phi$, is defined as
\begin{equation}
\overline{T}_{\phi}^*(\gamma)  = \min_{\mathcal{X}} E_{\mathbf{V}}\left[ \max_{\mathbf{d}} T\left(\phi([K]),\mathbf{d},\mathcal{X}\right)\right]. \label{eq:DefMetric1.1} 
\end{equation}
The above definition is the same as the one in \eqref{eq:DefMetric}, with the only difference that the random variable representing the cache population vector $\mathbf{V}$ is now following a different probability distribution that depends on the load balancing method $\phi$. Employing the optimal scheme $\mathcal{X}$ from Theorem~\ref{th:egap}, the average delivery time takes the form (cf. equation \eqref{eq:tavg}) 
\begin{equation}\label{eq:Tstarloadbalancing}
\overline{T}_{\phi}(\gamma)  = \sum_{\lambda=1}^{\Lambda-t} E[ l_{\lambda} ]   \frac{ {\Lambda-\lambda \choose t} }{{\Lambda \choose t}},
\end{equation}
where $[l_1,l_2,\dots,l_{\Lambda}]=sort(\phi([K]))$.
It is important to point out that the choice of the load balancing method can be in general limited by some practical constraints, such as geographical constraints and operational constraints\footnote{Removal of all these constraints naturally brings us back to the ideal user-to-cache association where each cache is associated to an equal number of users.}.  We will focus on analyzing the above, for two load balancing methods which will prove to allow for unbounded gains.

\subsection{Randomized load balancing with multiple choices}\label{Sec:31}
In the original scenario, for any given user, one cache is randomly picked to assist this user. Now we consider a load balancing method $\phi_r$ which, for any given user, picks $h\geq 2$ candidate caches at random, and then associates each such user with the least loaded cache among these $h$ caches. This static method is referred to as \emph{randomized load balancing with multiple choices}~\cite{Mitzenmacher2001}, and is considered in a variety of settings (see for example~\cite{petra_06}).
The performance of this method is presented in the following result, for the limiting case of large $\Lambda$. 
\begin{theorem}\label{th:gapd}
In the $K$-user, $\Lambda$-cell heterogeneous network with normalized cache size $\gamma$, where each user benefits from the least loaded cache among $h$ randomly chosen caches, the limiting optimal delay converges to 
%\begin{align} \label{eq:lawtavgd}
%\overline{T}_{\phi_r}^{*}(\gamma) =  \Theta \left( T_{min} \frac {\Lambda \log \log \Lambda}{K  \log h} \right). 
%\end{align}
\begin{align} \label{eq:lawtavgd}
\!\!\overline{T}_{\phi_r}^{*}(\gamma)\!\! =\!
\begin{cases} 
\!\Theta \left( T_{min} \frac {\Lambda \log \log \Lambda}{K  \log h} \right) \text{ \emph{if} }\! K\!\! =\!o\left(\!\frac {\Lambda \log \log \Lambda}{\log h}\!\! \right) \\
\!\Theta \left(T_{min}\right) \hspace{1.5cm} \text{ \emph{if} }\! K\!\! =\!\Omega\left(\!\frac {\Lambda \log \log \Lambda}{\log h}\!\! \right).
\end{cases}
\end{align}
\end{theorem}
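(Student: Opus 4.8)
The plan is to work directly from the closed form \eqref{eq:Tstarloadbalancing}, namely $\overline{T}_{\phi_r}(\gamma)=\sum_{\lambda=1}^{\Lambda-t}E[l_\lambda]\,\omega_\lambda$ with $\omega_\lambda\triangleq \binom{\Lambda-\lambda}{t}/\binom{\Lambda}{t}$ (recall this is already the value attained by the optimal scheme of Theorem~\ref{th:egap}), and to reduce the statement to the order statistics of the load vector produced by $\phi_r$. Two elementary facts frame the computation: by the hockey-stick identity $\sum_{\lambda=1}^{\Lambda-t}\omega_\lambda=\binom{\Lambda}{t+1}/\binom{\Lambda}{t}=\frac{\Lambda-t}{1+t}$, and by \eqref{eq:tbc0} (and \eqref{eq:tbc} for general $K$) $T_{min}=\Theta\!\left(\frac{K}{\Lambda}\frac{\Lambda-t}{1+t}\right)$ in the relevant range $K=\Omega(\Lambda)$. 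Since $\phi_r$ is exactly the balls-into-bins process in which each of the $K$ users is placed in the least loaded of $h$ uniformly chosen caches, I would import the max-load characterization from the power-of-multiple-choices literature~\cite{Mitzenmacher2001,petra_06}: in expectation the top load satisfies $E[l_1]=\Theta\!\left(\frac{K}{\Lambda}+\frac{\log\log\Lambda}{\log h}\right)$, where $\frac{\log\log\Lambda}{\log h}$ is the lightly loaded term and $\frac{K}{\Lambda}$ the heavily loaded drift (the high-probability statements transfer to expectations because $l_1\le K$ on the negligible-probability bad event).

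For the upper bound, monotonicity of the sorted loads gives $\overline{T}_{\phi_r}(\gamma)\le E[l_1]\sum_{\lambda=1}^{\Lambda-t}\omega_\lambda=E[l_1]\frac{\Lambda-t}{1+t}$. In the light regime $K=o\!\left(\frac{\Lambda\log\log\Lambda}{\log h}\right)$ we have $\frac{K}{\Lambda}=o\!\left(\frac{\log\log\Lambda}{\log h}\right)$, so the $\frac{\log\log\Lambda}{\log h}$ term dominates $E[l_1]$ and the bound becomes $O\!\left(\frac{\log\log\Lambda}{\log h}\frac{\Lambda-t}{1+t}\right)=O\!\left(T_{min}\frac{\Lambda\log\log\Lambda}{K\log h}\right)$, using $T_{min}=\Theta(\frac{K}{\Lambda}\frac{\Lambda-t}{1+t})$. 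In the heavy regime $K=\Omega\!\left(\frac{\Lambda\log\log\Lambda}{\log h}\right)$ the drift dominates and the same bound collapses to $O(T_{min})$, matching the two branches of \eqref{eq:lawtavgd}.

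The lower bound splits the same way. In the heavy regime it suffices to recall that the uniform profile minimizes \eqref{eq:TL} (cf.~\cite{parrinello_it20}), so every realization obeys $T(\mathbf{L})\ge T_{min}$ and hence $\overline{T}_{\phi_r}(\gamma)\ge T_{min}$. The light regime is the delicate part, since a single overloaded cache is not enough: the target $\Theta\!\left(\frac{\log\log\Lambda}{\log h}\frac{\Lambda-t}{1+t}\right)$ effectively weighs $\Theta(\Lambda/t)$ caches. Here I would use the layered-induction tail underlying the max-load bound to show that, with high probability, at least $M=\Lambda^{1-o(1)}$ caches carry load at least $\theta\frac{\log\log\Lambda}{\log h}$ for a small constant $\theta$ (the fraction of caches with load $\ge i$ decays doubly exponentially in $i$, hence is still $\Lambda^{-o(1)}$ at $i=\theta\frac{\log\log\Lambda}{\log h}$). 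Consequently the $M$ largest sorted loads all exceed $\theta\frac{\log\log\Lambda}{\log h}$, giving $\overline{T}_{\phi_r}(\gamma)\ge\theta\frac{\log\log\Lambda}{\log h}\sum_{\lambda=1}^{\min(M,\Lambda-t)}\omega_\lambda$. A second hockey-stick computation controls the discarded weight, $\sum_{\lambda=M+1}^{\Lambda-t}\omega_\lambda\big/\sum_{\lambda=1}^{\Lambda-t}\omega_\lambda=\binom{\Lambda-M}{t+1}/\binom{\Lambda}{t+1}\le e^{-M(t+1)/\Lambda}$, which is bounded away from $1$ whenever $M(t+1)/\Lambda=\Omega(1)$; hence a constant fraction of the total weight $\frac{\Lambda-t}{1+t}$ survives and $\overline{T}_{\phi_r}(\gamma)=\Omega\!\left(\frac{\log\log\Lambda}{\log h}\frac{\Lambda-t}{1+t}\right)$, matching the upper bound.

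The main obstacle is exactly this light-regime lower bound: translating the concentration of the \emph{single} maximum load into the statement that $\Lambda^{1-o(1)}$ caches are simultaneously overloaded, and then verifying that the decreasing weights $\omega_\lambda$ retain a constant share of their mass over these caches (the step requiring $M(t+1)/\Lambda=\Omega(1)$, i.e.\ a mild lower regularity on $t$). Two further points require care but are routine: passing from the high-probability max-load and tail estimates of the dependent $h$-choice process to the expectations $E[l_\lambda]$ appearing in \eqref{eq:Tstarloadbalancing}, and handling the boundary $K\asymp\frac{\Lambda\log\log\Lambda}{\log h}$ where the two branches of \eqref{eq:lawtavgd} meet and both $\frac{K}{\Lambda}$ and $\frac{\log\log\Lambda}{\log h}$ are of the same order.
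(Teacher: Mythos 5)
Your skeleton coincides with the paper's proof up to the lower bound. The paper also works from \eqref{eq:Tstarloadbalancing}, also imports the balanced-allocations estimate $E[l_1]=\frac{\log\log\Lambda}{\log h}+\frac{K}{\Lambda}\pm\Theta(1)$ from \cite[Corollary 1.4]{petra_06} (your $E[l_1]=\Theta\bigl(\frac{K}{\Lambda}+\frac{\log\log\Lambda}{\log h}\bigr)$), and also uses the hockey-stick upper bound $\overline{T}\le E[l_1]\frac{\Lambda-t}{1+t}$, i.e.\ \eqref{eq:TUB}. Where you diverge is the light-regime lower bound. The paper needs no statement about many caches being simultaneously overloaded: it reuses the bound \eqref{eq:TLB}, already proved in Appendix~\ref{AP:UBLB} by a majorization argument (place $E[l_1]$ on the top cache and spread $K-E[l_1]$ uniformly over the remaining $\Lambda-1$ caches), which gives $\overline{T}\ge\frac{\Lambda-t}{1+t}\bigl(\frac{E[l_1]\,t}{\Lambda-1}+\frac{K}{\Lambda}\frac{\Lambda-t-1}{\Lambda-1}\bigr)$. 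Since \eqref{eq:TUB} and \eqref{eq:TLB} use only monotonicity of the sorted expected loads and $\sum_\lambda E[l_\lambda]=K$, they hold verbatim for the $\phi_r$-induced distribution, and with $\gamma$ constant they match up to the factor $\gamma$, yielding \eqref{eq:gapt}: $\overline{T}^*=\Theta\bigl(T_{min}\frac{E[l_1]\Lambda}{K}\bigr)$. Theorem~\ref{th:gapd} is then a single substitution of \eqref{eq:AL1d} into \eqref{eq:gapt}.

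The step you flag as ``the main obstacle'' is indeed the weak point of your route, and it is entirely avoidable. Your premise that the target lower bound ``effectively weighs $\Theta(\Lambda/t)$ caches'' dissolves under the paper's standing assumption (used explicitly in Appendix~\ref{AP:gap}) that $\gamma$ is a fixed constant: then $t=\gamma\Lambda$, the total weight $\sum_{\lambda=1}^{\Lambda-t}\binom{\Lambda-\lambda}{t}/\binom{\Lambda}{t}=\frac{\Lambda-t}{1+t}\to\frac{1-\gamma}{\gamma}$ is $\Theta(1)$, and the single top cache already carries weight $\binom{\Lambda-1}{t}/\binom{\Lambda}{t}=1-\gamma$, a constant fraction of the total. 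Hence $\overline{T}\ge E[l_1](1-\gamma)=\Omega\bigl(E[l_1]\frac{\Lambda-t}{1+t}\bigr)$ with no multi-cache argument at all; combined with the lower-bound side of \cite{petra_06} this closes the light regime. Moreover, as sketched, your multi-cache step has a genuine rigor gap: the doubly-exponential decay you invoke is the upper-bound direction of the layered induction (few caches are overloaded), whereas you need the lower-bound direction (at least $\Lambda^{1-o(1)}$ caches reach level $\theta\frac{\log\log\Lambda}{\log h}$); in the heavily loaded part of your light regime, $K=\omega(\Lambda)$, this does not follow from the standard lightly-loaded analysis and would have to be extracted from the substantially harder machinery of \cite{petra_06}. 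Finally, a smaller omission: the theorem concerns $\overline{T}^{*}_{\phi_r}$, optimized over all schemes $\mathcal{X}$, while your bounds control the delay of the particular scheme attaining \eqref{eq:Tstarloadbalancing}. The converse still requires the paper's observation that $\phi_r$ is unbiased with respect to cache identity, so that $P(\mathbf{V})$ is constant over all $\mathbf{V}$ with the same sorted profile and the lower bound of \cite{parrinello_it20} applies profile by profile, as in \eqref{eq:optim1}--\eqref{eq:TLopt}.
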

\begin{proof}
The achievability part of the theorem is deferred to Appendix~\ref{AP:gapd}. After noticing that the definition of the optimal delay in \eqref{eq:DefMetric1.1} is equal to \eqref{eq:DefMetric}, optimality is proven the same way as for the optimality of Theorem~\ref{th:egap} by following the same steps as in equations \eqref{eq:optim1}-\eqref{eq:TLopt}. Following those steps requires (cf. \cite{parrinello_it20}) that $P(\mathbf{V})$ remains fixed for any $\mathbf{V}$ such that $sort(\mathbf{V})=\mathbf{L}$; which is true also for the considered load balancing method $\phi_r$ because the method is not biased to any specific cache, i.e. $\phi_r$ assigns each user to one of the available caches only based on the load of the caches and independently from the cache identity. Therefore, the proof follows the same steps as for the case where there is no load balancing.
\end{proof} 
The above theorem naturally implies that the performance deterioration, due to random association, scales as 
%\begin{align} \label{eq:egapd}
%G_r(\gamma)=  \Theta \left( \frac {\Lambda \log \log \Lambda}{K  \log h} \right), 
%\end{align}
\begin{align} \label{eq:egapd}
G_r(\gamma)=  
\begin{cases} 
\!\Theta \left( \frac {\Lambda \log \log \Lambda}{K  \log h} \right) \text{ if } K =\!o\left(\frac {\Lambda \log \log \Lambda}{\log h} \right) \\
\!\Theta \left(1\right) \hspace{1.3cm} \text{ if } K =\!\Omega\left(\frac {\Lambda \log \log \Lambda}{\log h} \right),
\end{cases}
\end{align}
as well as implies the following corollary.
\begin{corollary}~\label{c:gapd}
In the $K$-user, $\Lambda$-cell heterogeneous network with random-selection load balancing, the performance deterioration due to random association, scales as $\Theta \left( \frac {\log \log \Lambda}{\log h} \right)$ when $K=\Theta\left(\Lambda\right)$, and then as $K$ increases, this deterioration gradually reduces, and ceases to scale when $K = \Omega\left( \frac {\Lambda \log \log \Lambda}{\log h} \right)$. 
\end{corollary}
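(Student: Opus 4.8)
The plan is to derive the corollary directly by specializing the scaling law of Theorem~\ref{th:gapd}, exactly in the spirit of how Corollary~\ref{c:gap} follows from Theorem~\ref{th:gap}. The natural starting point is the two-regime characterization of the deterioration $G_r(\gamma)$ in \eqref{eq:egapd}, which is itself an immediate consequence of dividing $\overline{T}_{\phi_r}^*(\gamma)$ by $T_{min}$. Since the analytical work has already been carried out in establishing Theorem~\ref{th:gapd}, what remains is purely the substitution $K = \Theta(\Lambda)$ together with a check of which of the two regimes is active.

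First I would substitute $K = \Theta(\Lambda)$ into the first branch of \eqref{eq:egapd}. This branch applies provided $K = o\!\left(\frac{\Lambda\log\log\Lambda}{\log h}\right)$; with $K = \Theta(\Lambda)$ this condition reduces to $\log h = o(\log\log\Lambda)$, i.e.\ the number of candidate caches does not grow too quickly with $\Lambda$. Under this mild condition the substitution yields $G_r(\gamma) = \Theta\!\left(\frac{\Lambda\log\log\Lambda}{\Theta(\Lambda)\,\log h}\right) = \Theta\!\left(\frac{\log\log\Lambda}{\log h}\right)$, which is exactly the first claim of the corollary.

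Next, to capture the statement that the deterioration \emph{gradually reduces and ceases to scale}, I would argue monotonically in $K$. Within the first branch of \eqref{eq:egapd} the quantity $\frac{\Lambda\log\log\Lambda}{K\log h}$ is strictly decreasing in $K$, so as $K$ grows past $\Theta(\Lambda)$ the deterioration shrinks. The crossover to the second branch, where the second case of Theorem~\ref{th:gapd} gives $\overline{T}_{\phi_r}^*(\gamma) = \Theta(T_{min})$ and hence $G_r(\gamma) = \Theta(1)$, occurs precisely at $K = \Omega\!\left(\frac{\Lambda\log\log\Lambda}{\log h}\right)$; this is the threshold at which the deterioration stops scaling, matching the second assertion.

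The only point requiring care --- and hence the main, though minor, obstacle --- is the bookkeeping of the asymptotic regime: one must verify that $K = \Theta(\Lambda)$ indeed lands in the first branch, which forces the implicit assumption $\log h = o(\log\log\Lambda)$. If instead $\log h = \Theta(\log\log\Lambda)$ or larger, then even at $K = \Theta(\Lambda)$ one is already at or beyond the crossover and $G_r(\gamma) = \Theta(1)$; the corollary is therefore implicitly stated for the regime in which $h$ grows slowly enough for the choice of $h$ caches to remain meaningful. No estimates beyond Theorem~\ref{th:gapd} are needed, which is why the proof is \emph{straightforward}.
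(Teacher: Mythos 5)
Your proposal is correct and matches the paper's own argument, which simply reads off the corollary from the two-regime expression \eqref{eq:egapd} (equivalently, Theorem~\ref{th:gapd} divided by $T_{min}$), specialized at $K=\Theta(\Lambda)$ and combined with the crossover threshold $K=\Omega\left(\frac{\Lambda\log\log\Lambda}{\log h}\right)$. Your added remark that the first branch is active at $K=\Theta(\Lambda)$ only when $\log h = o(\log\log\Lambda)$ is a valid bookkeeping point the paper leaves implicit, but it does not change the argument.
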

\begin{proof}
The proof is direct from~\eqref{eq:egapd}.
\end{proof}
We can see that the above method can dramatically ameliorate the random association effect, where (for example when $K$ is in the same order as $\Lambda$) even a small choice among $h=2$ caches, can tilt the scaling of $G(\gamma)$, from the original $\Theta \left( \frac {\log \Lambda}{\log \log \Lambda} \right)$ to a much slower  $\Theta \left( \log \log \Lambda \right)$.

\subsection{Load balancing via proximity-based cache selection}\label{Sec:32}
The aforementioned randomized load balancing method, despite its substantial impact, may not apply when the choice is limited by geographical proximity. To capture this limitation, we consider the load balancing approach $\phi_p$ where the set of $\Lambda$ caches is divided into $\Lambda / h$ disjoint groups 
$\left[\mathbf{X}_1, \mathbf{X}_2, \dots, \mathbf{X}_{\Lambda / h} \right]$ of $h$ caches each\footnote{In this method, our focus is in the asymptotic setting, thus we do not need to assume that $h$ divides $\Lambda$.}. Once a user is associated at random, with uniform probability, to one of these groups, then we choose to associate this user to the least loaded cache from that group.

The performance of this method is presented in the following result, for the limiting case of large $\Lambda$. 
\begin{theorem}\label{th:gapg}
In the $K$-user, $\Lambda$-cell heterogeneous network with normalized cache size $\gamma$, where each user benefits from the least loaded cache among $h$ neighboring caches, then the limiting optimal delay converges to 
\begin{align} \label{eq:lawtavgg}
\overline{T}^{*}_{\phi_p}(\gamma)\!\! =\!
\begin{cases} 
\!\Theta\! \left(\frac{T_{min}\Lambda\log\frac{\Lambda}{h}}{hK\log\frac{\Lambda \log\frac{\Lambda}{h} }{hK}}\!\! \right)\! \text{ \emph{if} } K \!\in\!\left[ \frac{\Lambda}{h\polylog(\frac{\Lambda}{h})},  o\left(\!\frac{\Lambda}{h} \!\log\frac{\Lambda}{h}\! \right)\right] \\
\!\Theta \left(T_{min}\right) \hspace{1.15cm} \text{ \emph{if} } K =\!\Omega\left(\frac{\Lambda}{h} \log\frac{\Lambda}{h} \right).
\end{cases}
\end{align}
\end{theorem}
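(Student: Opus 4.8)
The plan is to reduce the proximity-constrained problem with $\Lambda$ caches and group size $h$ to the \emph{unconstrained} random-association problem of Theorem~\ref{th:gap} with $\Lambda/h$ caches, and then simply read off the scaling. Starting from the optimal-scheme expression in~\eqref{eq:Tstarloadbalancing}, I would first note that $\phi_p$ acts in two independent stages: the assignment of users to the $\Lambda/h$ groups is nothing but a uniform allocation of $K$ balls into $\Lambda/h$ bins, while the within-group rule spreads the $N_g$ users landing in group $g$ as evenly as possible across its $h$ caches. Hence each cache in group $g$ carries a load of $N_g/h + O(1)$, the $O(1)$ being only the rounding gap between $\lfloor N_g/h\rfloor$ and $\lceil N_g/h\rceil$, and the group loads $N_g$ are themselves distributed exactly as in the basic $(\Lambda/h)$-cache problem.

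The key structural step relates the sorted cache profile $[l_1,\dots,l_\Lambda]=sort(\phi_p([K]))$ to the sorted group profile $N_{(1)}\ge\cdots\ge N_{(\Lambda/h)}$. Since within each group all $h$ loads are equal up to $O(1)$, sorting the $\Lambda$ cache loads merely replicates each group value $h$ times, giving $l_\lambda = N_{(\lceil \lambda/h\rceil)}/h + O(1)$. Substituting into~\eqref{eq:Tstarloadbalancing}, grouping the outer sum into consecutive blocks of $h$ indices, and using the standard estimate $\binom{\Lambda-\lambda}{t}/\binom{\Lambda}{t} = (1-\lambda/\Lambda)^{t}(1+o(1))$ — which is essentially constant across a block of width $h$ located at $\lambda = h\mu$ — the $h$ copies of $N_{(\mu)}/h$ recombine to $N_{(\mu)}$, yielding
\[
\overline{T}^{*}_{\phi_p}(\gamma) = \Theta\!\left( \sum_{\mu=1}^{\Lambda/h} E[N_{(\mu)}]\,\Bigl(1-\tfrac{\mu}{\Lambda/h}\Bigr)^{t}\right),
\]
which is, up to constants, precisely the quantity in~\eqref{eq:tavg} with $\Lambda$ replaced by $\Lambda/h$ while keeping the same $t$ and the same $K$.

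With this identification, the scaling follows by invoking Theorem~\ref{th:gap} with $\Lambda\to\Lambda/h$: the two regime boundaries $[\,(\Lambda/h)/\polylog(\Lambda/h),\, o((\Lambda/h)\log(\Lambda/h))\,]$ and $\Omega((\Lambda/h)\log(\Lambda/h))$ coincide verbatim with those in~\eqref{eq:lawtavgg}, and the prefactor is preserved: the virtual minimum delay $K(1-h\gamma)/(1+t)$ differs from the stated $T_{min}=K(1-\gamma)/(1+t)$ only by the bounded factor $(1-h\gamma)/(1-\gamma)$, which is $\Theta(1)$ whenever $h\gamma$ is bounded away from $1$, i.e. under the natural non-degeneracy $t\le\Lambda/h$. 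Optimality is inherited exactly as in Theorems~\ref{th:egap} and~\ref{th:gapd}: the delivery scheme's delay depends only on the profile $\mathbf{L}$, so the genie-aided per-profile converse of~\cite{parrinello_it20} matches the achievable $E[T(\mathbf{L})]$ irrespective of the (non-symmetric) law that $\phi_p$ induces on $\mathbf{V}$.

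The hard part will be the second step: controlling \emph{both} the additive $O(1)$ per-cache rounding and the block-wise replacement of the binomial weight by $(1-\lambda/\Lambda)^{t}$ \emph{uniformly} in the regime where $h$ itself grows with $\Lambda$ (so that $\log(\Lambda/h)$ and $\log\log(\Lambda/h)$ are the right scales). One must show that the delay is dominated by the lightly-indexed, heavily-loaded groups, where $N_{(\mu)}/h\gg1$ and the rounding is negligible, while the bulk of near-average groups — whose individual $N_{(\mu)}/h$ may be only $\Theta(1)$ and whose effective summation range differs slightly from that of the virtual problem — contributes merely a lower-order amount after being damped by the decaying weight. Pinning down this concentration so that the reduction to Theorem~\ref{th:gap} is order-exact on both the upper and lower sides is the crux of the argument.
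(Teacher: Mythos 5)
Your high-level plan---view the group assignment as balls-into-bins with $\Lambda/h$ bins, use the fact that within a group the least-loaded rule equalizes loads up to rounding, and then import the asymptotics of the unconstrained problem---is the same skeleton as the paper's proof, and your structural observations (each cache load is $N_g/h+O(1)$, so $l_1=\lceil N_{(1)}/h\rceil$) are correct. The genuine gap is the step where you collapse the full weighted delay sum into a virtual $(\Lambda/h)$-cache problem ``with the same $t$.'' The weights $\binom{\Lambda-\lambda}{t}/\binom{\Lambda}{t}\approx(1-\lambda/\Lambda)^{t}\approx e^{-\gamma\lambda}$ decay at rate $\gamma$ per unit of $\lambda$, so across a block of width $h$ they vary by a factor of order $e^{\gamma h}$; they are ``essentially constant across a block'' only if $h\gamma=O(1)$. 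Since $\gamma$ is a fixed constant, your reduction is valid only for bounded $h$ (in fact $h<1/\gamma$), whereas the theorem is meant to cover $h$ growing with $\Lambda$ --- which is precisely the regime where proximity balancing gives an asymptotic gain, reducing $\Theta\left(\frac{\log\Lambda}{\log\log\Lambda}\right)$ to $\Theta\left(\frac{\log(\Lambda/h)}{\log\log(\Lambda/h)}\right)$. In that regime your intermediate identity is false, not merely delicate: the true delay is $\Theta(E[l_1])=\Theta\left(\max\left(E[N_{(1)}]/h,1\right)\right)$, while your right-hand side satisfies
\begin{align}
\sum_{\mu=1}^{\Lambda/h} E[N_{(\mu)}]\left(1-\tfrac{\mu h}{\Lambda}\right)^{t} = O\left(E[N_{(1)}]\,e^{-\gamma h}\right),
\end{align}
exponentially smaller, because you charge the first block a weight mass of $h e^{-\gamma h}$ when its true weight mass is $\sum_{\lambda\leq h}e^{-\gamma\lambda}=\Theta(1/\gamma)$; the sum is dominated by the first $O(1/\gamma)$ caches, i.e.\ a sub-block, so no concentration argument on the near-average groups can repair the identification. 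Equivalently, the virtual problem you invoke has cache fraction $h\gamma$, which exceeds $1$ (i.e.\ $t>\Lambda/h$) for every $h>1/\gamma$, so Theorem~\ref{th:gap} cannot be applied to it at all; your own prefactor $(1-h\gamma)/(1-\gamma)$ turning negative is the symptom. The condition $t\leq\Lambda/h$ that you dismiss as ``natural non-degeneracy'' is therefore a substantive restriction absent from the statement, not a technicality.

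The repair---and the paper's actual route---is to never match the full weighted sum. For constant $\gamma$ the sum is pinned to its first term: $E[l_1]\frac{\Lambda-t}{\Lambda}\leq \overline{T}^{*}_{\phi_p}(\gamma)\leq E[l_1]\frac{\Lambda-t}{1+t}$, hence $\overline{T}^{*}_{\phi_p}(\gamma)=\Theta\left(T_{min}\frac{E[l_1]\Lambda}{K}\right)$ as in \eqref{eq:gapt}, a reduction that holds for any association law and any $h$. Only then is the balls-into-bins reduction performed, and only on the single statistic $E[l_1]$: from $l_1=\lceil l^h_1/h\rceil$ one gets $E[l^h_1]/h< E[l_1]< E[l^h_1]/h+1$, and $E[l^h_1]$ is read off from \eqref{eq:AL1} with $\Lambda/h$ bins. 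This bypasses the block-weight mismatch entirely and is insensitive to how large $h\gamma$ is, which is exactly what your approach lacks.
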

\begin{proof} 
The achievability proof is deferred to Appendix~\ref{AP:gapg}, while the optimality part of the theorem follows the same argument as the proof of Theorem~\ref{th:gapd}.
\end{proof} 
The above implies a performance deterioration of
\begin{align}\label{eq:egapg}
\!\!\!G_p(\gamma)\! =\!
\begin{cases}\! \Theta\! \left(\frac{\frac{\Lambda}{hK}\log\frac{\Lambda}{h}}{\log\frac{\Lambda \log\frac{\Lambda}{h} }{hK}} \right) \text{ if } K \in\left[ \frac{\Lambda / h}{\polylog(\frac{\Lambda}{h})},  o\left(\!\frac{\Lambda}{h} \!\log\frac{\Lambda}{h}\! \right)\right] \\
\!\Theta \left( 1\right) \hspace{1.35cm}\text{ if } K =\Omega\left(\frac{\Lambda}{h} \log\frac{\Lambda}{h} \right),
\end{cases}
\end{align}
which in turn implies the following.
\begin{corollary}~\label{c:gapg}
In the $K$-user, $\Lambda$-cell heterogeneous network with proximity-bounded load balancing, the performance deterioration due to random association scales as $\Theta \left(\frac{\log(\Lambda / h)}{ \log \log(\Lambda / h)} \right)$ when $K=\Theta\left(\frac{\Lambda}{h}\right)$, and as $K$ increases, this deterioration gradually reduces, and ceases to scale when $K= \Omega\left(\frac{\Lambda}{h}\log \frac{\Lambda}{h}\right)$. 
\end{corollary}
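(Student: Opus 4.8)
The plan is to obtain the corollary as a direct asymptotic substitution into the performance deterioration $G_p(\gamma)$ already established in~\eqref{eq:egapg} (itself a consequence of Theorem~\ref{th:gapg}); no new machinery is needed. To lighten notation I set $m\triangleq\Lambda/h$.

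First I would specialize to the regime $K=\Theta(\Lambda/h)=\Theta(m)$, which lies inside the first branch of~\eqref{eq:egapg} since $\Theta(m)\in\left[\frac{m}{\polylog(m)},\,o(m\log m)\right]$. Substituting $K=\Theta(m)$ gives a leading factor $\frac{\Lambda}{hK}=\frac{m}{K}=\Theta(1)$, a numerator logarithm $\log\frac{\Lambda}{h}=\log m$, and, for the denominator, $\log\frac{\Lambda\log(\Lambda/h)}{hK}=\log\!\left(\frac{m\log m}{K}\right)=\log\bigl(\Theta(\log m)\bigr)=\Theta(\log\log m)$. Collecting these yields $G_p(\gamma)=\Theta\!\left(\frac{\log m}{\log\log m}\right)=\Theta\!\left(\frac{\log(\Lambda/h)}{\log\log(\Lambda/h)}\right)$, which is the claimed scaling at $K=\Theta(\Lambda/h)$.

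Next, to justify the ``gradually reduces'' claim, I would show the first-branch expression is decreasing in $K$ across the whole regime. Writing $r\triangleq m/K$, the deterioration reads $\Theta\!\left(\frac{r\log m}{\log r+\log\log m}\right)$; differentiating in $r$ shows this is increasing in $r$ whenever $r\log m>e$, a condition that holds throughout the regime, so $G_p(\gamma)$ indeed shrinks as $K$ grows. Finally, the ``ceases to scale'' statement is immediate from the second branch of~\eqref{eq:egapg}: as soon as $K=\Omega\!\left(\frac{\Lambda}{h}\log\frac{\Lambda}{h}\right)$ we have $G_p(\gamma)=\Theta(1)$, so the deterioration no longer grows with $\Lambda$.

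I do not expect a genuine obstacle, as the corollary is pure bookkeeping on top of Theorem~\ref{th:gapg}. The one point requiring care is the nested logarithm in the denominator of the first branch: one must confirm that under $K=\Theta(m)$ the argument $\frac{m\log m}{K}$ is $\Theta(\log m)$ rather than $\Theta(1)$, so that its logarithm contributes the crucial $\Theta(\log\log m)$ factor; everything else is routine.
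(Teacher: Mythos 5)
Your proposal is correct and follows essentially the same route as the paper, which proves the corollary by direct substitution of $K=\Theta\left(\frac{\Lambda}{h}\right)$ and $K=\Omega\left(\frac{\Lambda}{h}\log\frac{\Lambda}{h}\right)$ into the two branches of Theorem~\ref{th:gapg} (equivalently~\eqref{eq:egapg}). Your additional monotonicity check for the ``gradually reduces'' claim is a harmless extra verification that the paper leaves implicit.
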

\begin{proof}
The proof is straightforward from Theorem~\ref{th:gapg}.
\end{proof}
We can see that proximity-bounded load balancing significantly ameliorate the random association effect, where now deterioration ceases to scale when $K= \Omega\left(\frac{\Lambda}{h}\log \frac{\Lambda}{h}\right)$ compared to the original $K= \Omega\left(\Lambda\log \Lambda\right)$.

\section{Numerical Validation}\label{Sec:4}
We proceed to numerically validate our results, using two basic numerical evaluation approaches. The first is the \emph{sampling-based numerical} (SBN) approximation method, where we generate a sufficiently large set $\mathcal{L}_1$ of randomly generated profile vectors $\mathbf{L}$, and approximate $E_{\mathbf{L}}[T(\mathbf{L})]$ as
\begin{align}\label{eq:atavgs} 
E_{\mathbf{L}}[T(\mathbf{L})] \approx \frac{1}{|\mathcal{L}_1|}\sum_{\mathbf{L} \in \mathcal{L}_1}  T(\mathbf{L}), 
 \end{align}
where we recall that $T(\mathbf{L})$ is defined in~\eqref{eq:TL}. The corresponding approximate performance deterioration is then evaluated by dividing the above by $T_{min}$.
\begin{figure}[t]
\centering
\includegraphics[ width=.99\linewidth]{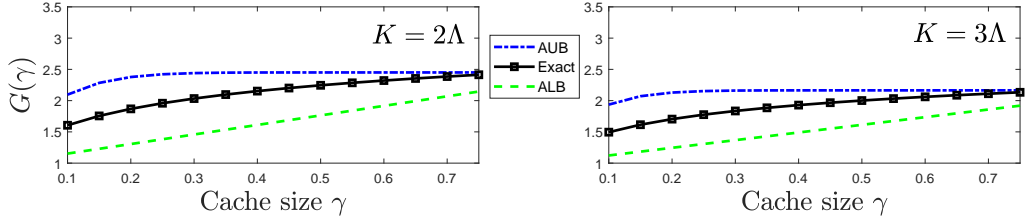} 
\caption{Analytical upper bound (AUB) from \eqref{eq:UB} vs. analytical lower bound (ALB) from \eqref{eq:LB} vs. exact $G(\gamma)$ from~\eqref{eq:gap} ($\Lambda=20$).}
\label{fig:ex_vs_ab}
\end{figure}
\begin{figure}[t]
\centering
\includegraphics[ width=.99\linewidth]{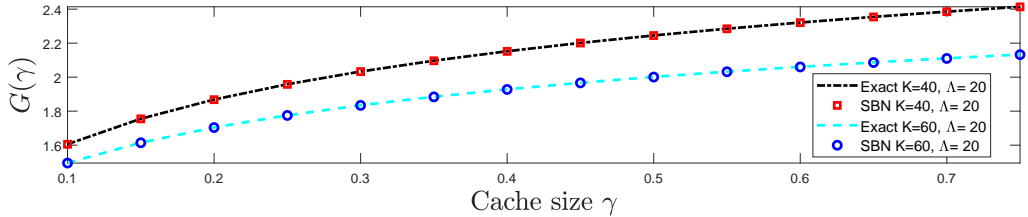} 
\caption{Exact $G(\gamma)$ from~\eqref{eq:gap} vs. sampling-based numerical (SBN) approximation from~\eqref{eq:atavgs} ($|\mathcal{L}_1|=10000$).}
\label{fig:ex_vs_sbn}
\end{figure} 
\begin{figure}[t]
\centering
\includegraphics[ width=\linewidth]{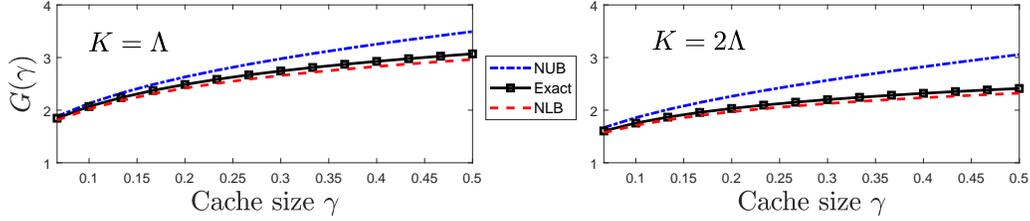} 
\caption{Threshold-based numerical upper bound (NUB) from~\eqref{eq:Nutavg} vs. threshold-based numerical lower bound (NLB) from~\eqref{eq:Nltavg} vs. exact $G(\gamma)$ from~\eqref{eq:gap} ($\Lambda=30$ and $\rho=0.95$).}
\label{fig:ex_vs_nb}
\end{figure}
The second is a \emph{threshold-based numerical} method, whose first step is to generate a set $\mathcal{L}_2 \subseteq \mathcal{L}$ of profile vectors $\mathbf{L}$ such that $\sum_{\mathbf{L} \in \mathcal{L}_2 } P(\mathbf{L}) \approx \rho$, for some chosen threshold value $\rho \in[0,1]$. Recall that the closed form expression for $P(\mathbf{L})$ is given in equation~\eqref{eq:pl}. Subsequently, with this subset $\mathcal{L}_2$ at hand, we simply have the numerical lower bound (NLB) 
\begin{align} \label{eq:Nltavg}
E_{\mathbf{L}}[T(\mathbf{L})] \geq    \sum_{\mathbf{L} \in \mathcal{L}_2 } P(\mathbf{L})  T(\mathbf{L}) +  \left(1-\rho \right) T_{min},
 \end{align}
by considering the best-case delay for each $\mathbf{L} \in \mathcal{L}/\mathcal{L}_2$, and similarly have the numerical upper bound (NUB) 
\begin{align} \label{eq:Nutavg}
E_{\mathbf{L}}[T(\mathbf{L})] \leq   \sum_{\mathbf{L} \in \mathcal{L}_2 } P(\mathbf{L})  T(\mathbf{L}) +  \left(1-\rho \right) K (1-\gamma),
 \end{align}
by considering the worst possible delay $K(1-\gamma)$ for every $\mathbf{L} \in \mathcal{L}/\mathcal{L}_2$. 
The bounding of $G(\gamma)$ is direct by dividing the above with $T_{min}$.  

Naturally the larger the threshold $\rho$, the tighter the bounds, the higher the computational cost. The additive gap between the bounds on $G(\gamma)$, takes the form $\left(1-\rho \right) \left(\frac{K (1-\gamma)}{T_{min}}-1 \right) \approx \left(1-\rho \right)t $, revealing the benefit of increasing $\rho$. 

First, Figures~\ref{fig:ex_vs_ab}-\ref{fig:ex_vs_nb} include comparisons that involve the \emph{exact} $G(\gamma)$ from~\eqref{eq:gap}, and thus --- due to the computational cost --- the number of caches remains at a modest $\Lambda=20$ (and a relatively larger $\Lambda=30$ for Figure~\ref{fig:ex_vs_nb}). In particular, Figure~\ref{fig:ex_vs_ab} compares the exact $G(\gamma)$ with the analytical bounds in~\eqref{eq:UB} and~\eqref{eq:LB}, where it is clear that both AUB and ALB yield sensible bounds, and AUB becomes much tighter as $\gamma$ increases. Figure~\ref{fig:ex_vs_sbn} compares the exact $G(\gamma)$ with the sampling-based numerical (SBN) approximation in~\eqref{eq:atavgs} (for $|\mathcal{L}_1|=10000$), where it is evident that the SBN approximation is consistent with the exact performance. Finally, Figure~\ref{fig:ex_vs_nb} compares the exact $G(\gamma)$ (for $\Lambda = 30$) with the threshold-based numerical bounds that are based on~\eqref{eq:Nltavg} and~\eqref{eq:Nutavg}, using $\rho=0.95$. Interestingly, the threshold-based NLB turns out to be very tight in the entire range of $\gamma$, whereas the NUB tends to move away from the exact performance as $\gamma$ increases.

\begin{figure}[t]
\centering
\includegraphics[ width=1\linewidth]{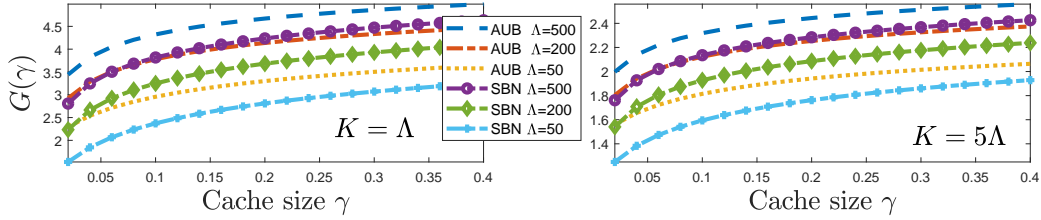} 
\caption{Analytical upper bound (AUB) from~\eqref{eq:UB} vs. sampling-based numerical (SBN) approximation from~\eqref{eq:atavgs} ($|\mathcal{L}_1|=10000$).}
\label{fig:aub_vs_sbn}
\end{figure}

Subsequently, for much larger dimensionalities, Figure~\ref{fig:aub_vs_sbn} compares the AUB from~\eqref{eq:UB} with the SBN approximation from~\eqref{eq:atavgs} for $|\mathcal{L}_1| = 10000$. Similarly, Figure~\ref{fig:alb_vs_sbn} compares the ALB from~\eqref{eq:LB} with the SBN approximation for $|\mathcal{L}_1| = 10000$. The figures highlight the extent to which the ratio $\frac{K}{\Lambda}$ affects the performance deterioration. 

Finally, Figure~\ref{fig:loadblnc} uses a suitably modified analytical upper bound to explore the effect of $h$ when applying proximity-bounded load balancing. We know from \eqref{eq:ubel} that the expected number of users in the most populous cache group (i.e, $E[l^h_1]$), when each user can be associated to any cache group with equal probability $\frac{h}{\Lambda}$ is bounded as
\begin{align}
 E[l^h_1]  \leq  K - \sum_{j=0}^{K-1} \max\left(1-\frac{\Lambda}{h}(1-P^{h}_j), 0\right),
\end{align}
where $P^{h}_j=\sum_{i=0}^{j} {K \choose i} \left(\frac{h}{\Lambda}\right)^i \left(1-\frac{h}{\Lambda}\right)^{K-i} $. Also, from \eqref{eq:bel1h}, the expected number of users in the most populous cache (i.e., $E[l_1]$) under proximity-bounded load balancing is bounded as $E[l_1] < \frac{E[l^h_1]}{h} +1$.
%\begin{align}
% E[l_1] < \frac{E[l^h_1]}{h} +1. 
 % E[l_1] < \frac{E[l^h_1]}{h} +1 = 1+ \frac{K}{h} - \frac{1}{h}\sum_{j=0}^{K-1} \max\left(1-\frac{\Lambda}{h}(1-P^{h}_j), 0\right). 
%\end{align}
Thus using \eqref{eq:TUB}, the analytical upper bound on the $\overline{T}^{*}_{\phi_p}(\gamma)$ is given by
\begin{align}\label{eq:UBP}
\overline{T}^{*}_{\phi_p}(\gamma) & \leq\frac{\Lambda-t}{1+t}  E[ l_{1} ] <\frac{\Lambda-t}{1+t}\left(\frac{E[ l^h_{1}]}{h} +1\right) \nonumber \\ 
&=\frac{\Lambda-t}{1+t}\left(1+ \frac{K}{h} - \frac{1}{h}\sum_{j=0}^{K-1} \max\left(1-\frac{\Lambda}{h}(1-P^{h}_j), 0\right)\right).
 \end{align} 
 From Figure~\ref{fig:loadblnc}, we can see that, as expected, the performance deterioration decreases as $h$ increases.
 \begin{figure}[t]
\centering
\includegraphics[ width=1\linewidth]{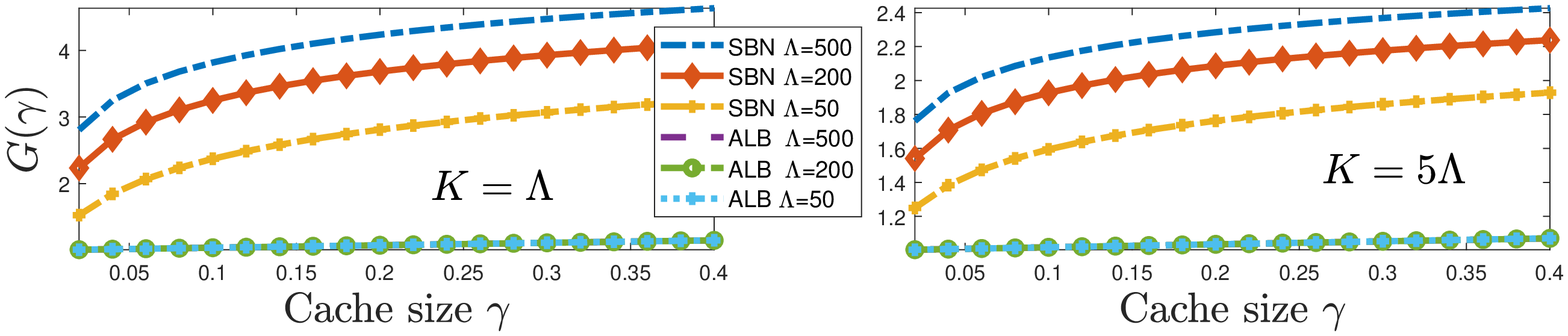} 
\caption{Analytical lower bound (ALB) from~\eqref{eq:LB} vs. sampling-based numerical (SBN) approximation from~\eqref{eq:atavgs} ($|\mathcal{L}_1|=10000$).}
\label{fig:alb_vs_sbn}
\end{figure}
\begin{figure}[t]
\centering
\includegraphics[width=1\linewidth]{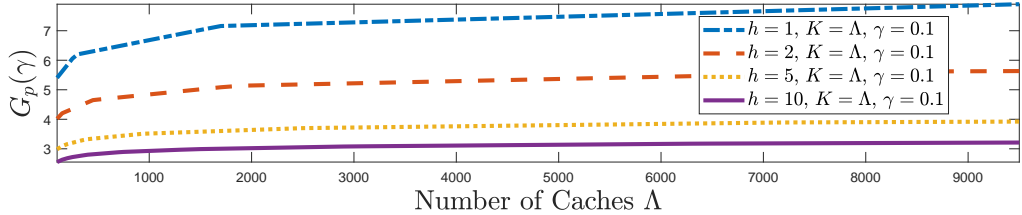} 
\caption{Analytical upper bound (AUB) from~\eqref{eq:UBP} without ($h=1$) and with ($h>1$) proximity-bounded load balancing.}
\label{fig:loadblnc}
\end{figure}

\section{Extension to the case of non-uniform cache population intensities}\label{Sec:NonUniform}
In this section, we extend our study to the scenario where cache population intensities (i.e, probability that a user can appear in the coverage area of any particular cache-enabled cell) are following a non-uniform distribution\footnote{All the results presented in this section are optimal for the case when the cache population intensities is not known during the cache placement phase.}. For any cache $\lambda \in [\Lambda]$, let $p_{\lambda}$ be the probability that a user can appear in the coverage area of $\lambda$th cache-enabled cell such that $\mathbf{p} =\left[p_1, p_2, \cdots, p_{\Lambda}\right]$, where $\sum_{\lambda \in [\Lambda]}p_{\lambda}=1$, denotes the cache population intensities vector.
\subsection{Analytical Bounds}
Considering the uncoded cache placement scheme in~\cite{man_it14}, and the delivery scheme in~\cite{parrinello_it20}, the following theorem bounds the average delay $\overline{T}(\gamma)$, when cache population intensities are following a non-uniform distribution.
%The following theorem bounds the average delay $\overline{T}(\gamma)$.
\begin{theorem}\label{th:UBLBnn}
In the $K$-user, $\Lambda$-cache setting with normalized cache size $\gamma$ and a random user-to-cache association with cache population intensities $\mathbf{p}$, the average delay $\overline{T}(\gamma)$ is bounded by 
\begin{align}\label{eq:UBtavgnn}
\overline{T}(\gamma) \leq K    \frac{\Lambda- t }{1+t}- \sum_{\lambda=1}^{\Lambda-t}    \frac{ {\Lambda-\lambda \choose t} }{{\Lambda \choose t}}\sum_{j=0}^{K-1}  \max \left(0,1- \frac{\Lambda- F(j)}{\lambda} \right),
\end{align}
and 
  \begin{align} \label{eq:LBtavgnn}
\overline{T}(\gamma)\geq \frac{\Lambda-t}{1+t} \left( \frac{K t \max(\mathbf{p}) }{(\Lambda-1) } +\frac{K}{\Lambda}  \frac{  (\Lambda-t-1)}{(\Lambda-1)} \right),
 \end{align}
\noindent where 
\begin{align}\label{eq:scdftnn}
F(j) = \sum_{k=1}^{\Lambda} \sum_{i=0}^{j} {K \choose i} \left(p_k\right)^i \left(1-p_k\right)^{K-i}.
 \end{align}
\end{theorem}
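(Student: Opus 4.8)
The plan is to bound the two ingredients of the average delay $\overline{T}(\gamma)=\sum_{\lambda=1}^{\Lambda-t}E[l_\lambda]\,\binom{\Lambda-\lambda}{t}/\binom{\Lambda}{t}$ (cf.~\eqref{eq:tavg}) by bounding the expected order statistics $E[l_\lambda]$. Since each user independently lands in cache $k$ with probability $p_k$, the marginal load is $v_k\sim\mathrm{Bin}(K,p_k)$, so $P(v_k\le j)=\sum_{i=0}^{j}\binom{K}{i}p_k^i(1-p_k)^{K-i}$ and hence $\sum_{k=1}^{\Lambda}P(v_k>j)=\Lambda-F(j)$, with $F(j)$ as in~\eqref{eq:scdftnn}. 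Note that independence across caches is not required: the upper bound uses only these marginals together with Markov's inequality, while the lower bound uses a deterministic sorting inequality followed by linearity of expectation.

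For the upper bound, I would write $E[l_\lambda]=\sum_{j=0}^{K-1}P(l_\lambda>j)$ (the tail-sum for a nonnegative integer variable bounded by $K$) and observe that $\{l_\lambda>j\}=\{N_j\ge\lambda\}$, where $N_j$ counts the caches with load exceeding $j$. Markov's inequality then gives $P(N_j\ge\lambda)\le\min\!\big(1,E[N_j]/\lambda\big)=\min\!\big(1,(\Lambda-F(j))/\lambda\big)$, and using $\min(1,x)=1-\max(0,1-x)$ yields $E[l_\lambda]\le K-\sum_{j=0}^{K-1}\max\!\big(0,1-(\Lambda-F(j))/\lambda\big)$. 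Substituting into $\overline{T}(\gamma)$ and applying the hockey-stick identity $\sum_{\lambda=1}^{\Lambda-t}\binom{\Lambda-\lambda}{t}=\binom{\Lambda}{t+1}$, so that $\sum_{\lambda=1}^{\Lambda-t}\binom{\Lambda-\lambda}{t}/\binom{\Lambda}{t}=(\Lambda-t)/(t+1)$, produces exactly~\eqref{eq:UBtavgnn}.

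For the lower bound, set $c_\lambda=\binom{\Lambda-\lambda}{t}/\binom{\Lambda}{t}$ (with $c_\lambda=0$ for $\lambda>\Lambda-t$), so that $\overline{T}(\gamma)=\sum_{\lambda=1}^{\Lambda}c_\lambda E[l_\lambda]$. The coefficients $c_\lambda$ and the sorted loads $l_\lambda$ are both nonincreasing in $\lambda$, so Chebyshev's sum inequality applied over $\lambda\ge2$ gives, for every realization, $\sum_{\lambda=2}^{\Lambda}c_\lambda l_\lambda\ge\frac{1}{\Lambda-1}\big(\sum_{\lambda=2}^{\Lambda}c_\lambda\big)(K-l_1)$. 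Taking expectations, adding back the $c_1E[l_1]$ term, and simplifying with $c_1=(\Lambda-t)/\Lambda$ and $\sum_{\lambda=2}^{\Lambda}c_\lambda=(\Lambda-t)(\Lambda-t-1)/(\Lambda(t+1))$, the $E[l_1]$ contributions collapse to the coefficient $t/(\Lambda-1)$, giving $\overline{T}(\gamma)\ge\frac{\Lambda-t}{t+1}\big(\frac{t}{\Lambda-1}E[l_1]+\frac{\Lambda-t-1}{\Lambda(\Lambda-1)}K\big)$. The argument is closed by the elementary bound $E[l_1]=E[\max_k v_k]\ge\max_k E[v_k]=K\max(\mathbf{p})$, which yields~\eqref{eq:LBtavgnn}.

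I expect the lower bound to be the delicate part: one must recognize that the weight vector and the sorted load vector are similarly ordered so that Chebyshev's sum inequality applies in the correct (lower-bounding) direction, and then carry out the combinatorial simplification that makes the $E[l_1]$ coefficient equal to $t/(\Lambda-1)$. A further point is that replacing $E[l_1]$ by the loose bound $K\max(\mathbf{p})$ --- rather than the tighter binomial-CDF expression used in the uniform case of Theorem~\ref{th:UBLB} --- is precisely what keeps the non-uniform bound clean and readily computable. The upper bound, by contrast, is essentially the routine Markov-plus-order-statistics argument already underlying Theorem~\ref{th:UBLB}.
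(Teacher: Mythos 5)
Your proposal is correct and follows essentially the same route as the paper: the upper bound comes from bounding $E[l_\lambda]$ via the tail-sum formula and the order-statistics CDF bound $P[l_\lambda \leq j] \geq \max\left(0,1-\frac{\Lambda-F(j)}{\lambda}\right)$ (which the paper imports from Caraux--Gascuel and you re-derive cleanly via Markov's inequality on the exceedance count $N_j$), followed by the Pascal column-sum identity; the lower bound comes from the same intermediate inequality $\overline{T}(\gamma)\geq \frac{\Lambda-t}{1+t}\left(\frac{t\,E[l_1]}{\Lambda-1}+\frac{K}{\Lambda}\frac{\Lambda-t-1}{\Lambda-1}\right)$ (which the paper inherits from its proof of Theorem~\ref{th:UBLB} and you re-derive via Chebyshev's sum inequality), closed with $E[l_1]\geq K\max(\mathbf{p})$ exactly as in the paper. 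The only difference is that your write-up is self-contained where the paper cites or reuses earlier results, and your algebraic simplifications check out.
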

\begin{proof} 
The proof is deferred to Appendix \ref{AP:UBLBnn}.
\end{proof}
Directly from Theorem~\ref{th:UBLBnn} and equation~\eqref{eq:tbc0}, we can conclude that for $\frac{K}{\Lambda}\in \mathbb{Z}^{+}$, the performance deterioration $G(\gamma)$ as compared to the deterministic uniform case, is bounded as
\begin{align}\label{eq:UBnn}
G(\gamma) \leq \Lambda- \frac{1+t}{K-K\gamma} \sum_{\lambda=1}^{\Lambda-t}    \frac{ {\Lambda-\lambda \choose t} }{{\Lambda \choose t}}\sum_{j=0}^{K-1}  \max \left(0,1- \frac{\Lambda- F(j)}{\lambda} \right),
\end{align}
and 
 \begin{align} \label{eq:LBnn}
G(\gamma) \geq   \frac{\Lambda t \max(\mathbf{p}) }{(\Lambda-1) } + \frac{  (\Lambda-t-1)}{(\Lambda-1)} ,
 \end{align}
\noindent where $F(j)$ is given in Theorem~\ref{th:UBLBnn}.
It is fast to numerically evaluate the analytical bound proposed in Theorem~\ref{th:UBLBnn} for any given distribution of cache population intensities $\mathbf{p}$. However, in order to gain some simple and insightful form of the performance in the presence of non-uniform cache population intensities, we proceed with the asymptotic analysis of the $\overline{T}(\gamma)$ under the assumption that cache population intensities $\mathbf{p}$ follows the Zipf distribution\footnote{There are several studies that propose different user distribution models (i.e., distribution of cache population intensities) for wireless networks~\cite{li_16, george_16, zhou_15}. We use the Zipf distribution as it nicely covers a wide range of non-uniform patterns by only tuning one parameter.}. For the Zipf distribution, cache population intensities $\mathbf{p}$ are given by\footnote{Without loss of generality, we assume a descending order between cache population intensities of the $\Lambda$ caches.}
\begin{equation}\label{eq:zipf}
p_{\lambda} = \frac{\lambda^{-\alpha}}{H_{\alpha}(\Lambda)}, \hspace{0.5cm} \forall \ \ \lambda \in [\Lambda],
\end{equation}
\noindent where $\alpha>0$ is the Zipf exponent, and $H_{\alpha}(\Lambda) = \sum_{i=1}^{\Lambda} i^{-\alpha}$ is a normalization constant formed as the generalized harmonic number.
%Without loss of generality, we assume a descending order between request  cache population intensities of the $\Lambda$ caches.
%Riemann zeta function.
 %[ref] Distribution of number of users per cell in a poisson wireless network with shadowing
%[ref] Heterogeneous cellular network user distribution model
%C. Li, A. Yongacoglu and C. D'Amours, "Heterogeneous cellular network user distribution model," 2016 8th IEEE Latin-American Conference on Communications (LATINCOM), Medellin, 2016, pp. 1-6, doi: 10.1109/LATINCOM.2016.7811561.
% [ref] Distribution of the Number of Users per Base Station in Cellular Networks
%G. George, A. Lozano and M. Haenggi, "Distribution of the Number of Users per Base Station in Cellular Networks," in IEEE Wireless Communications Letters, vol. 8, no. 2, pp. 520-523, April 2019, doi: 10.1109/LWC.2018.2878579.
%[ref] On the Spatial Distribution of Base Stations and Its Relation to the Traffic Density in Cellular Networks
%S. Zhou, D. Lee, B. Leng, X. Zhou, H. Zhang and Z. Niu, "On the Spatial Distribution of Base Stations and Its Relation to the Traffic Density in Cellular Networks," in IEEE Access, vol. 3, pp. 998-1010, 2015, doi: 10.1109/ACCESS.2015.2452576.
\subsection{Scaling Laws}
The following theorem provides the asymptotic analysis of the $\overline{T}(\gamma)$, in the limit of large $\Lambda$.
\begin{theorem}\label{th:gapnn}
In the $K$-user, $\Lambda$-caches setting with normalized cache size $\gamma$ and random user-to-cache association with cache population intensities $\mathbf{p}$ following the Zipf distribution with the Zipf exponent $\alpha$, the delay scales as
\begin{align} \label{eq:lawtavgnn}
\overline{T}(\gamma)= \begin{cases}
   \Theta\left( T_{min} \Lambda    \right) & \alpha > 1 \\
          O\left( T_{min}  \sqrt{\frac{\Lambda^2}{K}+ \frac{\Lambda^2}{(\log \Lambda)^2}} \right)  
   \ \ \ \text{and} \ \ \    \Omega\left(T_{min} \frac{\Lambda}{\log \Lambda}\right)            & \alpha =1  \\
       O\left(T_{min}   \sqrt{ \frac{\Lambda^2}{K}+\Lambda^{2\alpha}}\right)\ \ \ \text{and} \ \ \   \Omega\left(T_{min} \Lambda^{\alpha} \right)  & 0.5<\alpha <1  
                \\
        O\left( T_{min}  \sqrt{\frac{\Lambda^2}{K} + \Lambda\log\Lambda}\right)  \ \ \  \text{and} \ \ \   \Omega\left(T_{min} \sqrt{\Lambda}  \right) & \alpha =0.5          \\
		O\left(T_{min} \left( \sqrt{\Lambda+ \frac{\Lambda^2}{K} }\right)\right) \ \ \  \text{and} \ \ \   \Omega\left(T_{min} \Lambda^{\alpha}   \right)  & \alpha <0.5. 	
\end{cases}
\end{align}
% \begin{align} \label{eq:lawtavgnn}
% \overline{T}(\gamma)= \begin{cases}
%   \Theta\left( T_{min} \Lambda    \right) & \alpha > 1 \\
%           O\left( T_{min}  \sqrt{\frac{\Lambda^2}{K}+ \frac{\Lambda^2}{(\log \Lambda)^2}} \right)              & \alpha =1 \\
%   \ \ \ \text{and} \ \ \    \Omega\left(T_{min} \frac{\Lambda}{\log \Lambda}\right) \\
%       % \Theta\left( T_{min} \frac{\Lambda}{\log \Lambda} \right)              & \alpha =1 \\
%       %O\left(T_{min}   \sqrt{ \frac{\Lambda^2}{K}+\Lambda^{2\alpha}}\right) \text{and} \   \Omega\left(T_{min} \Lambda^{\alpha} \right)          & 0.5<\alpha <1 \\
%       O\left(T_{min}   \sqrt{ \frac{\Lambda^2}{K}+\Lambda^{2\alpha}}\right) & 0.5<\alpha <1  \\
%       \ \ \ \text{and} \ \ \   \Omega\left(T_{min} \Lambda^{\alpha} \right)          \\
%         %O\left( T_{min}  \sqrt{\frac{\Lambda^2}{K} + \Lambda\log\Lambda}\right)  \text{and} \    \Omega\left(T_{min} \sqrt{\Lambda}  \right)         & \alpha =0.5 \\
%         O\left( T_{min}  \sqrt{\frac{\Lambda^2}{K} + \Lambda\log\Lambda}\right)  & \alpha =0.5 \\ \ \ \  \text{and} \ \ \   \Omega\left(T_{min} \sqrt{\Lambda}  \right)         \\
% 	%O\left(T_{min} \left(1  + \frac{\Lambda}{\sqrt{K}}\right)\right) \text{and} \  \Omega\left(T_{min} \Lambda^{\alpha}   \right)	     & \alpha <0.5.
% 		O\left(T_{min} \left( \sqrt{\Lambda+ \frac{\Lambda^2}{K} }\right)\right)  & \alpha <0.5. \\ \ \ \  \text{and} \ \ \   \Omega\left(T_{min} \Lambda^{\alpha}   \right)	
% \end{cases}
% \end{align}
\end{theorem}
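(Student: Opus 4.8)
The plan is to obtain the lower bounds directly from the non-asymptotic bound~\eqref{eq:LBtavgnn} of Theorem~\ref{th:UBLBnn}, and to obtain the upper bounds from a second-moment (Cauchy--Schwarz) estimate applied to the exact expression $\overline{T}(\gamma)=\sum_{\lambda=1}^{\Lambda-t}E[l_\lambda]\,c_\lambda$, where I abbreviate $c_\lambda\triangleq\binom{\Lambda-\lambda}{t}/\binom{\Lambda}{t}$. Throughout I treat $\gamma$ as a fixed constant, so that $t=\Lambda\gamma=\Theta(\Lambda)$ and $T_{min}=\Theta(K/\Lambda)$. Everything then collapses onto the asymptotics of the generalized harmonic number $H_\beta(\Lambda)=\sum_{i=1}^\Lambda i^{-\beta}$, which is $\Theta(1)$ for $\beta>1$, $\Theta(\log\Lambda)$ for $\beta=1$, and $\Theta(\Lambda^{1-\beta})$ for $0<\beta<1$; this is the only external fact needed about the Zipf law.

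For the lower bounds I would substitute $\max(\mathbf p)=p_1=1/H_\alpha(\Lambda)$ into~\eqref{eq:LBtavgnn}. Using $\frac{\Lambda-t}{1+t}=\frac{\Lambda T_{min}}{K}$ together with $t=\Theta(\Lambda)$, the first summand dominates and gives $\overline{T}(\gamma)=\Omega\!\big(T_{min}\,\Lambda/H_\alpha(\Lambda)\big)$. Inserting the harmonic-number asymptotics immediately yields $\Omega(T_{min}\Lambda)$ for $\alpha>1$, $\Omega(T_{min}\Lambda/\log\Lambda)$ for $\alpha=1$, and $\Omega(T_{min}\Lambda^\alpha)$ for $\alpha<1$ (specializing to $\Omega(T_{min}\sqrt\Lambda)$ at $\alpha=0.5$), matching every lower bound in~\eqref{eq:lawtavgnn}.

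The upper bounds are the real work. I would first write $\overline{T}(\gamma)=T_{min}+\sum_{\lambda=1}^{\Lambda-t}\big(E[l_\lambda]-\tfrac{K}{\Lambda}\big)c_\lambda$, valid because $\tfrac{K}{\Lambda}\sum_\lambda c_\lambda=T_{min}$ by the hockey-stick identity. Cauchy--Schwarz then bounds the residual by $\sqrt{\sum_\lambda(E[l_\lambda]-K/\Lambda)^2}\cdot\sqrt{\sum_\lambda c_\lambda^2}$. The second factor is $O(1)$: since $c_\lambda/c_{\lambda-1}=1-\frac{t}{\Lambda-\lambda+1}\le 1-\gamma$ one has $c_\lambda\le(1-\gamma)^\lambda$, so $\sum_\lambda c_\lambda^2$ is a convergent geometric series for constant $\gamma$. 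For the first factor I would use that sorting is a permutation, hence $\sum_\lambda l_\lambda^2=\sum_k v_k^2$ with $v_k\sim\mathrm{Bin}(K,p_k)$, together with $(E[l_\lambda])^2\le E[l_\lambda^2]$, to get $\sum_\lambda(E[l_\lambda]-K/\Lambda)^2\le\sum_k E[v_k^2]-K^2/\Lambda=K+(K^2-K)\sum_k p_k^2-K^2/\Lambda$. Writing $\sum_k p_k^2=H_{2\alpha}(\Lambda)/H_\alpha(\Lambda)^2$ and applying the harmonic-number asymptotics turns this into $K+K^2/(\log\Lambda)^2$, $K+K^2\Lambda^{2\alpha-2}$, $K+K^2\log\Lambda/\Lambda$, and $K+K^2/\Lambda$ for $\alpha=1,\ 0.5<\alpha<1,\ \alpha=0.5,\ \alpha<0.5$ respectively. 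Taking the square root and using $T_{min}=\Theta(K/\Lambda)$ (so that $\sqrt K=\Theta(T_{min}\sqrt{\Lambda^2/K})$ and $K\sqrt X=\Theta(T_{min}\sqrt{\Lambda^2 X})$) rewrites each estimate as precisely the claimed $T_{min}\sqrt{\cdot}$ upper bound; for $\alpha>1$ the crude bound $\overline{T}(\gamma)\le K(1-\gamma)=(1+t)T_{min}=O(T_{min}\Lambda)$ already meets the lower bound and gives $\Theta(T_{min}\Lambda)$.

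I expect the main obstacle to be controlling $\sum_\lambda(E[l_\lambda]-K/\Lambda)^2$, because the $E[l_\lambda]$ are expectations of \emph{order statistics} of the non-identically-distributed loads $v_k$, not of the loads themselves. The key trick that avoids any delicate analysis of the ordering is that sorting preserves $\sum_\lambda l_\lambda^2$, which lets me replace the order-statistic second moment by the elementary sum $\sum_k E[v_k^2]$ at the cost of a single Jensen step. A secondary point requiring care is the $-K^2/\Lambda$ correction: it is exactly where the regimes separate, since $\sum_k p_k^2-1/\Lambda$ stays of the same order as $\sum_k p_k^2$ for $\alpha>0.5$ but collapses to $\Theta(1/\Lambda)$ once $\alpha\le 0.5$, which is the structural reason the clean $\Theta$ characterization degrades into separated $O$/$\Omega$ bounds below $\alpha=0.5$.
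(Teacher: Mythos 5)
Your proposal is correct, and for the upper bound it takes a genuinely different route from the paper. The paper first reduces everything to the single order statistic $E[l_1]$ via the relation $\overline{T}(\gamma)=\Theta\left(T_{min}\frac{E[l_1]\Lambda}{K}\right)$ from \eqref{eq:gapt}, and then upper-bounds $E[l_1]$ by citing an external order-statistics result (\cite[Proposition 1]{gascuel_92}), which gives $E[l_1]\leq \bar{\mu}+\sqrt{\frac{\Lambda-1}{\Lambda}\sum_{i}\left(\sigma_i^2+(\mu_i-\bar{\mu})^2\right)}$; the lower bound comes from $E[l_1]\geq K\max(\mathbf{p})=K/H_{\alpha}(\Lambda)$. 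Your lower bound is the same content, merely routed through \eqref{eq:LBtavgnn}, whose proof in the paper rests on exactly that inequality. Your upper bound, by contrast, is self-contained: splitting off $T_{min}$, applying Cauchy--Schwarz against the geometrically decaying coefficients $c_\lambda\leq(1-\gamma)^{\lambda}$, and converting the order-statistic second moment into $\sum_k E[v_k^2]$ via permutation invariance plus Jensen replaces the external lemma with elementary steps. It is worth noting that the two arguments are quantitatively identical at their core: the Gascuel bound's inner sum $\sum_i\left(\sigma_i^2+(\mu_i-\bar{\mu})^2\right)$ expands to precisely your quantity $K+(K^2-K)\sum_k p_k^2-K^2/\Lambda$, which is why both routes land on the same case-by-case expressions after inserting $\sum_k p_k^2=H_{2\alpha}(\Lambda)/H_{\alpha}(\Lambda)^2$ and the harmonic-number asymptotics \eqref{eq:H}--\eqref{eq:H2r}. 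What each buys: yours avoids any dependence on \cite{gascuel_92} and exploits the full weighted sum rather than the crude bound $E[l_\lambda]\leq E[l_1]$; the paper's is shorter because it reuses the $E[l_1]$ reduction machinery already established for the uniform case in Appendix~\ref{AP:gap}. Two cosmetic points to tighten: the identity $\frac{K}{\Lambda}\sum_{\lambda=1}^{\Lambda-t}c_\lambda=T_{min}$ is exact only when $\Lambda$ divides $K$ (otherwise it is a lower bound on $T_{min}$ by \eqref{eq:tbc}, which still yields the inequality you need), and in the residual bound one should note that restricting the sum $\sum_\lambda(E[l_\lambda]-K/\Lambda)^2$ to $\lambda\leq\Lambda-t$ before extending it to all $\Lambda$ terms is valid because every summand is non-negative.
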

\begin{proof} 
The proof is deferred to Appendix \ref{AP:gapnn}.
\end{proof}
Directly from the above, we now know that when the cache population intensities $\mathbf{p}$ follows the Zipf distribution, the performance deterioration due to user-to-cache association randomness, scales as
\begin{align} \label{eq:egapnn}
\!\!\!G(\gamma)\!= \begin{cases}
   \Theta\left( \Lambda    \right) & \alpha > 1 \\
    O\left(\!\sqrt{\frac{\Lambda^2}{K}\!+\! \frac{\Lambda^2}{(\log \Lambda)^2}} \right)   \text{and} \  \Omega\left(\frac{\Lambda}{\log \Lambda}\right)     & \alpha =1 \\
       % \Theta\left(\frac{\Lambda}{\log \Lambda} \right)              & \alpha =1 \\
           O\left(   \sqrt{ \frac{\Lambda^2}{K}+\Lambda^{2\alpha}}\right) \text{and} \  \Omega\left( \Lambda^{\alpha} \right)   & 0.5<\alpha <1  \\
             O\left(   \sqrt{\frac{\Lambda^2}{K} + \Lambda\log\Lambda}\right) \text{and} \   \Omega\left( \sqrt{\Lambda}  \right)  & \alpha =0.5           \\
	O\left(  \sqrt{\Lambda+ \frac{\Lambda^2}{K} }\right) \text{and} \  \Omega\left( \Lambda^{\alpha}   \right)  & \alpha <0.5.	    
\end{cases}
\end{align}

In identifying the scaling laws of the problem, Theorem~\ref{th:gapnn} nicely captures the following points:
\begin{itemize}
    \item It describes to what extent the performance deterioration increases with $\alpha$ (i.e., the skewness in cache population intensities).
    \item It shows that, in some cases there is no global caching gain, e.g., the performance deterioration scales as $\Theta\left(\Lambda\right)$ for $\alpha>1$.
    \item It reveals that unlike the case of uniform cache population intensities -- where the deterioration can be avoided as long as $K=\Omega\left(\Lambda\log \Lambda\right)$ -- the existence of skewness in cache population intensities can lead to an unbounded deterioration irrespective of the relation between $K$ and $\Lambda$.%, e.g., when $\alpha\geq 1$. 
    \item It highlights the importance of incorporating the knowledge of the cache population intensities vector while designing the placement and delivery scheme. As pointed out earlier, being unaware of the severeness of this non-uniformity may lead to the vanishing of the coding gain, and the system may eventually need to confine itself to the local caching gain.
\end{itemize} 

\subsection{Randomized load balancing with multiple choices under non-uniform cache population intensities}\label{Sec:51}
 We consider a load balancing method $\phi_n$ which, for any given user, picks $h\geq 2$ candidate caches at random based on the cache population intensities $\mathbf{p}$ following the Zipf distribution, and then associates each such user to the least loaded cache among these $h$ caches. The performance of this method is presented in the following result, for the limiting case of large $\Lambda$. 
\begin{theorem}\label{th:gapdnn}
In the $K$-user, $\Lambda$-cell heterogeneous network with normalized cache size $\gamma$, where each user benefits from the least loaded cache among $h$ randomly chosen caches based on the cache population intensities $\mathbf{p}$ with the Zipf exponent $\alpha$, the limiting delay converges to 
\begin{align} \label{eq:lawtavgdnn}
\!\!\overline{T}_{\phi_n}(\gamma)\!\! =\!
\begin{cases} 
\!O \left( T_{min} \frac {\Lambda \log \log \Lambda}{K} \right) \text{ \emph{if} }\! K\!\! =\!o\left(\Lambda \log \log \Lambda \right) \\
\!O \left(T_{min}\right) \hspace{1.5cm} \text{ \emph{if} }\! K\!\! =\!\Omega\left(\Lambda \log \log \Lambda\right),
\end{cases}
\end{align}
when $h=\Theta\left(\log \Lambda\right)$.
\end{theorem}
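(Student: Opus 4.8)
The plan is to reduce the delay to the expected maximum cache load and then bound that load for the Zipf multiple-choice process $\phi_n$. Exactly as for $\phi_p$ in \eqref{eq:TUB}, employing the optimal scheme of Theorem~\ref{th:egap} gives $\overline{T}_{\phi_n}(\gamma) \le \frac{\Lambda-t}{1+t}\,E[l_1]$, where $l_1$ is the load of the most populous cache. Since $T_{min} = \frac{K(1-\gamma)}{1+t} = \frac{K}{\Lambda}\cdot\frac{\Lambda-t}{1+t}$, we have $\frac{\Lambda-t}{1+t} = \frac{T_{min}\Lambda}{K}$, so everything reduces to the single load estimate
\begin{equation}\label{eq:planload}
E[l_1] = O\!\left(\frac{K}{\Lambda} + \log\log\Lambda\right)\qquad\text{when } h=\Theta(\log\Lambda).
\end{equation}
Once \eqref{eq:planload} is in hand, the two cases of \eqref{eq:lawtavgdnn} follow by comparing $K/\Lambda$ with $\log\log\Lambda$: if $K = o(\Lambda\log\log\Lambda)$ the second term dominates and $\frac{T_{min}\Lambda}{K}E[l_1] = O\!\left(T_{min}\frac{\Lambda\log\log\Lambda}{K}\right)$, whereas if $K=\Omega(\Lambda\log\log\Lambda)$ the first term dominates and the bound collapses to $O(T_{min})$. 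Optimality is not asserted here (the statement is an $O(\cdot)$ upper bound), so only the achievable delay expression is needed.

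The estimate \eqref{eq:planload} is a balanced-allocations statement for a \emph{non-uniform} sampling law, and the entire difficulty lies in the skewness of the Zipf weights $p_\lambda$. I would handle this by splitting the caches into a \emph{heavy} set $\mathcal{H}$ consisting of the $m$ most probable caches and a near-uniform \emph{light} tail $\mathcal{T}=[\Lambda]\setminus\mathcal{H}$, choosing $m$ so that (i) $|\mathcal{T}|=\Theta(\Lambda)$, (ii) $\max_{\lambda\in\mathcal{T}}p_\lambda$ is within a constant factor of the average weight over $\mathcal{T}$, and (iii) with $h=\Theta(\log\Lambda)$ independent Zipf draws each user obtains at least one candidate in $\mathcal{T}$ with high probability. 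The binding regime is $\alpha>1$, where the tail carries only mass $P(\mathcal{T})=\Theta(m^{1-\alpha})$; imposing $h\,P(\mathcal{T})=\Omega(1)$ forces $m = O\!\big((\log\Lambda)^{1/(\alpha-1)}\big)=\polylog(\Lambda)$, which is precisely the point where the scaling $h=\Theta(\log\Lambda)$ becomes necessary: with only a constant number of choices some users would draw all candidates inside $\mathcal{H}$ and the balancing would fail. For $\alpha\le 1$ the tail carries a constant fraction of the mass and (iii) holds with room to spare, which is why the worst case over $\alpha$ is exactly the $O(\log\log\Lambda)$ of \eqref{eq:planload}.

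Given this decomposition, I would finish with a layered-induction / multiple-choice argument. Conditioning on the high-probability event that every user has a light candidate, the allocation restricted to $\mathcal{T}$ behaves like a near-uniform allocation over $\Theta(\Lambda)$ bins with a constant ($\ge 2$) number of effective choices, so the uniform multiple-choice analysis underlying Theorem~\ref{th:gapd} (Appendix~\ref{AP:gapd}) yields a maximum light-load of $O(K/\Lambda+\log\log\Lambda)$. It then remains to control the heavy caches: a user is assigned to a cache in $\mathcal{H}$ only when that cache is least loaded among its $h$ candidates, hence no more loaded than any of its light candidates at the moment of placement. Treating the (only $\polylog(\Lambda)$ many) heavy caches as extra bins in the layered induction, and using that each heavy placement still requires all $h$ candidates to sit at high layers, I would couple the heavy heights to the light heights and conclude that no heavy cache exceeds the light maximum by more than an additive $O(\log\log\Lambda)$. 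Combining the two bounds gives \eqref{eq:planload}, and a union bound over the $O(K/\Lambda+\log\log\Lambda)$ height layers converts the high-probability load bounds into the stated bound on $E[l_1]$, completing the proof.

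The hard part will be making the heavy/light split uniform in $\alpha$ and fully rigorous: establishing the concentration that lets one treat $\mathcal{T}$ as near-uniform so that the Theorem~\ref{th:gapd}-type recursion applies, verifying that $h=\Theta(\log\Lambda)$ is exactly the threshold guaranteeing each user reaches $\mathcal{T}$ in the thin-tail regime $\alpha>1$, and carrying out the coupling that prevents the heavy caches from accumulating more than an extra $\log\log\Lambda$ layers. The underlying layered-induction recursion $\beta_{i+1}\lesssim K\,P(S_i)^{h}$ — with $P(S_i)$ the Zipf mass of the set $S_i$ of caches at height $\ge i$ — is the workhorse, but the naive worst-case bound $P(S_i)\le H_\alpha(\beta_i)/H_\alpha(\Lambda)$ is far too loose for $\alpha>1$ (it pretends the loaded caches are the heaviest ones, which load balancing actively prevents); replacing it by the heavy/light structure described above is what makes the recursion contract double-exponentially and delivers the $O(\log\log\Lambda)$ additive term.
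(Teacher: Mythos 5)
Your reduction of the delay to the single load estimate $E[l_1]=O\left(K/\Lambda+\log\log\Lambda\right)$ is exactly the paper's (it is \eqref{eq:TUB} combined with \eqref{eq:gapt}); where you part ways is that the paper obtains this estimate in one stroke by invoking Wieder's non-uniform balls-into-bins theorem (Lemma~\ref{le:ballbinnn}) and then checking that the number of choices that theorem requires evaluates to $\Theta(\log\Lambda)$ under Zipf, whereas you attempt to re-prove the allocation bound from scratch via a heavy/light decomposition plus layered induction. The from-scratch route is workable for $\alpha<1$: there the light set $\mathcal{T}=\{\lambda: p_\lambda\leq C/\Lambda\}$ is a constant fraction of the caches, carries constant mass, and is uniform up to constant factors, so $h=\Theta(\log\Lambda)$ draws give every one of the $K$ users a light candidate w.h.p., and your coupling of heavy heights to light heights goes through. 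But in the regime you yourself single out as binding, $\alpha>1$, your three requirements (i)--(iii) are mutually incompatible, and this is a genuine gap. Your quantification of (iii), $hP(\mathcal{T})=\Omega(1)$, does not give ``each user obtains a light candidate with high probability'': it leaves each user a \emph{constant} probability $e^{-\Theta(hP(\mathcal{T}))}$ of drawing all $h$ candidates inside $\mathcal{H}$, hence w.h.p.\ $\Theta(K)$ users whose placement is necessarily confined to the $m=\polylog(\Lambda)$ heavy caches --- no coupling can move them out --- forcing $E[l_1]=\Omega\left(K/\polylog(\Lambda)\right)$, which at $K=\Theta(\Lambda)$ swamps $K/\Lambda+\log\log\Lambda$. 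Repairing (iii) by a union bound over users needs $hP(\mathcal{T})=\Omega(\log K)$, i.e.\ $P(\mathcal{T})=\Omega(1)$ when $h=\Theta(\log\Lambda)$ and $K=\mathrm{poly}(\Lambda)$; for $\alpha>1$ this forces $m=O(1)$, and then the weights inside $\mathcal{T}$ span a factor $\Theta(\Lambda^{\alpha})$, so the near-uniformity (ii) needed to run the Theorem~\ref{th:gapd}-type analysis on $\mathcal{T}$ fails. No choice of $m$ satisfies (ii) and (iii) simultaneously.

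Worse, the obstruction is intrinsic to the statement, not merely to your method. Take $j^{*}=\Theta\left((\log\Lambda)^{1/(\alpha-1)}\right)$, so that the Zipf tail mass beyond $j^{*}$ is $\Theta(1/\log\Lambda)=\Theta(1/h)$; then each user draws all $h$ candidates inside $\{1,\dots,j^{*}\}$ with constant probability, w.h.p.\ $\Theta(K)$ users are confined to those $j^{*}$ caches, and therefore $E[l_1]=\Omega\left(K/(\log\Lambda)^{1/(\alpha-1)}\right)$, contradicting the target bound at $K=\Theta(\Lambda)$. This is precisely the skewness that the hypothesis of Lemma~\ref{le:ballbinnn} is designed to exclude: evaluating Wieder's requirement for Zipf with $\alpha>1$ (where $a=\Theta(\Lambda^{\alpha-1})$ and $b=\Theta(\Lambda)$) gives $h=\Theta\left(\Lambda^{\alpha-1}\log\Lambda\right)$, not $\Theta(\log\Lambda)$; the paper's step (a) in Appendix~\ref{AP:gapdnn} treats $\log\left(XH_{\alpha}(\Lambda)\right)$ as $\Theta(1)$ even though $XH_{\alpha}(\Lambda)\to 1$ for $\alpha\geq 1$, so that logarithm is actually $\Theta(\Lambda^{1-\alpha})$ there (and $\Theta(1/\log\Lambda)$ at $\alpha=1$). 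Consequently your plan can only be completed for $\alpha<1$ as stated; for $\alpha\geq 1$ either $h$ must grow faster than $\log\Lambda$ or the claimed $O\left(K/\Lambda+\log\log\Lambda\right)$ load estimate must be abandoned --- that is a correction to the scope of the theorem itself, not a fixable step in your argument.
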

\begin{proof}
The proof is deferred to Appendix~\ref{AP:gapdnn}.
\end{proof}
The above theorem naturally implies that, if $h$ is in the same order as $\log \Lambda$ then the performance deterioration, due to random association, scales as 
%The above theorem naturally implies that for the given Zipf exponent $\alpha$, if $h$ scales according to \eqref{eq:hnn} then the performance deterioration, due to random association, scales as 
\begin{align} \label{eq:egapdnn}
G_n(\gamma)=\begin{cases} 
\!O \left(  \frac {\Lambda \log \log \Lambda}{K} \right) \text{ \emph{if} }\! K\!\! =\!o\left(\Lambda \log \log \Lambda \right) \\
\!O \left(1\right) \hspace{1.5cm} \text{ \emph{if} }\! K\!\! =\!\Omega\left(\Lambda \log \log \Lambda\right).
\end{cases}  
\end{align}
We can see that load balancing can dramatically ameliorate the random association effect. For example, when $\alpha>1$, picking any $\log \Lambda $ candidate caches is sufficient to tilt the scaling of $G(\gamma)$ from $\Theta \left( \Lambda \right)$ to a much slower $O \left( \log \log \Lambda \right)$ and $O \left( 1 \right)$, when $K=\Theta(\Lambda)$ and $K =\Omega\left(\Lambda \log \log \Lambda\right)$ respectively. As long as $h$ is in the same order as $\log \Lambda$, significant improvements can be achieved irrespective of the level of skewness of the cache population intensities. In conclusion, even for the non-uniform cache population intensities, load balancing can still be impactful. However, for non-uniform cache population intensities setting $h=2$ may not bring significant gains which were observed for the case of uniform cache population intensities case (cf. Corollary~\ref{c:gapd} ) as now $h$ must be in the order of $\log \Lambda$.

\begin{figure}[t]
\centering
\includegraphics[ width=1\linewidth]{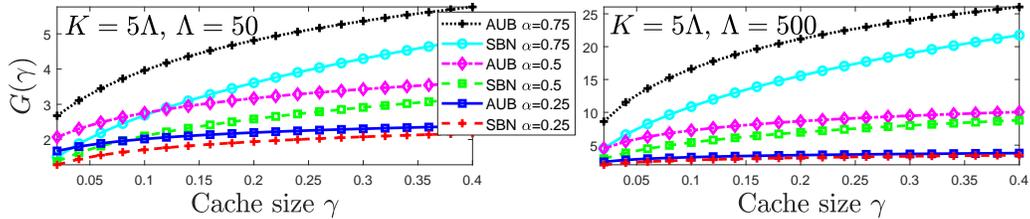} 
\caption{Analytical upper bound (AUB) from~\eqref{eq:UBnn} vs. sampling-based numerical (SBN) approximation from~\eqref{eq:atavgs} ($|\mathcal{L}_1|=10000$).}
\label{fig:nn_aub_vs_sbn}
\end{figure}
\begin{figure}[t]
\centering
\includegraphics[ width=1\linewidth]{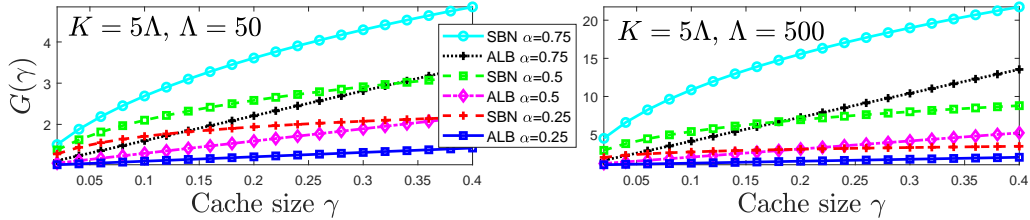} 
\caption{Analytical lower bound (ALB) from~\eqref{eq:LBnn} vs. sampling-based numerical (SBN) approximation from~\eqref{eq:atavgs} ($|\mathcal{L}_1|=10000$).}
\label{fig:nn_alb_vs_sbn}
\end{figure}
\subsection{Numerical validation for the non-uniform cache population intensities}\label{Sec:34}
 We now numerically validate our results for the case of non-uniform cache population intensities. For the numerical analysis, we assume that cache population intensities $\mathbf{p}$ follows the Zipf distribution. Figure~\ref{fig:nn_aub_vs_sbn} compares the AUB from~\eqref{eq:UBnn} with the SBN approximation from~\eqref{eq:atavgs} for $|\mathcal{L}_1| = 10000$. Note that $\mathcal{L}_1$ is generated based on cache population intensities $\mathbf{p}$. Similarly, Figure~\ref{fig:nn_alb_vs_sbn} compares the ALB from~\eqref{eq:LBnn} with the SBN approximation for $|\mathcal{L}_1| = 10000$. The figures highlight the extent to which the Zipf exponent $\alpha$ (i.e., the skewness in cache population intensities) affects the performance deterioration.

\section{Conclusions}\label{Sec:6}
In this work we identified the exact optimal performance of coded caching with random user-to-cache association when users can appear in the coverage area of any particular cache-enabled cell with equal probability. In our opinion, the random association problem has direct practical ramifications, as it captures promising scenarios (such as the heterogeneous network scenario) as well as operational realities (namely, the subpacketization constraint). The problem becomes even more pertinent as we now know that its effect can in fact scale indefinitely.

Key to our effort to identify the effect of association randomness, has been the need to provide expressions that can either be evaluated in a numerically tractable way, or that can be rigorously approximated in order to yield clear insight. The first part was achieved by deriving exact expressions as well as new analytical bounds that can be evaluated directly, while the second part was achieved by studying the asymptotics of the problem which yielded simple performance expressions and direct operational guidelines. This same approach also allowed us to clearly capture the effect and importance of basic load balancing techniques that are used to mitigate the detrimental effect coming from the aforementioned randomness in user-to-cache association.

Finally, we extended our analysis for the case where cache population intensities are following a non-uniform distribution. We provided analytical bounds and studied the asymptotics of the problem. Also, we show that load balancing can help mitigating the effect of randomness even when cache population intensities are non-uniform.

\appendix
%\section{Appendix}
%\renewcommand\theequation{\Alph{section}.\arabic{equation}}
%\setcounter{equation}{0}

\subsection{Proof of Theorem \ref{th:egap}} \label{AP:egap}
We first note that the probability $P(\mathbf{L})$ of observing a specific profile vector $\mathbf{L}\in \mathcal{L}$ is simply the cumulative probability over all $\mathbf{V}$ for which $sort(\mathbf{V}) = \mathbf{L}$. This probability takes the form
\begin{align}\label{eq:pl} 
P(\mathbf{L}) =\overbrace{ \frac{1}{\Lambda^{K}}  \ \times  \frac{K!}{\prod _{i=1}^{\Lambda}l_i!} }^\text{term 1} \    \   \  \times \overbrace{     \     \frac{\Lambda!}{\prod _{j=1}^{\left|B_L\right|}b_j!}      }^\text{term 2}.
\end{align}

To see this, we analyze the different terms of the above equation. The first term in~\eqref{eq:pl} accounts for the fact that there are $\Lambda^{K}$ different user-to-cache associations, i.e., there are $\Lambda^{K}$ different ways that $K$ users can be allocated to the $\Lambda$ different caches. It also accounts for the fact that each user can be associated to any one particular cache, with equal probability $\frac{1}{\Lambda}$. 
The second term in~\eqref{eq:pl} indicates the number of all user-to-cache associations that leads\footnote{Recall that different user-to-cache associations can lead to the same cache population vector $\mathbf{V}$. For example, when $K= \Lambda=3$, the following $6$ user-to-cache associations, $\left[1,2,3\right]$, $\left[1,3,2\right]$,  $\left[2,1,3\right]$, $\left[2,3,1\right]$,  $\left[3,2,1\right]$, and  $\left[3,1,2\right]$ --- each describing which user is associated to which cache --- in fact all correspond to the same $\mathbf{V}= \left[1, 1, 1 \right]$, because always each cache is associated to one user.} to a specific $\mathbf{V}$ for which $sort(\mathbf{V})=\mathbf{L}$, for some fixed $\mathbf{L}$. 
Consequently term 1 in~\eqref{eq:pl} is simply $P(\mathbf{V})$, which naturally remains fixed for any $\mathbf{V}$ for which $sort(\mathbf{V})= \mathbf{L}$, and which originates from the well-known probability mass function of the multinomial distribution. Consequently this implies that $P(\mathbf{L})= |\{ \mathbf{V} \!:\! sort(\mathbf{V})= \mathbf{L} \}|\times P(\mathbf{V})$.  Finally, term $2$ describes the number of all possible cache population vectors $\mathbf{V}$ for which $sort(\mathbf{V})$ is equal to some fixed $\mathbf{L}$.

We now proceed to insert~\eqref{eq:pl} into~\eqref{eq:tavg}, which yields the average delay 
\begin{align}
E_{\mathbf{L}}[T(\mathbf{L})]&=  \sum_{\lambda=1}^{\Lambda-t} \sum_{\mathbf{L} \in \mathcal{L} }P(\mathbf{L}) l_{\lambda}   \frac{ {\Lambda-\lambda \choose t} }{{\Lambda \choose t}}  =  \sum_{\lambda=1}^{\Lambda-t} \sum_{\mathbf{L} \in \mathcal{L}} \frac{l_{\lambda} K! \Lambda! }{\Lambda^{K} \prod _{i=1}^{\Lambda}l_i! \prod _{j=1}^{\left|\mathbf{B}_{\mathbf{L}}\right|}b_j!  }  \frac{ {\Lambda-\lambda \choose t} }{{\Lambda \choose t}} \nonumber \\
&= \sum_{\lambda=1}^{\Lambda-t} \sum_{\mathbf{L} \in \mathcal{L}} \frac{K! \  t! \  (\Lambda-t)! \ l_{\lambda}  {\Lambda-\lambda \choose t}}{ \Lambda^{K}\prod _{i=1}^{\Lambda}l_i! \prod _{j=1}^{\left|\mathbf{B}_{\mathbf{L}}\right|}b_j! },
\end{align}
which concludes the achievability part of the proof for the expression in Theorem~\ref{th:egap}. 

Optimality of the aforementioned expression can be proved by means of the lower bound developed in \cite{parrinello_it20}. We notice that the optimal delay $\overline{T}^*(\gamma)$ can be lower bounded as
\begin{align}\nonumber
\overline{T}^*(\gamma)&=\min_{\mathcal{X}} E_{\mathbf{L}}\left[
                     E_{\mathbf{V}_{\mathbf{L}}}\left[  \max_{\mathbf{d}}T(\mathbf{\mathbf{V},\mathbf{d}},\mathcal{X})\right]\right] \geq \min_{\mathcal{X}} E_{\mathbf{L}}\left[
                  \max_{\mathbf{d}} E_{\mathbf{V}_{\mathbf{L}}}\left[  T(\mathbf{\mathbf{V},\mathbf{d}},\mathcal{X})\right]\right]\\
&\geq E_{\mathbf{L}}\left[\min_{\mathcal{X}}
                  \max_{\mathbf{d}} E_{\mathbf{V}_{\mathbf{L}}}\left[  T(\mathbf{\mathbf{V},\mathbf{d}},\mathcal{X})\right]\right]\geq E_{\mathbf{L}}\bigg[\underbrace{\min_{\mathcal{X}}
                  E_{\mathbf{d}\in \mathcal{D}_{wc}} E_{\mathbf{V}_{\mathbf{L}}}\left[  T(\mathbf{\mathbf{V},\mathbf{d}},\mathcal{X})\right]}_{T^*(\mathbf{L})}\bigg],    \label{eq:optim1}              
\end{align}
where $\mathcal{D}_{wc}$ denoted the set of demand vectors with distinct users' file-requests. Next, exploiting the fact that  $P(\mathbf{V})$ is the same for any $\mathbf{V}$ for which $sort(\mathbf{V})=\mathbf{L}$, we notice that 
\[T^*(\mathbf{L}) \triangleq \min_{\mathcal{X}}
                  E_{\mathbf{d}\in \mathcal{D}_{wc}} E_{\mathbf{V}_{\mathbf{L}}}\left[  T(\mathbf{\mathbf{V},\mathbf{d}},\mathcal{X})\right] \] is lower bounded by equation $(53)$ in~\cite{parrinello_it20}, which then proves that $T^*(\mathbf{L})$ is bounded as
\begin{align}\label{eq:TLopt}
 T^*(\mathbf{L}) \geq    \sum_{\lambda=1}^{\Lambda-t} l_{\lambda} \frac{{\Lambda-\lambda \choose t} }{{\Lambda \choose t}}.
 \end{align}
This concludes the proof for the optimality of the delivery time in Theorem \ref{th:egap}.

\subsection{Proof of Theorem \ref{th:UBLB}} \label{AP:UBLB}
We start our proof by deriving the expected number of users in the $\lambda$-th most populous cache (i.e., $E[l_{\lambda} ]$), which is given by
\begin{align}\label{eq:el}
E[l_{\lambda}] &= \sum_{j=0}^{K-1} P[l_{\lambda} > j]  = \sum_{j=0}^{K-1} \left(1- P[l_{\lambda} \leq j] \right) = K- \sum_{j=0}^{K-1}  P[l_{\lambda} \leq j].
\end{align}
where $P[l_{\lambda}\leq j]$ is the probability that $\lambda$-th most populous cache is associated to no more than $j$ requesting users. From \cite[Proposition 2]{caraux_92}, we have 
\begin{align}\label{eq:aprub}
   P[l_{\lambda} \leq j] \geq \max\left(1-\frac{\Lambda}{\lambda}(1-P_j), 0\right),
 \end{align}
 where $P_j$ is the probability that a cache is associated to no more than $j$ requesting users. Recalling that each user can be assigned to any particular cache with equal probability, we can conclude that $P_j$ is given as
 \begin{align}\label{eq:cdf}
P_j  =\sum_{i=0}^{j} {K \choose i} \left(\frac{1}{\Lambda}\right)^i \left(1-\frac{1}{\Lambda}\right)^{K-i}.
 \end{align}
$E[l_{\lambda}]$ is upper bounded by 
\begin{align}\label{eq:ubel}
E[l_{\lambda}] &\leq K- \sum_{j=0}^{K-1} \max\left(1-\frac{\Lambda}{\lambda}(1-P_j), 0\right).   
\end{align}
Consequently the upper bound of $\overline{T}^*(\gamma)$ is given as 
\begin{align}
\overline{T}^*(\gamma)& =\sum_{\lambda=1}^{\Lambda-t} E[ l_{\lambda} ]  \frac{ {\Lambda-\lambda \choose t} }{{\Lambda \choose t}}  \leq  \sum_{\lambda=1}^{\Lambda-t} \frac{ {\Lambda-\lambda \choose t} }{{\Lambda \choose t}} \left(K-\sum_{j=0}^{K-1}  \max\left(1-\frac{\Lambda}{\lambda}(1-P_j), 0\right)   \right)\nonumber\\
 &=  \sum_{\lambda=1}^{\Lambda-t} \frac{ {\Lambda-\lambda \choose t} }{{\Lambda \choose t}} K - \sum_{\lambda=1}^{\Lambda-t} \frac{ {\Lambda-\lambda \choose t} }{{\Lambda \choose t}}  \sum_{j=0}^{K-1} \max\left(1-\frac{\Lambda}{\lambda}(1-P_j), 0\right) \nonumber \\
   &\stackrel{(a)}{=} \frac{ {\Lambda\choose t+1} }{{\Lambda \choose t}} K - \sum_{\lambda=1}^{\Lambda-t} \frac{ {\Lambda-\lambda \choose t} }{{\Lambda \choose t}}  \sum_{j=0}^{K-1}  \max\left(1-\frac{\Lambda}{\lambda}(1-P_j), 0\right) \nonumber \\
   &=  K\frac{ \Lambda-t }{ t+1} - \sum_{\lambda=1}^{\Lambda-t} \frac{ {\Lambda-\lambda \choose t} }{{\Lambda \choose t}}  \sum_{j=0}^{K-1} \max\left(1-\frac{\Lambda}{\lambda}(1-P_j), 0\right), 
\end{align} %http://www.cs.columbia.edu/~cs4205/files/CM4.pdf
% \begin{align}
% \overline{T}^*(\gamma)& =\sum_{\lambda=1}^{\Lambda-t} E[ l_{\lambda} ]  \frac{ {\Lambda-\lambda \choose t} }{{\Lambda \choose t}}  \leq  \sum_{\lambda=1}^{\Lambda-t} \frac{ {\Lambda-\lambda \choose t} }{{\Lambda \choose t}} \left(\!\!K\!-\!\! \sum_{j=0}^{K-1}\!  \max\left(1\!-\!\frac{\Lambda}{\lambda}(1-P_j), 0\right)   \right)\nonumber\\
%  &=  \sum_{\lambda=1}^{\Lambda-t} \frac{ {\Lambda-\lambda \choose t} }{{\Lambda \choose t}} K\! -\! \!\sum_{\lambda=1}^{\Lambda-t} \!\frac{ {\Lambda-\lambda \choose t} }{{\Lambda \choose t}} \! \sum_{j=0}^{K-1} \!\! \max\left(\!\!1\!-\!\frac{\Lambda}{\lambda}(1\!-\!P_j), 0\!\right) \nonumber \\
%   &\stackrel{(a)}{=} \frac{ {\Lambda\choose t+1} }{{\Lambda \choose t}} K \!-\! \sum_{\lambda=1}^{\Lambda-t} \frac{ {\Lambda-\lambda \choose t} }{{\Lambda \choose t}}  \sum_{j=0}^{K-1}  \max\left(\!1-\frac{\Lambda}{\lambda}(1\!-\!P_j), 0\!\right) \nonumber \\
%   &=  K\frac{ \Lambda\!-\!t }{ t\!+\!1}\! - \!\!\sum_{\lambda=1}^{\Lambda-t} \frac{ {\Lambda-\lambda \choose t} }{{\Lambda \choose t}} \!\! \sum_{j=0}^{K\!-\!1}\! \! \max\left(\!\!1\!-\frac{\Lambda}{\lambda}(1\!-\!P_j), 0\!\right), \!\!\!\!
% \end{align}
where in step (a), we used the column-sum property of Pascal's triangle, which is $\sum^{n}_{k=0} {k\choose t}= {n+1\choose t+1}$. This concludes the proof of the upper bound in \eqref{eq:UBtavg}. 

Next, we prove the lower bound in \eqref{eq:LBtavg}. Crucial to this proof is the exploitation of the fact that $\sum_{\lambda=1}^{\Lambda} E[ l_{\lambda} ]= K$ and of the fact that both $E[ l_{\lambda}]$ and ${\Lambda-\lambda \choose t}$ in \eqref{eq:tavg} are non-increasing with $\lambda$. We first see that
\begin{align} 
\overline{T}^*(\gamma)&=\!\! \sum_{\lambda=1}^{\Lambda-t} E[ l_{\lambda} ]    \frac{ {\Lambda-\lambda \choose t} }{{\Lambda \choose t}}   \geq   \frac{  E[ l_{1} ]   {\Lambda-1 \choose t} \!+ \! \sum_{\lambda=2}^{\Lambda-t} B  {\Lambda-\lambda \choose t} }{{\Lambda \choose t}},  
 \end{align}
where $B=\frac{K-E[ l_{1}]}{\Lambda-1}$. This can be simplified as 
\begin{align} \label{eq:TLB}
\overline{T}^*(\gamma) &\geq   \frac{  E[ l_{1} ]   {\Lambda-1 \choose t}+  \sum_{\lambda=2}^{\Lambda-t} B  {\Lambda-\lambda \choose t} }{{\Lambda \choose t}} =   E[ l_{1} ]  \frac{   {\Lambda-1 \choose t}} {{\Lambda \choose t}}  + B  \frac{  {\Lambda-1 \choose t+1}}{{\Lambda \choose t}} \nonumber \\
&=  E[ l_{1} ]  \frac{   \Lambda-t} {\Lambda}  + B  \frac{  (\Lambda-t) (\Lambda-t-1)}{(1+t)\Lambda}  =  (\Lambda-t) \left( \frac{ E[ l_{1} ]}{\Lambda}  + B  \frac{  \Lambda-t-1}{(1+t)\Lambda} \right)\nonumber \\
&=  (\Lambda-t) \left( \frac{ E[ l_{1} ]}{\Lambda}  +\frac{K-E[ l_{1}]}{\Lambda-1}  \frac{  \Lambda-t-1}{(1+t)\Lambda} \right)=  \frac{\Lambda-t}{1+t} \left( \frac{E[ l_{1} ] t  }{ \Lambda-1} +\frac{K}{\Lambda}  \frac{  \Lambda-t-1}{\Lambda-1} \right).
%&=  \frac{K(1-\gamma)}{1+t}  \left( \frac{\Lambda E[ l_{1} ] t  }{K (\Lambda-1)} +  \frac{  (\Lambda-t-1)}{(\Lambda-1)} \right).
 \end{align}
To conclude the proof, we need to derive $E[l_1]$. It is straightforward that $l_1\geq \left\lceil \frac{K}{\Lambda} \right\rceil$, thus for $j= \left[0, 1, 2, \cdots, \left\lceil \frac{K}{\Lambda} \right\rceil-1 \right]$ we have 
\begin{align}
P[l_1 \leq j] = 0,
 \end{align}
 and for $j= \left[ \left \lceil \frac{K}{\Lambda} \right\rceil,  \left\lceil \frac{K}{\Lambda} \right\rceil+1, \cdots, K \right]$, from \cite[Proposition 1]{caraux_92}  we have 
 \begin{align}\label{eq:aprlb}
  P[l_{1} \leq j] &\leq \min(P_j,1) = P_j,\end{align}
 where $P_j$ is defined in \eqref{eq:cdf}. Therefore, using \eqref{eq:el}, $E[l_{1}]$ is lower bounded as 
\begin{align}\label{eq:lbel1}
E[l_{1}] = K- \sum_{j=0}^{K-1}  P[l_1 \leq j]   \geq  K- \sum_{j=\left \lceil \frac{K}{\Lambda} \right\rceil}^{K-1}  P_j.
\end{align}
Finally, combining \eqref{eq:TLB} and \eqref{eq:lbel1}, we obtain
\begin{align} 
\overline{T}^*(\gamma)\geq \frac{\Lambda\!-\!t}{1+t}\!\!\left(\!\! \frac{t}{\Lambda\!-\!1}\!\left(\!\!K\!-\!\!\!\! \!\sum_{j=\left \lceil \frac{K}{\Lambda} \right\rceil}^{K-1} \!\! P_j\!\! \right) \!\!+\!\frac{K}{\Lambda}  \frac{  \Lambda\!-\!t\!-\!1}{\Lambda-1}\!\! \right),
 \end{align}
 which concludes the proof of Theorem~\ref{th:UBLB}.
\subsection{Proof of Theorem \ref{th:gap}} \label{AP:gap}
The fact that both $E[ l_{\lambda}]$ and ${\Lambda-\lambda \choose t}$ in \eqref{eq:tavg} are non-increasing with $\lambda$, we see that $\overline{T}^*(\gamma)$ is bounded by
\begin{align}\label{eq:TUB}
\!\!\overline{T}^*(\gamma) &=\! \sum_{\lambda=1}^{\Lambda-t} E[ l_{\lambda} ]  \frac{ {\Lambda-\lambda \choose t} }{{\Lambda \choose t}} \nonumber \leq \!  \frac{1 }{{\Lambda \choose t}} \sum_{\lambda=1}^{ \Lambda-t} E[ l_{1} ]   {\Lambda-\lambda \choose t} \stackrel{(a)}{=}\frac{E[ l_{1} ] }{{\Lambda \choose t}}  \sum_{\lambda=1}^{\Lambda-t}  {\Lambda-\lambda \choose t}   
 \nonumber  \\&=   E[ l_{1} ] \frac{  {\Lambda \choose t+1}}{{\Lambda \choose t}} =   E[ l_{1} ] \frac{\Lambda-t}{1+t} = \frac{K(1-\gamma)}{1+t} \frac{\Lambda E[ l_{1} ]}{K},
 \end{align}
where in step (a), we used the column-sum property of Pascal's triangle, which is $\sum^{n}_{k=0} {k\choose t}= {n+1\choose t+1}$. Thus from~\eqref{eq:TUB}, we get
\begin{align}\label{eq:UBa}
\overline{T}^{*}(\gamma) =  O \left(  \frac{K(1-\gamma)}{1+t} \frac{ \Lambda E[ l_{1} ]}{K} \right)
\end{align}
and from~\eqref{eq:TLB}, we have 
\begin{align}\label{eq:LBa}
\overline{T}^{*}(\gamma) = \Omega \left(  \frac{K(1-\gamma)}{1+t} \frac{ \Lambda E[ l_{1} ] \gamma }{K } \right).
\end{align}
As $\gamma$ is a constant, we can conclude that the expressions in~\eqref{eq:UBa} and \eqref{eq:LBa} asymptotically match, and thus
\begin{align}\label{eq:gapt0}
\overline{T}^{*}(\gamma)  =  \Theta\left( \frac{K(1-\gamma)}{1+t} \frac{ E[ l_{1} ] \Lambda}{K} \right).
\end{align}
Combining \eqref{eq:gapt0} and \eqref{eq:tbc}, we obtain 
\begin{align}\label{eq:gapt}
\overline{T}^{*}(\gamma) =  \Theta\left( T_{min} \frac{ E[ l_{1} ] \Lambda}{K} \right).
\end{align}
For the remaining part, which is to develop the asymptotics of $E[l_1]$, we proceed with the following lemma which is adopted and adapted here directly from the work of~\cite{martin_98} on the \emph{Balls into Bins problem}.
\begin{lemma}[\!\!\!\protect{\cite[Theorem 1]{martin_98} - adaptation}]\label{le:ballbin}
In a $\Lambda$-cell $K$-user setting where each user can be associated with equal probability to any of the caches, the tail of $l_1$ takes the form
\begin{align} \label{eq:acdfl1}
P[l_1 > k_{\beta}]= 
\begin{cases} 
 o\left( 1\right) \hspace{0.6cm}\text{ if } \ \   \beta  > 1 \\
 1- o\left( 1\right) \text{ if }     \ \ 0 < \beta < 1,
\end{cases}
\end{align} 
for 
\begin{align} \label{eq:ka}
\!\!k_{\beta}\!= \!
\begin{cases} 
\!\frac{\log\Lambda}{\log\frac{\Lambda \log\Lambda }{K}} \! \left(\! 1\!+ \!\beta\frac{\log\log\frac{\Lambda \log\Lambda }{K}}{\log\frac{\Lambda \log\Lambda }{K}}\! \right)\! \text{ \emph{if} } \! \frac{\Lambda}{\polylog(\Lambda)}\!\leq\! K \! =\!\! o\left(\Lambda \log\Lambda \right) \\
\Theta \left(\beta\log\Lambda\right)  \  \text{ \emph{if} }\!  K \! = \!\Theta \left( \Lambda \log(\Lambda)\right) \\
\frac{K}{\Lambda}\!+\!  \beta \sqrt{\frac{K \log(\Lambda)}{0.5\Lambda}}\  \text{ \emph{if} }\! \omega\left(\Lambda \log\Lambda \right) =  K \! \leq \Lambda \polylog(\Lambda)\!\\
\!\frac{K}{\Lambda}\!+\!  \sqrt{\frac{K \log(\Lambda)}{0.5\Lambda}\!\! \left(\! 1\!-\! \frac{\log\log\Lambda }{2\beta\log\Lambda}\right)}\!  \text{ \emph{if} }\!  K \! = \!\omega\left(\!\Lambda \left(\log\Lambda\right)^3\! \right). 
\end{cases}
\end{align}
\end{lemma}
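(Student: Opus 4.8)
The plan is to recognize that under the uniform random user-to-cache association, the quantity $l_1$ is exactly the \emph{maximum load} in a balls-into-bins experiment in which $K$ balls (users) are thrown independently and uniformly into $\Lambda$ bins (caches). The statement is therefore a restatement, in the present notation ($K$ balls, $\Lambda$ bins), of the tight maximum-load concentration results of~\cite{martin_98}. Accordingly I would first make this isomorphism explicit, so that the four cases of $k_\beta$ correspond one-to-one to the four ranges of the ball-to-bin ratio treated there; the parameter $\beta$ is then the multiplicative slack placed on the critical threshold, and the two-sided claim is that $\beta>1$ sits just above, while $0<\beta<1$ sits just below, the with-high-probability value of the maximum load.

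For the upper tail ($\beta>1\Rightarrow P[l_1>k_\beta]=o(1)$) I would use the first-moment (union-bound) method. The load of a fixed cache is $\mathrm{Binomial}(K,1/\Lambda)$, so its upper tail admits the standard estimate $P[\text{load}\ge k]\le \binom{K}{k}\Lambda^{-k}\le \bigl(eK/(k\Lambda)\bigr)^{k}$, sharpened in the dense regimes by a second-order Stirling/De~Moivre--Laplace expansion. Summing over the $\Lambda$ caches gives $P[l_1>k]\le \Lambda\,P[\text{load}>k]$, and I would substitute the regime-specific $k_\beta$ and verify that the factor $\beta>1$ drives this product to zero. For the lower tail ($0<\beta<1\Rightarrow P[l_1>k_\beta]=1-o(1)$) I would use the second-moment method: letting $X$ count the caches whose load exceeds $k_\beta$, I would show $E[X]\to\infty$ for $\beta<1$ and then control $\mathrm{Var}(X)$ either through the negative correlation of bin loads or by Poissonization (treating the loads as independent $\mathrm{Poisson}(K/\Lambda)$ variables and transferring back), so that Chebyshev's inequality gives $P[X=0]=o(1)$, i.e. some cache exceeds $k_\beta$ with high probability.

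The main obstacle is matching the leading \emph{constants} of the upper and lower estimates across all four regimes, so that the transition is pinned exactly at $\beta=1$. In the sparse regime this requires evaluating $\binom{K}{k}\Lambda^{-k}$ to enough precision that the $\log\log$ correction in the first line of~\eqref{eq:ka} emerges, while in the densest regime $K=\omega\!\left(\Lambda(\log\Lambda)^3\right)$ the fluctuation is Gaussian of size $\sqrt{2(K/\Lambda)\log\Lambda}$ and one must retain the second-order term, which is precisely the $\bigl(1-\tfrac{\log\log\Lambda}{2\beta\log\Lambda}\bigr)$ factor appearing inside the square root; tracking this $\beta$-dependence faithfully is what makes the two-sided dichotomy sharp. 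The cleanest execution is thus to invoke~\cite[Theorem~1]{martin_98} directly and simply re-expose the $\beta$-parametrized thresholds by retaining the next-order terms already present in that analysis, rather than re-deriving the concentration from scratch.
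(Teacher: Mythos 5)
Your proposal is correct and ends up at exactly the paper's own proof: the paper establishes this lemma by making the balls-into-bins identification ($K$ balls, $\Lambda$ bins) and invoking \cite[Theorem 1]{martin_98} directly, which is precisely the ``cleanest execution'' you recommend in your final paragraph. The first-moment/second-moment machinery you sketch is the internal content of that cited theorem (it is how Raab and Steger prove it), not something the paper re-derives, so no gap exists between your route and the paper's.
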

\begin{proof}
The result comes directly from \cite[Theorem 1]{martin_98}.
\end{proof}
With Lemma~\ref{le:ballbin} at hand, we consider the case of $\beta > 1$, for which we get that 
\begin{align} \label{eq:ubl1}
E[l_1] &= \! \sum_{j=0}^{k_{\beta}-1}\!\! P[l_1 > j]\! +\!  P[l_1 > k_{\beta} ] +\sum_{j=k_{\beta}+1}^{K-1}\!\! P[l_1 > j] \stackrel{(a)}{\leq} k_{\beta} +  o(1) +\sum_{j=k_{\beta}+1}^{K-1} P[l_1 > j] \nonumber \\
&\stackrel{(b)}{\leq} k_{\beta} +  o(1) + (K-k_{\beta}-1)o(1)= k_{\beta} (1-o(1)) + K o(1)  = O\left( k_{\beta}\right), 
\end{align}
where in step (a), we use the fact that $P[l_1 > j]$ is at most $1$ for $j= \left[0, 1, \cdots k_{\beta}-1\right]$ and in step (b), we use the fact if $P[l_1 > k_{\beta}] =o(1)$ then $P[l_1 > j]$ is at most $o(1)$  for $j= \left[ k_{\beta}+1, \cdots K-1\right]$. Similarly, for $ 0 < \beta < 1$,  we have 
\begin{align}\label{eq:lbl1}
E[l_1] &= \sum_{j=0}^{k_{\beta}-1} P[l_1 > j] +  P[l_1 > k_{\beta} ] +\sum_{j=k_{\beta}+1}^{K-1} P[l_1 > j] \nonumber \\
&\stackrel{(a)}{\geq} k_{\beta} (1-o(1)) + 1- o(1) \geq k_{\beta} (1-o(1)) = \Omega\left( k_{\beta}\right), 
\end{align}
where in step (a), we use the fact that $\sum_{j=k_{\beta}+1}^{K-1} P[l_1 > j] \geq 0$ and if $P[l_1 > k_{\beta}] =1-o(1)$ then $P[l_1 > j]$ is at least $1-o(1)$ for $j= \left[0, 1, \cdots k_{\beta}-1\right]$. Combining~\eqref{eq:ka},~\eqref{eq:ubl1}, and~\eqref{eq:lbl1}, we have
\begin{align}
\!\!E[l_1]\! =\!
\begin{cases} 
\Theta \left(\frac{\log\Lambda}{\log\frac{\Lambda \log\Lambda }{K}} \right) \text{ if } \! \frac{\Lambda}{\polylog(\Lambda)}\leq K \! = o\left(\Lambda \log\Lambda \right) \\
\Theta \left(\log\Lambda\right) \  \text{ if }\!  K \! = \!\Theta \left( \Lambda \log(\Lambda)\right) \\
\Theta \left(\!\!\frac{K}{\Lambda}\!+ \!  \sqrt{\!\frac{K \log(\Lambda)}{\Lambda}}\!\right)\!  \text{ if }\! \omega\left(\Lambda \log\Lambda\! \right) \!= \! K \! \leq \!\Lambda \polylog(\Lambda)\!\\
\Theta \left(\!\frac{K}{\Lambda}+  \sqrt{\frac{K \log(\Lambda)}{\Lambda}}\right)  \text{ if }\!  K \! = \!\omega\left(\Lambda \left(\log\Lambda\right)^3 \right),
\end{cases}
\end{align}
\noindent which in turn implies that 
\begin{align}\label{eq:AL1}
\!\! \! E[l_1]\! =\!
\begin{cases} 
\!\Theta\! \left(\frac{\log\Lambda}{\log\frac{\Lambda \log\Lambda }{K}} \right)\! \text{ if } K \in[ \frac{\Lambda}{\polylog(\Lambda)}, o\left(\!\Lambda \!\log\Lambda\! \right)] \\
\!\Theta\! \left(\!\frac{K}{\Lambda}+  \sqrt{\frac{K \log(\Lambda)}{\Lambda}}\right) \text{ if }\! K =\Omega\left(\Lambda \log\Lambda \right).
\end{cases}
\end{align}
Combining~\eqref{eq:gapt} with~\eqref{eq:AL1}, allows us to directly conclude the proof of Theorem~\ref{th:gap}.

\subsection{Proof of Theorem \ref{th:gapd}} \label{AP:gapd}
Directly from the result in~\cite[Corollary 1.4]{petra_06} on the \emph{Balanced Allocations} problem, we can conclude that for $h>1$, the $E[l_1]$ asymptotically converges to
\begin{align}\label{eq:AL1d}
E[l_1]= \frac {\log \log \Lambda}{\log h}+\frac { K }{\Lambda} \pm \Theta (1).
\end{align}
\noindent Consequently combining~\eqref{eq:gapt} and~\eqref{eq:AL1d}, directly yields~\eqref{eq:lawtavgd} which concludes the proof of Theorem~\ref{th:gapd}.

\subsection{Proof of Theorem \ref{th:gapg}} \label{AP:gapg}
We start our proof by deriving the expected number of users in the most populous cache (i.e., $E[l_1]$). Recall that under the proximity-based load balancing technique, each user can be associated to any cache \emph{group} with equal probability $\frac{h}{\Lambda}$. Once a user is associated to a group, then this user will be associated to the least loaded cache from that group. Let $l^h_1$ be the number of users that are associated to the most populous group of caches, then the number of users in the most populous cache is given by $l_1=\left\lceil\frac{l^h_1}{h} \right\rceil$. Thus, we have 
\begin{align}
E[l_1] = \sum_{i=1}^{K} P[l^h_1=i]\left\lceil\frac{l^h_1}{h} \right\rceil, 
\end{align}
where $P[l^h_1=i]$ is the probability that $i$ users are associated to the most populous group of caches. Let $S_1 \subseteq [K]$ be the set of elements such that for each element $i\in S_1$, $\frac{i}{h}$ is integer. Then, we have
\begin{align}
E[l_1]& = \!\sum_{i \in S_1} P[l^h_1=i]\frac{l^h_1}{h} +\sum_{i \in S_1/[K]} P[l^h_1=i]\left\lceil\frac{l^h_1}{h} \right\rceil \nonumber \\
&= \!\sum_{i \in S_1} P[l^h_1=i]\frac{l^h_1}{h} +\sum_{i \in S_1/[K]} P[l^h_1=i]\frac{l^h_1}{h} +\sum_{i \in S_1/[K]} P[l^h_1=i] \left(\left\lceil\frac{l^h_1}{h} \right\rceil  \!-\!\frac{l^h_1}{h}\right) \nonumber \\
&= \!\sum_{i \in [K]}\! P[l^h_1\!=\!i]\frac{l^h_1}{h} \!+\!\!\!\!\!\sum_{i \in S_1/[K]}\!\!\!\! P[l^h_1\!=\!i] \left(\left\lceil\frac{l^h_1}{h} \right\rceil \! -\!\frac{l^h_1}{h}\right) = \frac{E[l^h_1]}{h} \!+\!\!\!\!\!\sum_{i \in S_1/[K]}\!\!\!\! P[l^h_1\!=\!i] \left(\left\lceil\frac{l^h_1}{h} \right\rceil \! -\!\frac{l^h_1}{h}\right).\!\!
\end{align}
It is straightforward to see that $  0 <\sum_{i \in S_1/[K]} P[l^h_1=i] \left(\left\lceil\frac{l^h_1}{h} \right\rceil  -\frac{l^h_1}{h}\right)< 1$, Therefore,  $E[l_1]$ is bounded as
\begin{align}\label{eq:bel1h}
\frac{E[l^h_1]}{h} < E[l_1] < \frac{E[l^h_1]}{h} +1,
\end{align}
and we can conclude that 
\begin{align}\label{eq:el1h}
E[l_1] = \Theta \left(\frac{E[l^h_1]}{h} \right). 
\end{align}
Evaluating $E[l^h_1]$ from \eqref{eq:AL1} by treating each group as a single cache, we conclude that 
\begin{align}\label{eq:AL1g}
E[l_1]\! =\!
\begin{cases} 
\!\Theta\! \left(\frac{\log\frac{\Lambda}{h}}{h\log\frac{\Lambda \log\frac{\Lambda}{h} }{hK}} \right)\! \text{ if } K \in \left[\frac{\Lambda / h}{\polylog(\frac{\Lambda}{h})},  o\left(\!\frac{\Lambda}{h} \!\log\frac{\Lambda}{h}\! \right)\right] \\
\!\Theta\! \left(\!\frac{K}{\Lambda}+  \sqrt{\frac{K \log(\frac{\Lambda}{h})}{h\Lambda}}\right) \text{ if }\! K =\Omega\left(\frac{\Lambda}{h} \log\frac{\Lambda}{h} \right).
\end{cases}
\end{align}
\noindent Finally combining \eqref{eq:gapt} and \eqref{eq:AL1g}, directly yields~\eqref{eq:lawtavgg}, and thus concludes the proof of Theorem~\ref{th:gapg}.

\subsection{Characterization of $T_{min}$} \label{AP:tmin}
From equation~\eqref{eq:TL}, we see the fact that $l_{\lambda}$ and ${\Lambda-\lambda \choose t}$ are non-increasing with $\lambda$, which implies that the profile vector $\mathbf{L}$, which minimizes the delay has components of the form
\begin{align}\label{eq:Ltbc}
l_{\lambda} =
\begin{cases} 
 \left\lfloor\frac{K}{\Lambda}\right\rfloor +1\!\!\!  \hspace{0.2cm} \text{ for }  \lambda \in \left[1, 2,  \dots, \hat{K}\right]   \\
 \left\lfloor\frac{K}{\Lambda}\right\rfloor  \hspace{0.6cm} \text{ for }  \lambda \in \left[\hat{K}\!+1, \hat{K}\!+2, \dots, \Lambda\right],
\end{cases}
\end{align} 
where $\hat{K}=K- \left\lfloor\frac{K}{\Lambda}\right\rfloor\Lambda$. Consequently, when $\hat{K} \geq \Lambda-t$, the corresponding best-case delay $T_{min}$ is given as
\begin{align}\label{eq:tminc2}
T_{min} =   \sum_{\lambda=1}^{\Lambda-t} \left(\left\lfloor\frac{K}{\Lambda}\right\rfloor+1\right) \frac{ {\Lambda-\lambda \choose t} }{{\Lambda \choose t}} =  \left( \left\lfloor\frac{K}{\Lambda}\right\rfloor +1 \right)\frac{\Lambda-t}{1+t},
\end{align}
while when $\hat{K} < \Lambda-t$, this is given as 
\begin{align}\label{eq:tminc1}
&T_{min} =    \sum_{\lambda=1}^{\hat{K}} \left(\left\lfloor\frac{K}{\Lambda}\right\rfloor+1\right) \frac{ {\Lambda-\lambda \choose t} }{{\Lambda \choose t}}+\sum_{\lambda=\hat{K}+1}^{\Lambda-t} \left\lfloor\frac{K}{\Lambda}\right\rfloor  \frac{ {\Lambda-\lambda \choose t} }{{\Lambda \choose t}} \nonumber\\
&=   \left\lfloor\frac{K}{\Lambda}\right\rfloor \sum_{\lambda=1}^{\Lambda-t}  \frac{ {\Lambda-\lambda \choose t} }{{\Lambda \choose t}} + \sum_{\lambda=1}^{\hat{K}}  \frac{ {\Lambda-\lambda \choose t} }{{\Lambda \choose t}}  =    \left( \left\lfloor\frac{K}{\Lambda}\right\rfloor +1 \right) \sum_{\lambda=1}^{\Lambda-t} \frac{ {\Lambda-\lambda \choose t} }{{\Lambda \choose t}}  -\sum_{\lambda=\hat{K}+1}^{ \Lambda-t }  \frac{ {\Lambda-\lambda \choose t} }{{\Lambda \choose t}}  \nonumber \\
& =   \left( \left\lfloor\frac{K}{\Lambda}\right\rfloor +1 \right) \frac{  {\Lambda \choose t+1}}{{\Lambda \choose t}}  - \frac{ {\Lambda-\hat{K} \choose t+1}}{{\Lambda \choose t}}  =   \frac{\Lambda-t}{1+t}  \left( \left\lfloor\frac{K}{\Lambda}\right\rfloor +1  - \frac{\prod_{i=t+1}^{\hat{K}+t} (\Lambda - i)}{ \prod_{j=0}^{\hat{K}-1} (\Lambda - j)} \right),
\end{align}
which reverts back to the well-known delay
\begin{align}\label{eq:tminc3}
T_{min} =     \frac{K}{\Lambda} \frac{\Lambda-t}{1+t},
\end{align}
when $\hat{K}=0$. This concludes the characterization of the best-case delay $T_{min}$.

\subsection{Proof of Theorem \ref{th:UBLBnn}} \label{AP:UBLBnn}
The proof is essentially the same as of Theorem~\ref{th:UBLB}'s proof with some minor changes.  Crucial to the proof of upper bound is the result from \cite[Proposition 3]{caraux_92}, which is given as 
  \begin{align}\label{eq:aprubnn}
P[l_{\lambda} \leq j] \geq  \max \left(0,1- \frac{\Lambda- \sum_{k=1}^{\Lambda}F_{v_{k}}(j) }{\lambda} \right),
\end{align}
 where $F_{v_{k}}(j)$ is the probability that $k$th cache is associated to no more than $j$ requesting users (i.e., $P[v_{k} \leq j]$ ). Recalling that a user can be associated to $k$th cache with probability $p_k$, we can conclude that $F_{v_{k}}(j)$ is given as
   \begin{align}\label{eq:cdfnn}
F_{v_{k}}(j) =\sum_{i=0}^{j} {K \choose i} \left(p_k\right)^i \left(1-p_k\right)^{K-i}.
 \end{align}
 We obtain the upper bound in \eqref{eq:UBtavgnn} by simply replacing \eqref{eq:aprub} with \eqref{eq:aprubnn} and following the same procedure as in the proof of Theorem~\ref{th:UBLB}. This concludes the proof of the upper bound in \eqref{eq:UBtavgnn}. Next, we prove the lower bound in \eqref{eq:LBtavgnn}. We know from \eqref{eq:TLB} that
\begin{align} \label{eq:TLBnn}
\overline{T}(\gamma) &\geq  \frac{\Lambda-t}{1+t} \left( \frac{E[ l_{1} ] t  }{ \Lambda-1} +\frac{K}{\Lambda}  \frac{  \Lambda-t-1}{\Lambda-1} \right).
 \end{align}
To conclude the proof, we need to derive $E[l_1]$. Let $m\in[\Lambda]$ is the cache with highest population intensity (i.e, $m = \underset{x \in [\Lambda]}{\mathrm{argmax}} \ p_x $), then $E[l_1]$ is lower bounded as
%To conclude the proof, we need to derive $E[l_1]$. Let $m\in[\Lambda]$ is the cache with highest population intensity (i.e, $m = \argmax \limits_{x \in [\Lambda]} p_x $), then $E[l_1]$ is lower bounded as
\begin{align}\label{eq:lbel1nn}
E[l_{1}] &= \sum_{\mathbf{V}\in \mathcal{V}}P(\mathbf{V})\max(\mathbf{V}) \geq \sum_{\mathbf{V}\in \mathcal{V}}P(\mathbf{V})v_m = E[v_m]= Kp_m.
\end{align}
Finally, we obtain
\begin{align} 
\overline{T}(\gamma)\geq \frac{\Lambda-t}{1+t}\left( \frac{Kp_m t}{\Lambda-1} +\frac{K}{\Lambda}  \frac{  \Lambda-t-1}{\Lambda-1} \right),
 \end{align}
 which concludes the proof of Theorem~\ref{th:UBLBnn}.
\subsection{Proof of Theorem \ref{th:gapnn}} \label{AP:gapnn}
We know from \eqref{eq:gapt} that $\overline{T}(\gamma) =  \Theta\left( T_{min} \frac{ E[ l_{1} ] \Lambda}{K} \right)$, thus, we first characterize the scaling laws of $E[l_{1}]$. Let $\mu_i$ and $\sigma^2_i$ be the expectation and variance of random variable $v_i$'s (i.e., the number of users associated to the $i$th cache) respectively, then from \cite[Proposition 1]{gascuel_92}, we have
\begin{align}
 E[l_{1}] \leq \bar{\mu}  + \sqrt{\frac{\Lambda-1}{\Lambda}\sum_{i=1}^{\Lambda} \left(\sigma_i^2+ (\mu_i-\bar{\mu})^2\right) },
\end{align}
where $\bar{\mu}= \frac{1}{\Lambda}\sum_{i=1}^{\Lambda}\mu_i$. Since the random variable $v_i$ follows the binomial distribution, we have $\mu_{i}= Kp_{i}$, $\sigma^2_{i}=K(1-p_{i})p_{i}$, and $\bar{\mu} = \frac{1}{\Lambda}\sum_{i=1}^{\Lambda} Kp_i = \frac{K}{\Lambda}$. Consequently the upper bound on the $E[l_{1}]$ is given as
\begin{align}\label{eq:el1nnub}
 E[l_{1}] &\leq \frac{K}{\Lambda} \! + \!\sqrt{\frac{\Lambda\!-\!1}{\Lambda}\sum_{i=1}^{\Lambda} \left(\!K(1\!-\!p_{i})p_{i}\!+\! \left(\!Kp_{i}\!-\!\frac{K}{\Lambda}\right)^2\right) }\! \nonumber\\
 &= \frac{K}{\Lambda} \! +\sqrt{\Lambda\!-\!1}\sqrt{\frac{K}{\Lambda}\sum_{i=1}^{\Lambda} \left(p_{i}\!-\!p^2_{i}\!+\! K\left(\!p^2_{i}\!-\!\frac{2p_{i}}{\Lambda}\!+\!\frac{1}{\Lambda^2} \!\right)\!\right) } \nonumber\\
 &= \frac{K}{\Lambda} \! +\!\sqrt{\Lambda\!-\!1}\sqrt{\frac{K}{\Lambda}\sum_{i=1}^{\Lambda} \left(  p_{i}\!-\!p^2_{i}\!+\! Kp^2_{i}-\frac{2Kp_{i}}{\Lambda}+\frac{K}{\Lambda^2} \right) } \nonumber\\
 &= \frac{K}{\Lambda} \! +\!\sqrt{\Lambda\!-\!1}\sqrt{\frac{K}{\Lambda}\sum_{i=1}^{\Lambda} \left(\!  p_{i}\!\left(\!1\!-\!\frac{2K}{\Lambda}\!\right)\!\! +\! p^2_{i}( K\!-\!1)\!+\!\frac{K}{\Lambda^2} \!\right) }\nonumber\\
% &=\frac{K}{\Lambda}  +\sqrt{\Lambda-1} \sqrt{\frac{K}{\Lambda} \left[ \sum_{i=1}^{\Lambda} p_{i}\left(1-\frac{2K}{\Lambda}\right) + \sum_{i=1}^{\Lambda}p^2_{i}( K-1)+\sum_{i=1}^{\Lambda}\frac{K}{\Lambda^2} \right] }\nonumber\\ 
 &=\frac{K}{\Lambda} \! +\!\sqrt{\Lambda\!-\!1} \sqrt{\frac{K}{\Lambda}\! \left(\! 1\!-\!\frac{2K}{\Lambda}\!+\!\frac{K}{\Lambda}  \!+\! \sum_{i=1}^{\Lambda}p^2_{i}( K\!-\!1)\!\right) }\nonumber\\
 &= \frac{K}{\Lambda} \! +\sqrt{\Lambda\!-\!1}\sqrt{\frac{K}{\Lambda} \left( 1-\frac{K}{\Lambda} + ( K-1) \sum_{i=1}^{\Lambda}p^2_{i}\right) }\nonumber\\
 &= \frac{K}{\Lambda} \! +\!\sqrt{\Lambda\!-\!1}\sqrt{ \!\frac{ K(\Lambda\!- \! K)}{\Lambda^2}\! +\! \frac{K(K\!-\!1)}{\Lambda} \! \sum_{i=1}^{\Lambda}p^2_{i} }\nonumber\\
 &= \frac{K}{\Lambda}\!  +\!\sqrt{\Lambda\!-\!1}\sqrt{ \frac{ K(\Lambda\!-\!  K) \!+\!K(K\!-\!1)\Lambda\sum_{i=1}^{\Lambda}p^2_{i}}{\Lambda^2}   }\nonumber\\
 %&=\frac{K}{\Lambda}  +\sqrt{\Lambda-1}\frac{1}{\Lambda}\sqrt{  K(\Lambda-  K) +K(K-1)\Lambda\sum_{i=1}^{\Lambda}p^2_{i}}\nonumber\\
 &=\frac{K}{\Lambda} \! +\!\sqrt{\!\Lambda\!-\!1}\frac{K}{\Lambda}\sqrt{  \frac{\Lambda\!-\!  K \!+\!(K\!-\!1)\Lambda\sum_{i=1}^{\Lambda}p^2_{i}}{K}}=\frac{K}{\Lambda}\left(\! \!  1\! +\!\sqrt{\!\Lambda\!-\!1\!}\sqrt{\! \frac{\Lambda}{K}\!- \! 1\! +\!\frac{K\!-\!1}{K}\Lambda \sum_{i=1}^{\Lambda}p^2_{i}\!} \right).
\end{align}    
 When the cache population intensities $\mathbf{p}$ follows the Zipf distribution, we have $\sum_{i=1}^{\Lambda}p^2_{i} = \sum_{i=1}^{\Lambda}\frac{i^{-2\alpha}}{\left(H_{\alpha}(\Lambda)\right)^2}=\frac{H_{2\alpha}(\Lambda)}{\left(H_{\alpha}(\Lambda)\right)^2}$, where the normalization constant $H_{\alpha}(\Lambda)$ known as the generalized harmonic number \cite{gitzenis_it13} scales as
 %Riemann zeta function scale as 
 %Asymptotic Laws for Joint Content Replication and Delivery in Wireless Networks
 %S. Gitzenis, G. Paschos, and L. Tassiulas, “Asymptotic laws for joint content replication and delivery in wireless networks,” IEEE Trans. Inf. Theory, vol. 59, no. 5, pp. 2760–2776, May 2013.
 %M. Mahdian and E.M. Yeh,, “Throughput and delay scaling of contentcentric ad hoc and heterogeneous wireless networks,” IEEE/ACM Trans. Netw., vol. 25, no. 5, pp. 3030–3043, Oct. 2017.
\begin{equation} \label{eq:H} 
H_{\alpha}(\Lambda) = \begin{cases}
    \Theta\left(1 \right) & \alpha > 1 \\
    \Theta\left(\log \Lambda \right)              & \alpha =1 \\
		\Theta\left( \Lambda^{1-\alpha} \right)  & \alpha <1.
\end{cases} 
\end{equation}
Similarly, the scaling of $H_{2\alpha}(\Lambda)$ is given as 
\begin{equation} \label{eq:H2} 
H_{2\alpha}(\Lambda) = \begin{cases}
    \Theta\left(1 \right) & \alpha > 0.5 \\
    \Theta\left(\log \Lambda \right)              & \alpha =0.5 \\
		\Theta\left( \Lambda^{1-2\alpha} \right)  & \alpha <0.5.
\end{cases} 
\end{equation}
Consequently, we have
\begin{align} \label{eq:H2r} 
\sum_{i=1}^{\Lambda}p^2_{i}= \begin{cases}
    \Theta\left(1 \right) & \alpha > 1 \\
        \Theta\left(\frac{1}{(\log \Lambda)^2} \right)              & \alpha =1 \\
        \Theta\left(\frac{1}{\Lambda^{2-2\alpha}} \right)              & 0.5<\alpha <1 \\
        \Theta\left(\frac{\log \Lambda }{\Lambda^{2-2\alpha}}\right)=  \Theta\left(\frac{\log \Lambda }{\Lambda}\right)             & \alpha =0.5 \\
		\Theta\left(\frac{\Lambda^{1-2\alpha}}{\Lambda^{2-2\alpha}} \right)=\Theta\left(\frac{1}{\Lambda} \right)  & \alpha <0.5.
\end{cases} 
\end{align}
From \eqref{eq:el1nnub} and \eqref{eq:H2r}, we obtain 
\begin{align} 
E[l_{1}]&= \begin{cases}
    O\left(\frac{K}{\Lambda}\left( 1 +\sqrt{\Lambda}\sqrt{ \frac{\Lambda}{K}\! +\!\Lambda } \right)\!\!\right)  & \!\!\!\!\alpha > 1 \\
        O\left(\frac{K}{\Lambda}\left(\! 1 \!+\!\sqrt{\Lambda}  \sqrt{ \frac{\Lambda}{K} \!+\! \frac{\Lambda}{(\log \Lambda)^2}}   \right)\! \!\right)          &\!\!\!\! \alpha =1 \\
        O\left(\frac{K}{\Lambda}\left(\! 1 \!+\!\sqrt{\Lambda}\sqrt{ \frac{\Lambda}{K} \!+\!\frac{\Lambda }{\Lambda^{2-2\alpha}}}      \right)\!\!\right)        & \!\!\!\!0.5\!<\!\alpha\! <\!1 \\
          O\left(\frac{K}{\Lambda}\left(\! 1 \!+\!\sqrt{\Lambda} \sqrt{ \frac{\Lambda}{K}\! +\!\frac{\Lambda  \log \Lambda }{\Lambda}}  \right)\!\! \right)         &\!\!\!\! \alpha =0.5 \\
	O\left(\frac{K}{\Lambda}\left( \!1 \!+\!\sqrt{\Lambda} \sqrt{ \frac{\Lambda}{K} \!+\!\frac{\Lambda }{\Lambda} }  \right) \!\!\right)  &\!\!\!\! \alpha <0.5
\end{cases} = \begin{cases}
      O\left(\frac{K}{\Lambda}\left( \Lambda \right)\right) &\!\!\!\! \alpha > 1 \\
         O\left(\frac{K}{\Lambda}\left( \!1 \!+\!\sqrt{\Lambda}  \sqrt{\frac{\Lambda}{K}\!+\! \frac{\Lambda}{(\log \Lambda)^2}}   \right) \!\!   \right)        &\!\!\!\! \alpha =1 \\
       O\left(\frac{K}{\Lambda}\left( \!1 \!+\!\sqrt{\Lambda}\sqrt{ \frac{\Lambda}{K}\!+\!\frac{1}{\Lambda^{1-2\alpha}}}      \right)   \!\!   \right)        &   \!\!\!\!0.5\!<\!\alpha\! <\!1 \\
         O\left(\frac{K}{\Lambda}\left(\! 1\! +\!\sqrt{\Lambda} \sqrt{ \frac{\Lambda}{K} \!+ \!\log \Lambda}  \right)  \!\! \right)        &\!\!\!\! \alpha =0.5 \\
	O\left(\frac{K}{\Lambda}\left( \!1\! +\!\sqrt{\Lambda} \sqrt{ \frac{\Lambda}{K} \!+1\!}  \right) \!\!\right) &\!\!\!\! \alpha <0.5
\end{cases} \nonumber\\
& = \begin{cases}
   O\left( K  \right) & \!\!\!\!\alpha > 1 \\
        O\left(\frac{K}{\Lambda}\left(1  + \sqrt{\frac{\Lambda^2}{K}+ \frac{\Lambda^2}{(\log \Lambda)^2}} \right)    \right)           & \!\!\!\!\alpha =1 \\
       O\left( \frac{K}{\Lambda}\left(1  + \sqrt{ \frac{\Lambda^2}{K}+\Lambda^{2\alpha}}\right)       \right)                & \!\!\!\!0.5\!<\!\alpha\! <\!1 \\
        O\left( \frac{K}{\Lambda}\left(1  + \sqrt{ \frac{\Lambda^2}{K} + \Lambda\log \Lambda}\right)    \right)           & \!\!\!\!\alpha =0.5 \\
	O\left(\frac{K}{\Lambda}\left(1  +  \sqrt{\Lambda+ \frac{\Lambda^2}{K}}\right)	    \right) &\!\!\!\! \alpha <0.5 
\end{cases}= \begin{cases}
   O\left( K    \right) &\!\!\!\! \alpha > 1 \\
        O\left(\frac{K}{\Lambda}  + \sqrt{K+ \frac{K^2}{(\log \Lambda)^2}} \right)              & \!\!\!\!\alpha =1 \\
       O\left( \frac{K}{\Lambda}  + \sqrt{ K+\frac{K^2}{\Lambda^{2-2\alpha}}}\right)               & \!\!\!\!0.5\!<\!\alpha\! <\!1 \\
        O\left( \frac{K}{\Lambda}  + \sqrt{K + \frac{K^2\log \Lambda}{\Lambda}}\right)            & \!\!\!\!\alpha =0.5 \\
	O\left(\frac{K}{\Lambda}  + \sqrt{\frac{K^2}{\Lambda} +K} 	    \right) & \!\!\!\!\alpha <0.5.
\end{cases} \nonumber
\end{align} 
Consequently, the $E[l_{1}]$ is upper bounded as
\begin{align} \label{eq:Eluba} 
E[l_{1}]= \begin{cases}
   O\left( K    \right) & \alpha > 1 \\
%        O\left(\frac{K}{\log \Lambda} \right)              & \alpha =1 \\
        O\left( \sqrt{K+ \frac{K^2}{(\log \Lambda)^2}} \right)             & \alpha =1 \\
       O\left(  \sqrt{ K+\frac{K^2}{\Lambda^{2-2\alpha}}}\right)             & 0.5<\alpha <1 \\
        O\left(  \sqrt{K + \frac{K^2\log \Lambda}{\Lambda}}\right)            & \alpha =0.5 \\
	O\left(\frac{K}{\Lambda}  + \sqrt{\frac{K^2}{\Lambda} +K} 	    \right) & \alpha <0.5.
\end{cases} 
\end{align}
From \eqref{eq:gapt} and \eqref{eq:Eluba}, the upper bound on $\overline{T}(\gamma)$ is given by
\begin{equation} \label{eq:Tuba} 
\overline{T}(\gamma)= \begin{cases}
   O\left( T_{min} \Lambda    \right) & \alpha > 1 \\
%\textcolor{blue}{        O\left( T_{min} \frac{\Lambda}{K} \sqrt{K+ \frac{K^2}{(\log \Lambda)^2}} \right)}              & \alpha =1 \\
        O\left( T_{min}  \sqrt{\frac{\Lambda^2}{K}+ \frac{\Lambda^2}{(\log \Lambda)^2}} \right)              & \alpha =1 \\
    %    O\left( T_{min} \frac{\Lambda}{\log \Lambda} \right)              & \alpha =1 \\
       O\left(T_{min}   \sqrt{ \frac{\Lambda^2}{K}+\Lambda^{2\alpha}}\right)             & 0.5<\alpha <1 \\
        O\left( T_{min}  \sqrt{\frac{\Lambda^2}{K} + \Lambda\log\Lambda}\right)            & \alpha =0.5 \\
	O\left(T_{min} \left(1  +  \sqrt{\Lambda+ \frac{\Lambda^2}{K} } \right) 	    \right) & \alpha <0.5.
\end{cases}
\end{equation}
Next, we characterize the lower bound on $\overline{T}(\gamma)$. We know that
\begin{align}
E[l_{1}]\geq K\max(\mathbf{p})= Kp_1= \frac{K}{H_{\alpha}(\Lambda)}.
\end{align}
Thus, from \eqref{eq:H} we have
\begin{equation} \label{eq:Ellba}
E[l_{1}]=   \begin{cases}
    \Omega\left(K \right) & \alpha > 1 \\
    \Omega\left(\frac{K}{\log \Lambda} \right) & \alpha = 1 \\
	\Omega\left(\frac{K}{\Lambda^{1-\alpha}}   \right)  & \alpha <1.
\end{cases} 
\end{equation}
From \eqref{eq:gapt} and \eqref{eq:Ellba}, the lower bound on $\overline{T}(\gamma)$ is given by
\begin{equation} \label{eq:Tlba}
\overline{T}(\gamma)= \begin{cases}
    \Omega\left(T_{min}  \Lambda \right) & \alpha > 1 \\
    \Omega\left(T_{min} \frac{ \Lambda}{\log \Lambda} \right) & \alpha = 1 \\
	\Omega\left(T_{min} \Lambda^{\alpha}   \right)  & \alpha <1.
\end{cases} 
\end{equation}
% \begin{equation} \label{eq:Tlba}
% \overline{T}(\gamma)= \begin{cases}
%     \Omega\left(T_{min}  \Lambda \right) & \alpha > 1 \\
%     \Omega\left(T_{min} \frac{ \Lambda}{\log \Lambda} \right) & \alpha = 1 \\
%     \Omega\left(T_{min} \Lambda^{\alpha} \right) & 0.5<\alpha < 1 \\
%     \Omega\left(T_{min} \sqrt{\Lambda}  \right)              & \alpha =0.5 \\
% 		\Omega\left(T_{min} \Lambda^{\alpha}   \right)  & \alpha <0.5.
% \end{cases} 
% \end{equation}
Combining \eqref{eq:Tuba} with \eqref{eq:Tlba}, allows us to directly conclude the proof of Theorem \ref{th:gapnn}.

\subsection{Proof of Theorem \ref{th:gapdnn}} \label{AP:gapdnn}
We proceed with the following lemma which is adopted and adapted here directly from the work of~\cite{wieder_07} on the \emph{Balls into Bins problem}.
\begin{lemma}[\!\!\!\protect{\cite[Theorem 1.3]{wieder_07} - adaptation}]\label{le:ballbinnn}
In a $\Lambda$-cell, $K$-user setting where each ball is associated to the least loaded  cache among $h \geq 2$ caches, independently sampled from $\Lambda$ caches whose population intensities vector is following the distribution $\mathbf{p}$, the tail of $l_1$ takes the form
\begin{align} \label{eq:acdfl1nn}
P[l_1 > \delta]=  o\left( \frac{1}{\Lambda}\right)
\end{align} 
for 
\begin{align} \label{eq:delta}
\!\!\delta\!= \frac{K}{\Lambda} + \log \log \Lambda + O\left(1\right),
\end{align}
when $h = \Theta \left(\frac{\log\left(\frac{ab-1}{a-1}\right)}{\log\left(\frac{ab-1}{ab-b}\right)}\right)$, where $a=\frac{1}{\Lambda \min(\mathbf{p})}$ and $b=\max(\mathbf{p}) \Lambda$.
\end{lemma}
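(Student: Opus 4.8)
The plan is to obtain \eqref{eq:acdfl1nn}--\eqref{eq:delta} as a direct adaptation of the heterogeneous balls-into-bins analysis of \cite[Theorem 1.3]{wieder_07}, the only real content being the faithful translation between the load-balancing process $\phi_n$ and the weighted $d$-choice allocation studied there. First I would make the correspondence exact: users play the role of the balls (there are $K$ of them), caches play the role of the bins (there are $\Lambda$ of them), the cache-population intensity vector $\mathbf{p}$ is the non-uniform sampling distribution, and the number of candidate caches $h$ is the number of choices $d$. Under $\phi_n$ each user samples $h$ caches i.i.d.\ from $\mathbf{p}$ and joins the least loaded one, which is precisely the weighted $d$-choice process; consequently the load $l_1$ of the most populous cache is exactly the maximum bin load, and the tail of $l_1$ can be read directly off the maximum-load guarantee of \cite{wieder_07}.

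Second, I would translate the skew of $\mathbf{p}$ into the quantities that govern how many choices are required. The deviation of $\mathbf{p}$ from uniform is captured below by $a=\frac{1}{\Lambda\min(\mathbf{p})}\ge 1$ (how far the least-likely cache sits below the uniform value $1/\Lambda$) and above by $b=\Lambda\max(\mathbf{p})\ge 1$ (how far the most-likely cache sits above it), so that $ab=\max(\mathbf{p})/\min(\mathbf{p})$ is the full dynamic range of the distribution. Matching these to the smoothness parameter in \cite{wieder_07} reproduces the choice threshold
\begin{align}
h=\Theta\!\left(\frac{\log\!\left(\frac{ab-1}{a-1}\right)}{\log\!\left(\frac{ab-1}{ab-b}\right)}\right),
\end{align}
beyond which the additive overload above the mean load $K/\Lambda$ collapses to the homogeneous $\log\log\Lambda$ term. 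Invoking the theorem with this $h$ then gives that, except with probability $o(1/\Lambda)$, no cache carries more than $\frac{K}{\Lambda}+\log\log\Lambda+O(1)$ users, which is exactly \eqref{eq:delta} together with the tail estimate \eqref{eq:acdfl1nn}.

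The hard part is the adaptation rather than any fresh probabilistic argument, since \cite[Theorem 1.3]{wieder_07} is stated for its own smoothness parameter and for a canonical ball-to-bin ratio. I would therefore need to check three things: (i) that the mean load $K/\Lambda$ enters only as an additive offset, so the bound survives the regime $K\neq\Lambda$ relevant here; (ii) that Wieder's smoothness quantity coincides with the pair $(a,b)$, so the displayed choice threshold is reproduced verbatim; and (iii) that the failure probability can be driven to $o(1/\Lambda)$ rather than merely $o(1)$, which is what later legitimizes a union bound over the $\Lambda$ caches and the integration of the tail. With \eqref{eq:acdfl1nn}--\eqref{eq:delta} established, the remainder of the proof of Theorem~\ref{th:gapdnn} is routine: integrating the tail yields $E[l_1]=O\!\left(\frac{K}{\Lambda}+\log\log\Lambda\right)$ once $h=\Theta(\log\Lambda)$ is checked against the threshold for the Zipf family, and substituting into \eqref{eq:gapt} gives the stated scaling of $\overline{T}_{\phi_n}(\gamma)$. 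A fully self-contained alternative would be a layered-induction/witness-tree argument in the spirit of the classical $d$-choice analysis extended to non-uniform sampling, but adapting the existing theorem is considerably cleaner.
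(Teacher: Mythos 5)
Your proposal takes essentially the same route as the paper: the paper's entire proof of this lemma is a one-line assertion that the result comes directly from \cite[Theorem 1.3]{wieder_07}, which is exactly your direct-adaptation strategy of mapping users to balls, caches to bins, $\mathbf{p}$ to the non-uniform sampling distribution, and $h$ to the number of choices. If anything, your explicit checklist (additive role of the mean load $K/\Lambda$, matching of the smoothness parameters $(a,b)$, and the $o(1/\Lambda)$ failure probability) is more careful than the paper's own justification, so there is no gap to report.
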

\begin{proof}
The result comes directly from \cite[Theorem 1.3]{wieder_07}.
\end{proof}
With Lemma~\ref{le:ballbinnn} at hand, we get that 
\begin{align} \label{eq:ubl1nn}
E[l_1] &= \! \sum_{j=0}^{\delta-1}\!\! P[l_1 > j]\! +\!  P[l_1 > \delta] +\sum_{j=\delta+1}^{K-1}\!\! P[l_1 > j] \stackrel{(a)}{\leq} \delta +  o\left( \frac{1}{\Lambda}\right) +\sum_{j=\delta+1}^{K-1} P[l_1 > j] \nonumber \\
&\stackrel{(b)}{\leq} \delta +  o\left( \frac{1}{\Lambda}\right) + (K\!-\!\delta\!-\!1)o\left( \frac{1}{\Lambda}\right) = \delta \left(\!1\!-\!o\left( \frac{1}{\Lambda}\right)\!\!\right)\! +\! o\left( \!\frac{K}{\Lambda}\!\right) = O\left( \delta \right) +o\left( \frac{K}{\Lambda}\right),
\end{align}
where in step (a), we use the fact that $P[l_1 > j]$ is at most $1$ for $j= \left[0, 1, \cdots,  \delta-1\right]$ and in step (b), we use the fact that if $P[l_1 > \delta] =o(1)$ then $P[l_1 > j]$ is at most $o(1)$  for $j= \left[\delta+1, \delta+2, \cdots K-1\right]$.
Combining~\eqref{eq:delta} and~\eqref{eq:ubl1nn}, we have
\begin{align}\label{eq:AL1nn}
\!\!E[l_1]\! =O\left( \frac{K}{\Lambda} + \log \log \Lambda \right),
\end{align}
when $h = \Theta\left(\frac{\log\left(\frac{ab-1}{a-1}\right)}{\log\left(\frac{ab-1}{ab-b}\right)}\right)$. We now proceed to simplify $h$. When the cache population intensities $\mathbf{p}$ follows the Zipf distribution, we know that  $\min(\mathbf{p})= p_{\Lambda}= \frac{\Lambda^{-\alpha}}{H_{\alpha}(\Lambda)}$ and $\max(\mathbf{p})=p_1 = \frac{1}{H_{\alpha}(\Lambda)}$. We get the scaling of $h$ as
\begin{align}
h &= \Theta\left(\frac{\log\left(\frac{\frac{H_{\alpha}(\Lambda)}{\Lambda^{1-\alpha}}\frac{\Lambda}{H_{\alpha}(\Lambda)}-1}{\frac{H_{\alpha}(\Lambda)}{\Lambda^{1-\alpha}}-1}\right)}{\log\left(\frac{\frac{H_{\alpha}(\Lambda)}{\Lambda^{1-\alpha}}\frac{\Lambda}{H_{\alpha}(\Lambda)}-1}{\frac{H_{\alpha}(\Lambda)}{\Lambda^{1-\alpha}}\frac{\Lambda}{H_{\alpha}(\Lambda)}-\frac{\Lambda}{H_{\alpha}(\Lambda)}}\right)}\right) =    \Theta \left(\frac{\log\left(\frac{\Lambda(1-\Lambda^{-\alpha})}{H_{\alpha}(\Lambda)-\Lambda^{1-\alpha}}\right)}{\log\left(\frac{(1-\Lambda^{-\alpha})H_{\alpha}(\Lambda)}{H_{\alpha}(\Lambda)-\Lambda^{1-\alpha}}\right)}\right)\nonumber\\
&=    \Theta \left(\frac{\log\left(\frac{\Lambda(\Lambda^{\alpha}-1)}{\Lambda^{\alpha}H_{\alpha}(\Lambda)-\Lambda}\right)}{\log\left(\frac{(\Lambda^{\alpha}-1)H_{\alpha}(\Lambda)}{\Lambda^{\alpha}H_{\alpha}(\Lambda)-\Lambda}\right)}\right) = \Theta \left(\frac{ \log\Lambda+ \log\left(\frac{\Lambda^{\alpha}-1}{\Lambda^{\alpha}H_{\alpha}(\Lambda)-\Lambda}\right)}{\log\left(\frac{(\Lambda^{\alpha}-1)H_{\alpha}(\Lambda)}{\Lambda^{\alpha}H_{\alpha}(\Lambda)-\Lambda}\right)}\right)  .
\end{align}
Let $X=\frac{\Lambda^{\alpha}-1}{\Lambda^{\alpha}H_{\alpha}(\Lambda)-\Lambda}$, from \eqref{eq:H}, we get
\begin{equation} \label{eq:X} 
X = \begin{cases}
    \Theta\left(1 \right) & \alpha > 1 \\
   \Theta\left(\frac{1}{\log \Lambda} \right)              & \alpha =1 \\
	\Theta\left(\Lambda^{\alpha-1} \right)  & \alpha <1.
\end{cases} 
\end{equation}
As for $\alpha>1$, $H_{\alpha}(\Lambda)= \Theta\left(1 \right)$, thus  $\Lambda^{\alpha}H_{\alpha}(\Lambda)-\Lambda = \Theta\left(\Lambda^{\alpha} \right)$ and we get $X=\Theta\left(\frac{\Lambda^{\alpha}}{\Lambda^{\alpha} }\right)$. When $\alpha=1$, we have $H_{\alpha}(\Lambda)= \Theta\left(\log \Lambda \right)$, thus $\Lambda H_{\alpha}(\Lambda)-\Lambda = \Theta\left(\Lambda  \log \Lambda \right)$, which gives $X=\Theta\left(\frac{\Lambda}{ \Lambda  \log \Lambda}\right)$. When $\alpha<1$, we have  $H_{\alpha}(\Lambda)= \Theta\left( \Lambda^{1-\alpha} \right)$, thus  $\Lambda^{\alpha}H_{\alpha}(\Lambda)-\Lambda = \Theta\left(\Lambda\right)$  and we get $X=\Theta\left(\frac{\Lambda^{\alpha}}{ \Lambda}\right)$.
Consequently, we have
\begin{align}
h &=  \Theta \left(\frac{ \log\Lambda+ \log X}{\log\left(X H_{\alpha}(\Lambda)\right)}\right) \stackrel{(a)}{=} \Theta \left( \log\Lambda\!+\! \log X \right) =\begin{cases}
    \Theta\left(\log\Lambda \right) & \!\!\!\!\alpha > 1 \\
   \Theta\left(\log\Lambda \!+\! \log\left(\frac{1}{\log \Lambda}\right)\!\! \right)    =\Theta\left(\log\Lambda \right)          &\!\!\!\! \alpha =1 \\
	\Theta\left(\log\Lambda\!+ \! \log\left(\Lambda^{\alpha-1} \right)\!\right) =\Theta\left(\log\Lambda \right) & \!\!\!\!\alpha <1,
\end{cases} 
\end{align}
where in step (a), we use the fact that $XH_{\alpha}(\Lambda) = \Theta\left(1 \right)$, which is obtain by combining \eqref{eq:H} and \eqref{eq:X}. Finally, combining~\eqref{eq:gapt} with~\eqref{eq:AL1nn}, allows us to directly conclude the proof of Theorem~\ref{th:gapdnn}.

\bibliographystyle{IEEEtran}
\bibliography{IEEEabrv,mainJ}

% Can use something like this to put references on a page
% by themselves when using endfloat and the captionsoff option.
\ifCLASSOPTIONcaptionsoff
  \newpage
\fi

% trigger a \newpage just before the given reference
% number - used to balance the columns on the last page
% adjust value as needed - may need to be readjusted if
% the document is modified later
%\IEEEtriggeratref{8}
% The "triggered" command can be changed if desired:
%\IEEEtriggercmd{\enlargethispage{-5in}}

% references section

% can use a bibliography generated by BibTeX as a .bbl file
% BibTeX documentation can be easily obtained at:
% http://www.ctan.org/tex-archive/biblio/bibtex/contrib/doc/
% The IEEEtran BibTeX style support page is at:
% http://www.michaelshell.org/tex/ieeetran/bibtex/
%\bibliographystyle{IEEEtranTCOM}
% argument is your BibTeX string definitions and bibliography database(s)
%\bibliography{IEEEabrv,../bib/paper}
%
% <OR> manually copy in the resultant .bbl file
% set second argument of \begin to the number of references
% (used to reserve space for the reference number labels box)
%
%\begin{thebibliography}{1}

%\bibitem{IEEEhowto:kopka}
%H.~Kopka and P.~W. Daly, \emph{A Guide to \LaTeX}, 3rd~ed.\hskip 1em plus
%  0.5em minus 0.4em\relax Harlow, England: Addison-Wesley, 1999.

%\end{thebibliography}

\end{document}